\newif\ifsubmit     %
\newif\ifblind      %
\newif\ifcompact    %
\newif\ifexabs      %
\newif\ifitcs       %

\newif\ifshowabs    %
\ifitcs
  \documentclass[a4paper,UKenglish,cleveref]{lipics-v2021}
  \exabstrue
  \showabstrue
\else
  \documentclass[letterpaper,11pt,pdfa]{article}
  \usepackage{fullpage}
  \ifexabs\else\showabstrue\fi
  
  \usepackage{iftex}
  \ifPDFTeX
    \usepackage[utf8]{inputenc}
    \usepackage[noTeX]{mmap}
    \usepackage[T1]{fontenc}
  \fi
  \ifLuaTeX
    \usepackage{luatex85}
    \usepackage[noTeX]{mmap}
  \fi

\fi

\usepackage{amsmath, amsfonts, amsthm, amssymb, color}
\ifitcs\else
  \usepackage[bookmarks]{hyperref}
  \usepackage[nameinlink]{cleveref}
  \ifexabs
    \newtheorem{theorem}{Theorem}
  \else
    \newtheorem{theorem}{Theorem}[section]
  \fi
  \newtheorem{definition}[theorem]{Definition}
  
  \newtheorem{lemma}[theorem]{Lemma}
  \newtheorem{corollary}[theorem]{Corollary}
  \newtheorem{proposition}[theorem]{Proposition}
  \newtheorem{claim}[theorem]{Claim}
  \newtheorem*{remark*}{Remark}
\fi
\newtheorem{construction}[theorem]{Construction}
\newtheorem{fact}[theorem]{Fact}

\newtheorem*{theorem*}{Theorem}
\newtheorem*{lemma*}{Lemma}

\usepackage{appendix}
\usepackage{algorithm}
\usepackage{algpseudocode}
\usepackage{braket}
\usepackage{comment}
\usepackage{url}
\usepackage{multicol}

\usepackage{pgfplots, tikz}
\pgfplotsset{compat=1.14}

\ifcompact
  \usepackage{enumitem}
  \setlist[description]{noitemsep, topsep=0pt}
  \setlist[enumerate]{noitemsep, topsep=0pt}
  \setlist[itemize]{noitemsep, topsep=0pt}
\fi

\ifitcs
  \bibliographystyle{plainurl}
\else
  \usepackage[style=alphabetic,minalphanames=3,maxalphanames=4,maxnames=99,backref=true]{biblatex}

  \addbibresource{bib.bib}
\fi

\usepackage{soul, xcolor, xparse}
\makeatletter
  \ExplSyntaxOn
    \cs_new:Npn \white_text:n #1
    {
      \fp_set:Nn \l_tmpa_fp {#1 * .01}
      \llap{\textcolor{white}{\the\SOUL@syllable}\hspace{\fp_to_decimal:N \l_tmpa_fp em}}
      \llap{\textcolor{white}{\the\SOUL@syllable}\hspace{-\fp_to_decimal:N \l_tmpa_fp em}}
    }
    \NewDocumentCommand{\whiten}{ m }
    {
      \int_step_function:nnnN {1}{1}{#1} \white_text:n
    }
  \ExplSyntaxOff
  
  \NewDocumentCommand{ \varul }{ D<>{5} O{0.2ex} O{0.1ex} +m } {%
    \begingroup
    \setul{#2}{#3}%
    \def\SOUL@uleverysyllable{%
      \setbox0=\hbox{\the\SOUL@syllable}%
      \ifdim\dp0>\z@
      \SOUL@ulunderline{\phantom{\the\SOUL@syllable}}%
      \whiten{#1}%
      \llap{%
        \the\SOUL@syllable
        \SOUL@setkern\SOUL@charkern
      }%
      \else
      \SOUL@ulunderline{%
        \the\SOUL@syllable
        \SOUL@setkern\SOUL@charkern
      }%
      \fi}%
    \ul{#4}%
    \endgroup
  }
\makeatother

\newcommand{\E}{\mathop{\mathbb{E}}}

\newcommand{\I}{\mathbb{I}}

\newcommand{\As}{\mathcal{A}}
\newcommand{\Bs}{\mathcal{B}}
\newcommand{\Cs}{\mathcal{C}}

\newcommand{\Ts}{\mathcal{T}}
\newcommand{\Ws}{\mathcal{W}}
\newcommand{\Zs}{\mathcal{Z}}

\newcommand{\Hs}{\mathcal{H}}

\newcommand{\sX}{\mathcal{X}}
\newcommand{\sY}{\mathcal{Y}}
\newcommand{\sR}{\mathcal{R}}

\newcommand{\ver}{{\sf Ver}}

\newcommand{\ans}{{\sf ans}}

\newcommand{\oneoftwo}{\texorpdfstring{$1$-of-$2$}{1-of-2} }
\newcommand{\twooftwo}{\texorpdfstring{$2$-of-$2$}{2-of-2} }
\newcommand{\oneofpoweroftwo}{\texorpdfstring{$1$-of-$2^k$}{1-of-2\^{}k} }
\newcommand{\keygen}{{\sf KeyGen}}
\newcommand{\obligate}{{\sf Obligate}}
\newcommand{\solve}{{\sf Solve}}

\newcommand{\sk}{{\sf sk}}

\newcommand{\pk}{{\sf pk}}

\newcommand{\prpv}{\mathsf{PRPV}}

\newcommand{\prpvrom}{\mathsf{ROPRPV}}

\newcommand{\poly}{{\sf poly}}
\newcommand{\negl}{{\sf negl}}

\DeclareMathOperator{\Tr}{Tr}

\newcommand{\supp}{\mathsf{SUPP}}

\newcommand{\PB}{\mathsf{PB}}
\newcommand{\PC}{\mathsf{PC}}

\ifsubmit
    \newcommand{\luowen}[1]{}
    \newcommand{\qipeng}[1]{}
    \newcommand{\jiahui}[1]{}
\else
    \newcommand{\luowen}[1]{{\color{magenta} Luowen: #1}}
    \newcommand{\qipeng}[1]{{\color{red} Qipeng: #1}}
    \newcommand{\jiahui}[1]{{\color{blue} Jiahui: #1}}
\fi

\newcommand{\td}{{\sf td}}

\title{Beating Classical Impossibility of Position Verification}
\newcommand{\acks}{
  The authors would like to thank Ran Canetti and Shih-Han Hung
  for their helpful discussions.
}
\newcommand{\jiahuifunding}{supported by the NSF and Scott Aaronson's Simons Investigator award.}
\newcommand{\qipengfunding}{supported by the Simons Institute for the Theory of Computing, through a Quantum Postdoctoral Fellowship.}
\newcommand{\luowenfunding}{supported by DARPA under Agreement No. HR00112020023.}

\ifitcs
\author{Jiahui Liu}{Department of Computer Science, University of Texas at Austin, USA}{jiahui@cs.utexas.edu}{}{\jiahuifunding}
\author{Qipeng Liu}{Simons Institute for the Theory of Computing, USA}{qipengliu0@gmail.com}{}{\qipengfunding}
\author{Luowen Qian}{Department of Computer Science, Boston University, USA}{luowenq@bu.edu}{}{\luowenfunding}

\authorrunning{J. Liu, Q. Liu, and L. Qian}

\Copyright{Jiahui Liu, Qipeng Liu, and Luowen Qian}

\ccsdesc[500]{Theory of computation~Cryptographic protocols}
\ccsdesc[300]{Security and privacy~Authorization}
\ccsdesc[300]{Security and privacy~Public key (asymmetric) techniques}
\ccsdesc[100]{Theory of computation~Quantum complexity theory}
\ccsdesc[100]{Theory of computation~Quantum query complexity}

\keywords{cryptographic protocol, position verification, quantum cryptography, proof of quantumness, non-locality}

\relatedversion{}
\relatedversiondetails{Full Version}{https://arxiv.org/abs/2109.07517}

\acknowledgements{\acks}
\nolinenumbers

\EventEditors{Mark Braverman}
\EventNoEds{1}
\EventLongTitle{13th Innovations in Theoretical Computer Science Conference (ITCS 2022)}
\EventShortTitle{ITCS 2022}
\EventAcronym{ITCS}
\EventYear{2022}
\EventDate{January 31--February 3, 2022}
\EventLocation{Berkeley, CA, USA}
\EventLogo{}
\SeriesVolume{215}
\ArticleNo{116}
\else\ifblind\else
\newcommand{\email}[1]{\href{mailto:#1}{\texttt{#1}}}
\author{
  Jiahui Liu\footnote{University of Texas at Austin. Email: \email{jiahui@cs.utexas.edu}}
  \and
  Qipeng Liu\footnote{Simons Institute for the Theory of Computing. Email: \email{qipengliu0@gmail.com}}
  \and
  Luowen Qian\footnote{Boston University. Email: \email{luowenq@bu.edu}}
}
\fi
\date{}
\fi

\begin{document}

\maketitle

\ifshowabs
\begin{abstract}
  Chandran et al.\,(SIAM J. Comput.~'14) formally introduced the cryptographic task of position verification, where they also showed that it cannot be achieved by classical protocols.
  In this work, we initiate the study of position verification protocols with \emph{classical} verifiers.
  We identify that proofs of quantumness (and thus computational assumptions) are necessary for such position verification protocols.
  For the other direction, we adapt the proof of quantumness protocol by Brakerski et al.\,(FOCS~'18) to instantiate such a position verification protocol.
  As a result, we achieve classically verifiable position verification assuming the quantum hardness of Learning with Errors.
  
  Along the way, we develop the notion of 1-of-2 non-local soundness for a natural non-local game for 1-of-2 puzzles, first introduced by Radian and Sattath (AFT~'19), which can be viewed as a computational unclonability property.
  We show that \oneoftwo non-local soundness follows from the standard 2-of-2 soundness (and therefore the adaptive hardcore bit property), which could be of independent interest.
\end{abstract}
\fi

\section{Introduction}
\ifitcs
  \let\subsection\section
\fi
Position verification is the central task for position-based cryptography~\cite{CGMO09PBC}, which aims to verify one's geographical location in a cryptographically secure way.
The main technique is distance bounding, which infers the location assuming no faster-than-light communications from special relativity by placing timing constraints on the protocol.

The work of Chandran et al.\,\cite{CGMO09PBC} first formalized the task of position verification.
They in addition showed that it is impossible to achieve via any classical protocol where all the parties are classical.
Specifically, a few colluding adversaries can always efficiently convince the verifiers of an incorrect position, even with the help of computational assumptions.
As a result, all known classical position verification protocols that are secure against multiple adversaries, make hardware assumptions on the adversaries~\cite{CGMO09PBC,brodykalai-tcc2017-pbc}.

However, it turns out the attack above does not extend when the parties exchange quantum information.
The attack requires the adversaries to store the messages from the verifiers and at the same time forward them to the other adversaries, which violates the no-cloning theorem when the messages are quantum states unknown to adversaries.
A long line of work~\cite{beigi2011simplified, KMS11qubit-routing,Tomamichel_2013, unruh2014qrom, buhrman2014position, bluhm2021positionbased,DS2021} explored this idea by constructing protocols with BB84 states (or other similar states \cite{BFSS13gardenhose,allerstorfer2021new,junge2021banachspaces}), and proving them to be unconditionally secure.
Intuitively, these protocols get around the impossibility as these BB84 states are information theoretically unclonable when the adversaries receive them.

\paragraph*{Downsides of Quantum Communications.}
There are a lot of drawbacks for using quantum communication, especially under the context of position verification.

First and foremost, transmitting quantum information with fault tolerance is much more challenging.
As position verification is only meaningful with free-space (wireless) transmission, any practical protocol must be subject to a high loss.
In fact, Qi and Siopsis~\cite{qi2015loss} have shown that many known protocols stop working (lose either completeness/correctness, or soundness/security) when the error rate is above some threshold.
Unlike quantum key distribution, the parties in position verification do not share an authenticated classical channel, and must follow strict timing constraints, so techniques there do not generically carry over.
Furthermore, prior to our work, there was no known construction of fully loss tolerant position verification protocols against entangled adversaries, meaning being tolerant to any loss bounded away from 1.

Another issue arises when we consider high dimensions (2D or higher), which is that the parties must also send the quantum messages in the desired direction with high accuracy, or they would incur an even higher loss in transmission.
In practice, this is usually mitigated via a tracking laser \cite{ursin2006free,schmitt2007experimental}, %
although not perfectly.
If the BB84 state is naively broadcasted, the adversaries could obtain one copy each and therefore completely break the protocol.

Finally, adding other properties to the protocol is more difficult and inefficient when the communication is quantum.
For example, one could desire to authenticate the messages sent by the verifiers in order to protect the prover from revealing his location to other untrusted verifiers.
Unfortunately, authenticating a quantum message is highly nontrivial~\cite{BarnumCGST02,alagic2018can}.

All of these issues can be trivially resolved if the communication is classical.

One approach to remove quantum communication is to have the verifiers and the prover pre-share entanglement and use teleportation to transmit quantum messages over a classical channel.
However, this generic approach consumes the entanglement and therefore is undesirable if they would like to run the protocol multiple times for a considerable time.
Furthermore, it would require the parties to keep the entanglement coherent before the protocol begins, which can be expensive.

\ifexabs
  \let\subsection\section
\fi

\subsection{Our Results}

In this work, we show how to construct position verification protocols with classical verifiers, showing that quantum communication is not necessary for position verification without hardware assumptions.
Our main result is the following.

\ifexabs\begin{theorem}\else\begin{theorem}[Restatement of \Cref{cor:cvpv}]\fi
  \label{thm:main}
  Assuming the quantum (polynomial) hardness of Learning with Errors (LWE), there exists a classically verifiable position verification (CVPV) protocol with almost perfect completeness and negligible soundness against polynomial-time adversaries without pre-shared entanglement.
\end{theorem}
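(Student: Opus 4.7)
The plan is to construct the CVPV protocol directly from the LWE-based 1-of-2 puzzle underlying the Brakerski et al.\,proof of quantumness, and to reduce its soundness in two steps: first from position verification to a certain non-local game on 1-of-2 puzzles, and then from that game's soundness to the standard 2-of-2 soundness (the adaptive hardcore bit property), which is already known from LWE. Concretely, in the 1D setting (higher-dimensional cases reducing to 1D via the standard geometric arguments of Chandran et al.), I would place two verifiers $V_0, V_1$ on opposite sides of the claimed position $P$, and coordinate timing so that $V_0$'s message---the puzzle key, consisting of a noisy trapdoor claw-free function description together with an image $y$---arrives at $P$ simultaneously with $V_1$'s message, a single basis bit $b \in \{0,1\}$. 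The honest prover measures the claw state in the standard basis if $b=0$ and in the Hadamard basis if $b=1$, then broadcasts the resulting classical answer to both verifiers; each verifier accepts iff the message arrives within the round-trip time budget and passes the 1-of-2 puzzle check.

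Completeness follows immediately from the near-perfect completeness of the 1-of-2 puzzle. For soundness, I would invoke the standard Chandran et al.\,argument to collapse any coalition of adversaries into exactly two adversaries $\mathcal{A}_0, \mathcal{A}_1$ situated between $P$ and the two verifiers; relativistic signalling bounds then allow each of them to exchange at most one message with the other before having to commit to its output. This realizes exactly the 1-of-2 non-local game promised in the abstract: $\mathcal{A}_0$ sees the puzzle key but not $b$, $\mathcal{A}_1$ sees $b$ but not the puzzle key, and after the single cross-round each must independently output a valid 1-of-2 answer. Thus CVPV soundness reduces to 1-of-2 non-local soundness of the underlying puzzle.

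The substantive technical step is the reduction from 1-of-2 non-local soundness to 2-of-2 soundness. From any non-local strategy winning with non-negligible advantage $\eps$, one would like to extract a simultaneous standard-basis and Hadamard-basis answer for the same $y$, directly contradicting the adaptive hardcore bit property. The natural approach is to guess $b$ upfront, run the adversaries on that branch to recover one of the two answers, and then carefully rewind $\mathcal{A}_1$ and the shared register to the complementary branch to obtain the other answer, stitching the two into a 2-of-2 violation.

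\textbf{Main obstacle.} The rewinding step is the crux. The two adversaries may share arbitrary, possibly entangled, quantum state, and the single message each exchanges, together with the final measurement producing the first answer, may irreversibly disturb the joint register---so one cannot naively re-execute the protocol with the opposite basis. The argument must purify the adversaries, bound the disturbance induced by extracting the first branch (e.g.\,via gentle measurement or Zhandry-style reprogramming-type tools), and re-extract in the other branch while respecting the one-shot bounded communication imposed by relativity. Handling this entanglement-aware rewinding coherently, so that both answers survive in a single joint transcript, is what I expect to dominate the technical difficulty of the proof.
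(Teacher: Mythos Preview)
Your proposal has several genuine gaps relative to the paper's argument.

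\textbf{Protocol design.} Having $\pk$ and $b$ arrive \emph{simultaneously} at the claimed position is explicitly broken in the paper: the adversary closest to $V_0$ can, upon receiving $\pk$, run the honest prover twice (once for $b=0$, once for $b=1$), obtain $(y_0,\ans_0)$ and $(y_1,\ans_1)$, forward both, and let the other side select the correct pair once $b$ is known. The paper's crucial fix is \emph{asymmetric timing}: $\pk$ arrives strictly before $b$, and the prover must broadcast the obligation $y$ \emph{before} learning $b$. Relatedly, $y$ is not part of $V_0$'s message---it is produced by the prover via $\obligate(\pk)$; the verifier cannot supply $y$ since the prover must create the claw superposition herself.

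\textbf{The non-local reduction.} Your proposed rewinding strategy is both unnecessary and the wrong direction. The paper's non-local game is not ``$\As_0$ sees $\pk$, $\As_1$ sees $b$''; rather, after $\As$ commits to $y$ and a bipartite state $\sigma_{BC}$, two spatially separated players $\Bs,\Cs$ each receive the \emph{same} challenge $b$ and must both answer correctly. Because $\Bs$ and $\Cs$ act on \emph{disjoint} registers, the reduction to $2$-of-$2$ soundness is a one-shot argument: feed $\Bs$ challenge $0$ and $\Cs$ challenge $1$, and the operator inequality $P\otimes\I + \I\otimes Q \preccurlyeq \I\otimes\I + P\otimes Q$ (equivalently, non-signaling plus union bound) gives success probability $\ge 2\tau-1$. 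No rewinding, no gentle measurement. Your anticipated ``main obstacle'' simply does not arise. The step you are missing instead is how to reduce the actual two-adversary position-verification attack to this clean non-local game: the paper does this by first compiling any no-entanglement adversary into a \emph{challenge-forwarding} one (possible because $\As_1$'s pre-protocol state is classical, so $\As_0$ can simulate $\As_1$ on both values of $b$), and then identifying the forwarding adversary with a non-local player.

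\textbf{Amplification.} The argument above yields soundness only $(s+1)/2 = 3/4$. Reaching negligible soundness requires a nontrivial parallel-repetition analysis (two stages: first a same-challenge repetition to drive $2$-of-$2$ soundness to negligible, then an independent-challenge repetition analyzed via computational-orthogonality arguments in the style of Alagic et al.). You do not mention this, and sequential repetition fails here because it lets adversaries build up entanglement across rounds.
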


Our construction of the CVPV protocol is inspired by the (classically verifiable) proof of quantumness protocol by Brakerski et al.\,\cite{brakerski2018cryptographic}, which is proven secure under the same LWE assumption.

We also proved two variations of the theorem to handle adversaries with entanglement, albeit either assuming a stronger assumption or proven in an ideal model.

\ifexabs\begin{theorem}\else\begin{theorem}[Restatement of \Cref{thm:bounded-entanglement}]\fi
  \label{thm:main-subexp}
  Assuming the quantum \varul{subexponential} hardness of LWE, there exists a CVPV protocol with almost perfect completeness and inverse-subexponential soundness against \varul{bounded-entanglement} subexponential-time adversaries.
\end{theorem}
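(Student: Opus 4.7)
The plan is to amplify the CVPV protocol from \Cref{thm:main} so that any bounded amount of pre-shared entanglement becomes insufficient to break it. The key tool is the standard ``teleportation-and-guess'' reduction: any two-party strategy that consumes $k$ EPR pairs can be simulated by an entanglement-free strategy that guesses the $2k$ bits of Pauli corrections each adversary would have obtained from measurements on their halves, at the cost of a $2^{-O(k)}$ factor in success probability. This kind of reduction works in the position verification setting because each adversary's Pauli guesses are local and independent of the verifiers' classical messages, so no extra round of communication is introduced and the spacetime diagram of the attack is preserved.

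Concretely, I would first instantiate the basic protocol from \Cref{thm:main} with security parameter $\lambda$ chosen so that, under the subexponential hardness of LWE, the soundness against entanglement-free subexponential-time adversaries is at most $2^{-\lambda^c}$ for some constant $c > 0$. This follows by inspecting the reduction behind \Cref{thm:main}, which loses only polynomial factors, and invoking subexponential LWE in place of polynomial LWE. Second, given an adversary pair sharing $k \le \poly(\lambda)$ EPR pairs (the ``bounded entanglement'' regime), the teleportation-and-guess reduction yields an entanglement-free attacker with success probability at most $2^{O(k)}$ times the original. Composing these two bounds shows that the bounded-entanglement soundness is at most $2^{O(k) - \lambda^c}$, and choosing $\lambda$ polynomially larger than $k$ gives inverse-subexponential soundness. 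Completeness is preserved verbatim from \Cref{thm:main} because the honest prover's behavior is unchanged.

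The main obstacle is formalizing the teleportation-and-guess reduction in a position-verification setting where both adversaries are spatially separated and bound to light-speed communication. One must check that the guessed Pauli corrections can be absorbed into each adversary's local actions without requiring any mid-protocol consistency check, and that the joint probability of ``both guesses correct'' factorizes as $2^{-2k}$ across the two adversaries—both of which follow from the standard analysis of teleportation-based attacks adapted to a protocol with classical verifier messages. A secondary technical point is that the \oneoftwo non-local soundness used inside \Cref{thm:main} must degrade gracefully under subexponential LWE; this holds because its proof reduces to the adaptive hardcore bit property with only polynomial security loss, so a $2^{-\lambda^{c'}}$-secure adaptive hardcore bit translates to comparable non-local soundness.
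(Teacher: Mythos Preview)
Your high-level plan matches the paper's: first obtain subexponential soundness against entanglement-free adversaries, then absorb $L$ qubits of pre-shared entanglement at a $2^{O(L)}$ cost in the bound, and finally set the security parameter large enough relative to $L$. Two points, however, are off.

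First, the entanglement-removal step is not ``teleportation-and-guess''; no teleportation is involved. The paper uses the standard trick (attributed there to \cite{aaronson2004limitations,Tomamichel_2013}) of replacing the adversaries' shared state by the product of maximally mixed states on the two sides. Since that product state is a uniform mixture of an orthonormal basis containing the original entangled state, linearity of the success probability immediately gives the $2^{-L}$ loss. Your version, where each adversary ``guesses the Pauli corrections from measurements on their halves,'' is at best a special-case rephrasing (the maximally mixed state on $2k$ qubits equals the uniform mixture over Pauli-twirled EPR pairs) and at worst misleading: there are no measurements and no corrections. The paper's formulation also dissolves the obstacle you flag about spacetime compatibility and ``both guesses correct'': the maximally mixed state is prepared locally before the protocol begins, so nothing about the adversaries' timing or communication changes.

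Second, your claim that the reductions behind \Cref{thm:main} ``lose only polynomial factors'' skips real work. The paper singles out three \emph{non-black-box} steps whose running time depends on the adversary's advantage and must be rechecked in the subexponential regime: the parallel repetition underlying strong \oneoftwo puzzles, the parallel repetition establishing \oneofpoweroftwo non-local soundness (whose reduction reruns the adversary roughly $1/p^2$ times for advantage $p$), and the compiler from $\mathcal R_0$ to $\mathcal R_F$ (which enumerates all $2^k$ challenges). Your ``secondary technical point'' covers only the easy, tight reduction from non-local soundness to the adaptive hardcore bit; the substantive verification is in these three places.
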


\ifexabs\begin{theorem}\else\begin{theorem}[Restatement of \Cref{cor:cvpv-qrom}]\fi
  \label{thm:main-qrom}
  Assuming the quantum hardness of LWE, there exists a CVPV protocol with almost perfect completeness and negligible soundness against \varul{unbounded-entanglement} polynomial-time adversaries \varul{in the quantum random oracle model}.
\end{theorem}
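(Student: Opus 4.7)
The plan is to take the base CVPV protocol underlying \Cref{thm:main} and augment it with calls to a quantum random oracle $H$ in order to cope with unbounded pre-shared entanglement among the adversaries. Concretely, I would modify the \cite{brakerski2018cryptographic}-style 1-of-2 puzzle so that the prover's final response to each verifier is $H(r, y)$, where $y$ is the raw puzzle answer and $r$ is a fresh salt bound to the challenges broadcast by the verifiers in that round. Completeness is inherited from the base protocol almost verbatim: the honest prover at the correct location still receives all verifier messages on time, solves the puzzle, and queries $H$ locally.

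For soundness against entangled polynomial-time adversaries, the plan is to use Zhandry's compressed-oracle technique to convert any successful attack in the QROM into a non-local strategy that solves the underlying 1-of-2 puzzle in the plain model, contradicting the 1-of-2 non-local soundness already established in the paper. From any coalition of adversaries making polynomially many $H$-queries that produces an accepting transcript with non-negligible probability, I would argue that with non-negligible probability the compressed-oracle database at the end of the protocol contains a record at an input of the form $(r, y)$ with $y$ a valid puzzle answer. Reading off this record gives an extractor that outputs a valid $y$, and by the distance-bounding timing constraints the extraction must be realizable from the local information each adversary possesses, thereby yielding the desired non-local attack.

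The main obstacle will be setting up this extractor in a way that genuinely respects the non-local structure. With unbounded pre-shared entanglement, the coalition can jointly simulate arbitrary LOCC, yet each query to $H$ is still performed physically at one adversary's location, and the compressed-oracle register decomposes accordingly between the adversaries. The delicate step is arguing that the \emph{winning} query---the one that commits to a valid $y$---can be attributed to a specific location so that the extractor's output becomes the output of a local player in the 1-of-2 non-local game, rather than a global object visible only to an outside simulator. Once this geometric separation is carried out, the reduction to 1-of-2 non-local soundness proceeds essentially as in the proof of \Cref{thm:main}, with an additional polynomial loss coming from the standard compressed-oracle recording analysis, which is why the construction still achieves negligible soundness under only polynomial quantum hardness of LWE.
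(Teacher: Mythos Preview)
Your proposal diverges from the paper in both construction and proof, and the construction you sketch has a genuine gap. The paper, following Unruh, applies the random oracle on the \emph{challenge} side: $V_0$ additionally broadcasts a random $x_0$, $V_1$ broadcasts a random $x_1$ in place of $b$, and the challenge used is $b = H(x_0 \oplus x_1)$. The point is that neither adversary can evaluate $H$ on the correct input until messages from both sides have crossed, so the one-way-to-hiding (reprogramming) lemma lets one replace $b$ by a fresh random string revealed only at that moment; this collapses any $\mathcal R_P$ adversary (with arbitrary polynomial entanglement) into a challenge-forwarding adversary in $\mathcal R_F$ for the plain-model protocol $\prpv_\Zs^k$, whose soundness is already proven to be $2^{-k}$. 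Your idea instead hashes the \emph{answer}, sending $H(r,\ans)$. But this does nothing against the teleportation attack the paper describes in \Cref{sec:attack}: the adversaries teleport $\rho$, measure in the basis dictated by the (still cleartext) challenge $b$, recover the correct $\ans$ by one round of classical correction, and then each locally evaluates $H(r,\ans)$. The crucial feature of the Unruh transformation---that the prover's post-$y$ operation is no longer Clifford because it requires querying $H$ to learn the measurement basis---is absent in your variant, so entanglement breaks it exactly as it breaks the base protocol.

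Separately, your proof strategy has two mismatches even if the construction were fixed. First, compressed-oracle extraction of a ``valid $y$'' is not the right target: $y$ is the obligation that adversaries send in the clear, and the soundness bottleneck is producing a correct $\ans$ on both sides for the \emph{same} random challenge, i.e.\ winning the non-local game. Second, reducing to bare \oneoftwo non-local soundness yields only a constant bound ($3/4$ or $1/2$); to get negligible soundness you must reduce to the parallel-repeated $\mathcal R_F$ soundness of $\prpv_\Zs^k$, which is what the paper's reprogramming-based reduction does.
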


The quantum random oracle model (QROM), introduced by Boneh et al.\,\cite{AC:BDFLSZ11}, captures generic quantum attacks against cryptographic hash functions, modeled by random functions.

To the best of our knowledge, our protocols matches the state of the art in quantum position verification in terms of the entanglement bound.
All previous protocols in the standard model (as opposed to the QROM) are not known to be secure even against an arbitrary polynomial amount of entanglement, and any protocol can be broken with an exponential amount of entanglement \cite{buhrman2014position,beigi2011simplified}.
Furthermore, the only position verification protocol that is secure against any polynomial amount of entanglement that we are aware of is also proven in the QROM \cite{unruh2014qrom}.

\ifexabs We \else In \Cref{sec:attack}, we \fi
further show that there are also efficient attacks against \Cref{thm:main,thm:main-subexp} if the adversaries are allowed to pre-share more entanglement than what the entanglement bound allows.

Finally, for the other direction, we show that our assumption is somewhat minimal.
The classical impossibility easily extends if the prover is classical.
On a high level, if the adversaries can run in exponential time, the prover can always be simulated classically as her inputs and outputs are all classical; therefore, we would run into the classical impossibility.

Formally, we strengthen this intuition \ifexabs\else to \Cref{thm:necessity-poq} \fi to show that proofs of quantumness are necessary for any construction of classically verifiable position verification, even if we relax the requirement for position verification to be sound only against classical adversaries.
Since the prover response in a proof of quantumness could be simulated by a $\mathsf{PostBQP} = \mathsf{PP}$ machine\footnote{The idea is that one can capture simulation of the quantum prover as a sampling variant of the $\mathsf{PostBQP}$ problem as follows: simulate the quantum prover's next classical message given the current classical transcript of the protocol.}~\cite{postbqp}, as a consequence, it is impossible to construct unconditionally-sound proofs of quantumness (and thus classically verifiable position verification) without proving $\mathsf{PP} \not\subseteq \mathsf{BPP}$, even if we only consider position verification protocols with classical communications and quantum verifiers.

\subsection{Technical Overview}

\ifshowabs
  \paragraph*{Quantum Position Verification with One Quantum Message.}
We first recall the position verification protocol investigated by many works \cite{beigi2011simplified, KMS11qubit-routing,Tomamichel_2013, buhrman2014position,bluhm2021positionbased}.
The protocol has the property that only one message is quantum, and the only quantum requirement on the verifiers is to generate BB84 states.

Consider that in one-dimensional spacetime, there are two verifiers $V_0, V_1$, wishing to verify that the prover $P$ is located at a specific position somewhere between them.
At the beginning of the protocol, $V_0$ sends a BB84 qubit $H^\theta\ket x$ (where $\theta, x$ are uniformly random bits), and $V_1$ sends a classical bit $\theta$, so that they arrive at the prover's claimed position at the same time.
$P$ is supposed to measure the qubit in basis $\theta$ and return the measurement result to both verifiers.
At the end, the verifiers check that the prover's measurement result is $x$, and that they have received the responses ``in time''.

The intuition of the security proof is the following.
Consider an adversary $A_0$ located in between $V_0$ and $P$, and another adversary $A_1$ in between $P$ and $V_1$.
When $A_0$ receives the qubit, he does not yet know the basis $\theta$, and therefore he cannot immediately measure it.
However, if they decide to wait until $\theta$ is received, then either $A_0$ or $A_1$ will not have enough time to know the measurement result and send it to the verifiers.
Therefore, it seems if they want to answer correctly in time on both ends, $A_0$ must somehow produce two copies of the BB84 state, which is impossible as the state is information-theoretically unclonable without knowing the basis $\theta$.

\paragraph*{Computationally Unclonable States from Trapdoor Claw-free Functions (TCFs).}
As we have discussed, CVPVs require a proof of quantumness.
Therefore, a natural starting point is to open up the construction of the LWE proof of quantumness protocol by Brakerski et al.\,\cite{brakerski2018cryptographic}, and look for a similar unclonability property.

The proof of quantumness protocol could be described under the \oneoftwo puzzle framework by Radian and Sattath~\cite{radian2020semi}.
In particular, both trapdoor claw-free functions (TCFs) and noisy trapdoor claw-free functions (NTCFs) can be used to instantiate \oneoftwo puzzles.
However, we only have constructions of NTCFs from quantum LWE.
For this overview, we will work with the more intuitive notion of TCFs and use the \oneoftwo puzzle framework in the main technical body.

A TCF family is a family of efficiently computable $2$-to-$1$ functions $f_\pk: \{0, 1\}^n \to \mathcal Y$.
``Trapdoor'' means that with the trapdoor $\td$, one can efficiently invert the corresponding $f_\pk$ and get the two pre-images $x_0, x_1$.
``Claw-free'' means that without the trapdoor, it is hard for any polynomial-time quantum algorithms to find a collision for a random $f_\pk$.

The proof of quantumness protocol works as follows.
The verifier starts by sampling $\pk$ along with the trapdoor $\td$, and sends $\pk$ to the prover.
The prover prepares a uniform superposition over $\{0, 1\}^n$, computes $f_\pk$ on the superposition coherently, measures the image register to obtain $y \in \mathcal Y$, and sends $y$ as his response.
As $f_\pk$ is $2$-to-$1$, the residual state of the prover is
\begin{equation}
  \label{eq:introclawsuperposition}
  \frac{1}{\sqrt{2}}(\ket{x_0} + \ket{x_1}),
\end{equation}
where $x_0, x_1$ are the two pre-images of $y$.
The protocol concludes with the verifier sending a uniformly random challenge $b$ to the prover, and the prover measuring \eqref{eq:introclawsuperposition} either in the standard basis or the Hadamard basis.

If the prover is asked to measure in the standard basis, the measurement outcome will be a uniformly random $x$ which is either $x_0$ or $x_1$.
If the prover is asked to measure in the Hadamard basis, the measurement outcome will be a uniformly random $d$ such that $d \cdot (x_0 \oplus x_1) = 0$ over $\mathbb F_2^n$.
Since the verifier has the trapdoor, he can obtain $x_0, x_1$ by inverting $y$, and thus check whether the measurement outcome satisfies the requirements above.

As for security, we need an additional property called \emph{adaptive hardcore bit}, which says that any efficient quantum algorithm given $\pk$, cannot produce $y, x, d$ that passes the two checks simultaneously with probability significantly higher than \textonehalf, i.e. $f_\pk(x) = y$, $d \cdot (x_0 \oplus x_1) = 0$, and $d \ne 0$.
To see that this implies the proof of quantumness property, assume a classical prover can pass this proof of quantumness protocol with probability $1$, then we can always extract both $x$ and $d$ with probability $1$ by simply rewinding the classical prover.

In fact, the adaptive hardcore bit property also implies that the state \eqref{eq:introclawsuperposition} must be computationally unclonable.
This is simply because if somehow we can prepare two copies of this state, then measuring two copies in two bases will yield both $x$ and $d$.
This computational unclonability property has also been observed and used in prior works, in particular in the context of semi-quantum money \cite{radian2020semi} and two-tier quantum lightning \cite{kitagawa2020secure}.
Later we will see that the security proof for our CVPV protocol requires a stronger variant of computational unclonability than the ones considered in these works.

\paragraph*{Constructing CVPV.}
Given the setup, a natural idea for achieving CVPV is that instead of sending an unclonable state prepared by $V_0$, perhaps we can ask the prover (and hopefully also the adversaries) to prepare a quantum state that she herself cannot clone, similar to that in the proof of quantumness protocol.
Specifically, consider the CVPV protocol, where $V_0$ sends $\pk$ and $V_1$ sends $b$ with the same timing as before. In the end, they check that whether they have received the same prover response in time and whether the prover's measurement outcome passes the proof of quantumness check.
On the other hand, the prover in CVPV will run the prover in the proof of quantumness protocol and output $y, \ans$, where $y$ is the measured image of the superposition evaluation, and $\ans$ is the measurement outcome in the basis specified by $b$.

We now show that this construction already seems to get around the classical impossibility.
The attack from the impossibility is following: $A_0, A_1$ forwards the classical messages $\pk, b$ to each other, and at the end, they run the honest prover and send the output.
However, in this protocol, since the measurement performed by the prover has some nontrivial min-entropy, the verifiers will get two different responses with constant probability!
It is also not clear whether this computation could be simulated (almost) deterministically with shared randomness.
Certainly, if it could be simulated classically, then it would be breaking the proof of quantumness property.

Unfortunately, it turns out that a different attack completely breaks this CVPV protocol.
When $A_0$ receives $\pk$, he can simply runs the honest prover twice --- once on $b = 0$ and once on $b = 1$.
He obtains $y_0, \ans_0$ for $b = 0$ and $y_1, \ans_1$ for $b = 1$, and sends both of them to $A_1$.
On the other hand, $A_1$ simply forwards $b$.
Later, when both of them receive the message from each other, they pick $y_b, \ans_b$ as their responses to the verifiers.
It is not hard to show that this strategy simulates the prover perfectly.

We observe that in order for this attack to work, it is crucial that the adversaries can pick $y$ \emph{after} seeing $b$, which is impossible in the proof of quantumness protocol.
Therefore, to prevent this attack, our idea is to ``nudge'' the prover to the left, so that she can commit to $y$ \emph{before} seeing $b$.
More formally, the protocol is the same as before but the timing constraints are changed.
In particular, the verifiers make sure that the message $\pk$ reaches the prover a bit earlier than $b$, and at the end, they check that she should output $y$ as soon as she receives $\pk$ (and before she receives $b$).
\ifitcs\else
We refer the readers to \Cref{fig:prpv-protocol} for an illustration of the timing.
\fi

\paragraph*{Proving Soundness of CVPV.}
It turns out that with this simple fix, this CVPV can be proven secure.
\ifitcs
According
\else
In \Cref{claim:adversarial-behavior}, we show that according
\fi
to the timing constraints, we can again, without loss of generality, assume that there are two adversaries $A_0, A_1$, and that $A_0$ upon receiving $\pk$ needs to output $y$ to the verifiers \emph{immediately}, and after they receive a private communication from each other, they are supposed to produce two $\ans$'s to pass the verification.

We first consider a restricted set of adversarial strategies, called \emph{challenge-forwarding} adversaries, where the only restriction is that $A_1$ upon receiving $b$ simply forwards $b$ and does nothing else.
We claim that the success probability for challenge-forwarding adversaries cannot be significantly higher than \textthreequarters.

We now show that this suffices to show that the success probability for \emph{any} adversarial strategy without pre-shared entanglement cannot be significantly higher than \textthreequarters.
The proof is that assume $(A_0, A_1)$ breaks the CVPV with probability noticeably higher than \textthreequarters, we construct a challenge-forwarding adversary $(B_0, B_1)$ with the same success probability, which leads to a contradiction.
The construction of the reduction is similar to the attack for the first CVPV construction.
$B_0$, upon receiving $\pk$, runs $A_0$ on $\pk$ (and commits $y$) and simultaneously $A_1$ twice --- once on $b = 0$ and once on $b = 1$ --- and sends the residual state to the other party.
We can run $A_1$ twice as they do not pre-share entanglement.
Later, when both of them learn $b$, they can pick the correct execution to finish simulating $(A_0, A_1)$. 

\paragraph*{A (Computational) Non-Local Game for TCFs.}
What is left to be shown is that even challenge-forwarding adversaries cannot break the CVPV protocol.
\ifitcs
It can be shown
\else
In \Cref{thm:prpv-sound-forwarding}, we show
\fi
that for our protocol, what the adversaries can do is more or less equivalent to the following \emph{computational} (two-player) non-local game:
\begin{itemize}
  \item The game begins by announcing a TCF public key $\pk$.
  \item Two (computationally bounded) players $B$ and $C$ upon receiving $\pk$, agree on a classical ``commitment'' $y$. They then prepare a possibly entangled bipartite state $\rho_{BC}$ between themselves, after which they are separated.
  \item A \emph{single} challenge $b$ is then sampled uniformly at random and announced to $B$ and $C$ separately.
  \item $B$ and $C$ produce two answers $\ans_B$ and $\ans_C$ using $\rho_B$ or $\rho_C$ separately, and win the non-local game if both answers pass the proof of quantumness check with respect to $\pk, y, b$.
\end{itemize}

Another way to view this game is that it is the same as the TCF proof of quantumness protocol, except that after halfway, we ask the prover to run two copies of himself, i.e. split himself into two executions and finish each execution separately with the same verifier randomness.
If the prover's internal state was clonable, then the best prover's success probability should never decrease after the transformation.
Therefore, this can also be viewed as a computational unclonability property.

To prove the non-local soundness, assume that a strategy wins this non-local game significantly higher than \textthreequarters. We construct an algorithm breaking the adaptive hardcore bit property, by asking $B$ challenge $0$ (produce $x$) and $C$ challenge $1$ (output $d$).
On a high level, this reduction works because in a non-local game, the measurements made by $B$ and $C$ are on disjoint registers, and thus must be compatible no matter which challenges are given to them.

We now provide an informal proof that this reduction works for any non-signaling players.
A strategy is non-signaling if the marginal distribution for one player is independent of what the other player does, and the no signaling principle says that any bipartite measurement of a quantum state is non-signaling.
Let $W_0, W_1$ be the events where $B$ or $C$ produces a correct answer respectively in the non-local game.
We can rewrite the success probability of the non-local game to be $p := \Pr[W_0 \land W_1]$.
Then
$$p = \frac12 \Pr[W_0 \land W_1 | b = 0] + \frac12 \Pr[W_0 \land W_1 | b = 1] \le \frac12 \Pr[W_0 | b = 0] + \frac12 \Pr[W_1 | b = 1].$$
On the other hand, let $W'_0, W'_1$ be the events where $B$ or $C$ produces a correct answer respectively in the reduction, where $B$ receives challenge $0$ and $C$ receives challenge $1$.
Then the success probability of the reduction is $p' := \Pr[W'_0 \land W'_1]$.
$p' \le \frac12 + \negl$ since the reduction is efficient, and by union bound,
$$p' = 1 - \Pr[\lnot W'_0 \lor \lnot W'_1] \ge 1 - \Pr[\lnot W'_0] - \Pr[\lnot W'_1] = \Pr[W'_0] + \Pr[W'_1] - 1.$$
Notice that $\Pr[W'_0] = \Pr[W_0 | b = 0]$ by construction and the no signaling principle, and similarly $\Pr[W'_1] = \Pr[W_1 | b = 1]$.
The conclusion $p \le \frac34 + \negl$ follows by rearranging the terms.

The computational unclonability requirements in prior works \cite{radian2020semi,kitagawa2020secure} cannot be cast as a non-local game, since there the two players need to answer different challenges instead of the same one. Therefore, by adaptive hardcore bit property, the game is hard even if the two players can communicate.
We think that this computational non-local hardness that we achieve could potentially have applications to other quantum cryptography relying on the no-cloning principle.

\paragraph*{Soundness Amplification via Parallel Repetition.}
So far, we have shown how to construct a CVPV with soundness \textthreequarters\ against adversaries without pre-shared entanglement.

To achieve negligible soundness, one natural attempt is to do sequential repetition.
However, sequential repetitions are undesirable in our setting as (1) sequential repetitions will undesirably increase the number of rounds/time/complexity of the final protocol; (2) more crucially, adversaries can take advantages of a multiple round protocol and use quantum communication to share some entanglement even if they have no pre-shared entanglement at the beginning of the protocol.
\ifitcs
Note that our protocol can be attacked if the adversaries have preshared entanglement.
The attack is simply that one adversary, upon receiving $\pk$, could prepare the state \eqref{eq:introclawsuperposition} honestly, and then carry out the teleportation attack against the BB84 protocol.
Therefore, entangled adversaries can simulate the honest prover perfectly.
Combining with this attack,
\else
Combining with the attack that we give in \Cref{sec:attack},
\fi
one can show that with sequential repetitions, the soundness does not decrease at all!

Therefore, we turn to consider parallel repetitions, which traditionally have been more technically challenging than sequential repetitions under numerous different contexts.
One difficulty is that our CVPV protocol can be viewed as a four-message private-coin interactive argument with additional structures, and therefore known transformations for interactive arguments do not apply.
Another difficulty is that a common technique for proving parallel repetition for private-coin arguments is to perform rejection sampling, which in our case of proving parallel repetition of CVPV, would lead to either communication or pre-shared entanglement between the adversaries, neither of which is allowed for this setting.

The key idea is that instead of proving a parallel repetition theorem for the CVPV protocol, we first establish a parallel repetition theorem for the TCF non-local game, where at least the two players are allowed to share entanglement.
We then construct a CVPV protocol with a stronger variant of the non-local game.
However, we still need to be careful about the reduction since in the non-local game, two players cannot communicate after $y$ is sent.

We first consider the parallel repetition where the non-local game is repeated $k$ times in parallel, except that we use a single challenge $b$ for all the executions.
We show that the non-local soundness can be decreased to \textonehalf\ if $k$ is large enough using known results \cite{radian2020semi} (which in turn uses a classical parallel repetition theorem \cite{canetti2005hardness}). The \textonehalf\ soundness here is tight as the adversaries can always guess $b$ correctly with probability \textonehalf.

We next consider a second parallel repetition where the strengthened game from above is repeated $k'$ times in parallel, and this time we use fresh random challenges for all the executions.
As the strengthened game has soundness \textonehalf, this implies that the two quantum predicates (standard basis test and Hadamard basis test) satisfy computational orthogonality, similar to the one that has appeared under a different application of parallel repetitions for TCFs, which is quantum delegation \cite{AlagicCGH20,chia2019classical}.
Therefore, using the ideas from those works, we show that the non-local soundness decreases exponentially in $k'$.

Finally, using the same reduction from non-local games to CVPV as before, we show that we can achieve the CVPV protocol with negligible soundness.

\paragraph*{Handling Entangled Adversaries.}
We have proven that our protocol is negligibly sound against adversaries without pre-shared entanglement.
It turns out that our protocol is similar enough to the previous 
quantum position verification that a lot of techniques there can be naturally ported here as well.

Using a standard trick \cite{aaronson2004limitations,Tomamichel_2013}, we can show that the protocol can be made secure against any adversaries with an a-priori-chosen polynomial amount of pre-shared entanglement, albeit requiring subexponential hardness of quantum LWE, as the reduction for parallel repetition needs to run in subexponential time.

On the other hand, our protocol can also be attacked with $n$ EPR pairs where $n$ is the length of the output of $f_\pk$.
The attack is very similar to the attack for the quantum position verification protocol we give in the beginning.
The adversaries simply prepare the state \eqref{eq:introclawsuperposition} honestly (which we recall is the only non-timing-wise change to the protocol) and perform the attack against the base protocol.
In particular, they teleport the state using EPR pairs to perform measurements in a homomorphic way, whose outcome later they can recover with one round of communication.
Attacking the protocol after parallel repetition can be done by running the attack above in parallel.

Finally, we modify the CVPV protocol into the QROM to prove that it is sound against unbounded entanglement, where the modification is very similar to how Unruh~\cite{unruh2014qrom} modifies the base position verification protocol into the QROM.
On a high level, the attack for the previous protocol works because the honest prover's operation after committing $y$ is a Clifford.
With Unruh's transformation, the operation now involves evaluating a random function, which %
cannot be efficiently computed by a Clifford circuit.
The security proof in the QROM from Unruh's work also carries over, except here we reduce the adversarial strategy with entanglement against the QROM CVPV, to the TCF non-local game after parallel repetition (in the standard model), instead of a monogamy-of-entanglement game~\cite{Tomamichel_2013}.

\subsection{Future Directions}

\paragraph*{High Dimensional Position Verification.}
We conjecture that the following construction, inspired by the position verification protocol of Unruh~\cite{unruh2014qrom}, could be secure in higher dimensions under the quantum random oracle model (QROM) using the ideas from Unruh:
\begin{enumerate}
  \item $V_0$ broadcasts $\pk$.
  \item $V_0, ..., V_n$ sample uniformly random strings $x_0, ..., x_n$ respectively and broadcast them.
  The timing is done so that these strings arrives  at the prover a bit later than $\pk$.
  \item At the end, the $(n + 1)$ verifiers check that the prover answers arrive in time, and passes the check with respect to challenge $H(x_0 \oplus \cdots \oplus x_n)$, where $H$ is the random oracle.
\end{enumerate}

\paragraph*{Time-Entanglement Trade-Offs: Upper and Lower Bounds.}
Classically verifiable position verification protocols have the curious feature of being completely broken against classical adversaries with unbounded computational power, as they can simulate the honest quantum execution.
On the other hand, our protocol can be efficiently broken using a linear amount of entanglement but secure against adversaries with bounded entanglement.
This suggests that there may be some time-entanglement trade-offs for the optimal attack.
Clearly, the trivial trade-off to attack the CVPV after parallel repetition is that the adversaries can use their entanglement to break some copies, and brute-force the rest of the copies.
It is interesting whether there is a significantly better time-entanglement trade-offs that could be achieved for attacking this protocol or classically verifiable position verification protocols in general.

For the other direction, we also wonder if there is a tighter lower bound on the entanglement than what we prove.

\paragraph*{Decreasing Quantum Memory for the Prover.}
We have shown in
\ifitcs\Cref{thm:main-subexp}\else\Cref{thm:bounded-entanglement}\fi\ 
that assuming subexponential hardness of quantum LWE, we can construct classically verifiable position verification protocols that is secure against any a-priori bounded entanglement.
Unfortunately, in our protocols, even the honest prover needs to keep his quantum memory (which is of length $\tilde O(\lambda)$ when entanglement bound is 0) coherent for some time, and the size of the quantum memory is even larger than the entanglement bound.
Indeed, we have also shown that if the adversaries share as much entanglement as the size of the honest prover's quantum memory, then the protocol can be efficiently broken.
However, the adversaries might need to keep the entanglement coherent long before the protocol begins, and this might be much longer than the duration needed by the honest prover.

Nevertheless, it would be interesting if we can avoid this drawback.
We therefore ask whether it is possible to come up with provably secure CVPV protocols where the honest prover's quantum memory is smaller than the entanglement bound in the standard model, or maybe even without any quantum memory at all.

\paragraph*{Weakening the Assumption.}
We show how to achieve CVPV assuming quantum hardness of LWE, which is a cryptographic assumption.
Can we relax this assumption further?
One possible assumption is the existence of a classically verifiable quantum sampling task satisfying some requirements.

\else
In this section, we summarize the ideas used for the main theorems and omit many details in interest for space.
We refer the interested readers to the full version for the missing details.

On a high level, our protocol can be thought of as asking the prover to prepare a computationally unclonable state by herself --- instead of sending her an information theoretically unclonable BB84 state from a verifier --- and having the prover measure the state.
Such a computationally unclonable state can be constructed via adapting the proof of quantumness protocol by Brakerski et al.\,\cite{brakerski2018cryptographic}
However, we need to carefully design the timing constraints of the protocol in an unconventional way in order to force the adversaries into preparing this unclonable state as well.

Towards establishing the security, we show that the soundness of our construction can be reduced to the hardness of a natural non-local game for noisy trapdoor claw-free functions (NTCFs).
We then prove that such a non-local game is indeed hard assuming the adaptive hardcore bit property, in fact against any non-signaling players, which includes non-communicating quantum players.

In the end, we get a classically verifiable position verification with constant soundness.
In order to reduce the soundness further, we consider some parallel repetition of the protocol, and prove that it is negligibly sound.
In particular, we suitably adapt the techniques used for analyzing the parallel repetition of NTCFs under other contexts like semi-quantum money \cite{radian2020semi} and quantum delegation \cite{AlagicCGH20,chia2019classical}.

Finally, since our protocol is somewhat similar to the quantum position verification protocols with BB84 \cite{beigi2011simplified, KMS11qubit-routing,Tomamichel_2013, buhrman2014position,bluhm2021positionbased}, known techniques there can be suitably ported to our construction to achieve additional properties we need for \Cref{thm:main-subexp,thm:main-qrom}.
\fi

\ifexabs\else
\ifblind\else
\section*{Acknowledgements}
\acks
The authors would also like to thank the anonymous reviewers from ITCS 2022 and QIP 2022 for their kind and thoughtful comments.

Jiahui Liu is \jiahuifunding
Qipeng Liu is \qipengfunding
Luowen Qian is \luowenfunding
\fi

\section{Notations}

We refer the readers to \cite{nielsen2002quantum} on basic quantum information and computation concepts.
As our work only works with the Learning with Error (LWE) assumption indirectly, we refer the readers to \cite[Section 2.3]{brakerski2021cryptographic} for further information on the assumption.

We call a function $f: \mathbb N^+ \to \mathbb R^{\ge 0}$ negligible ($f(n) = \negl(n)$) if for any $g(n) = \poly(n) := n^{O(1)}$, $f(n) \le 1/g(n)$ for all sufficiently large $n$.
Throughout this paper, we use $\lambda$ to denote the security parameter unless specified otherwise.

Let $\Hs$ denote a finite-dimensional Hilbert space.
We use Dirac notation to express vectors which represent pure states, for example $\ket{\psi}$.
We let $\mathcal S(\Hs)$ denote the set of all density operators, which are positive semidefinite operators on $\Hs$ with trace 1, and represent mixed states.

Quantum registers simply mean a collection of qubits in a given state.
Consider a mixed state $\rho_{AB}$ where the qubits are partitioned into sets $A$ and $B$.
We denote $\rho_A$ to refer to the qubits in $A$ in state $\rho_{AB}$.

The quantum random oracle model (QROM) \cite{AC:BDFLSZ11}, is the model where a single function $f: \mathcal X_\lambda \to \{0, 1\}^\lambda$ is sampled uniformly at random.
All parties get oracle access to the unitary $\mathcal O$ such that $\mathcal O \ket x \ket y = \ket x \ket{y \oplus f(x)}$ for all $x \in \mathcal X_\lambda, y \in \{0, 1\}^\lambda$.

\section{\oneoftwo Puzzles and Non-Local Soundness}

\subsection{1-of-2 Puzzles}

\begin{definition}[{\oneoftwo Puzzles\,\cite[Definition 2.1]{radian2020semi}}]
    A \oneoftwo puzzle $\Zs$ is a tuple of four efficient algorithms $(\keygen, \obligate, \solve, \ver)$, where:
    \begin{description}
        \item The key generation algorithm $\keygen$, is a classical algorithm that on security parameter $1^\lambda$, outputs a public key $\pk$ and a secret key $\sk$: $(\pk, \sk) \gets \keygen(1^\lambda)$. 
        \item The obligation algorithm $\obligate$, is a quantum algorithm that on input a public key $\pk$, outputs a classical string $y$ called the obligation and a quantum state $\rho$: $(y, \rho) \gets \obligate(\pk)$. 
        \item The \oneoftwo solver $\solve$, is a quantum algorithm that on input a public key $\pk$, an obligation $y$, a quantum state $\rho$ and a challenge bit $b$, outputs a classical answer $\ans$: $\ans \gets \solve(\pk, y, \rho, b)$. 
        \item The verification algorithm $\ver$, is a classical deterministic algorithm that on input a secret key $\sk$, an obligation $y$, a challenge bit $b$ and an answer $\ans$, it outputs $0$ or $1$: $\ver(\sk, y, b, \ans) \in \{0, 1\}$. 
    \end{description}
    Furthermore, it satisfies the following completeness and \twooftwo soundness.

\textbf{Completeness}\footnote{Our completeness slightly differs from the original definition in the sense that the negligible term is dropped from the definition, since unlike soundness, there is no additional quantifier on the adversary. This change is made also to signify the imperfect completeness.}:  Let $c$ be some function $c: \mathbb{N} \to \mathbb R$. We say that the \oneoftwo puzzle $\Zs$ has completeness $c$ if \begin{align*}
    \Pr_{b \gets \{0,1\}}\left[ \ver(\sk, y, b, \ans) = 1 : \begin{array}{cc}(\pk, \sk) \gets \keygen(1^\lambda) \\ (y, \rho) \gets \obligate(\pk) \\ \ans \gets \solve(\pk, y, \rho, b) \end{array}\right] \geq c(\lambda).
\end{align*}

\textbf{\twooftwo Soundness}\footnote{We use a slightly different notion of \twooftwo soundness instead of the \twooftwo hardness in the original work. In particular, the original definition of having \twooftwo hardness $(1 - h)$ is equivalent to having \twooftwo soundness $h$.}: Let $s: \mathbb{N} \to [0, 1]$ be a function. We say that the \oneoftwo puzzle $\Zs$ has \twooftwo soundness $s$ if for any QPT \twooftwo solver $\Ts$, there exists a negligible function $\negl(\lambda)$ such that
\begin{align*}
    \Pr\left[  {\sf SOLVE2}_{\Ts, \Zs}(1^\lambda) = 1 \right] \leq s(\lambda) + \negl(\lambda),
\end{align*}
where the \twooftwo solving game ${\sf SOLVE2}_{\Ts, \Zs}(1^\lambda)$ is defined as the following:
\begin{enumerate}
    \item The challenger runs $(\pk, \sk) \gets \keygen(1^\lambda)$. 
    \item The \twooftwo solver $\Ts$ receives public key $\pk$ and outputs a triple of classical messages $(y, \ans_0, \ans_1)$.
    \item The game outputs 1 if and only if (note that $\ver$ is a classical deterministic function)
        \begin{align*}
            \ver(\sk, y, 0, \ans_0) = 1 \,\wedge\, \ver(\sk, y, 1, \ans_1) = 1. 
        \end{align*}
\end{enumerate}
We say $\Zs$ is a $(c, s)$-\oneoftwo puzzle if it has completeness $c$ and \twooftwo soundness $s$. 
\end{definition}

In the same work \cite{radian2020semi}, they also show that NTCFs (\Cref{def:trapdoorclawfree}) implies a \oneoftwo puzzle.
We further elaborate the connection between NTCFs and \oneoftwo puzzles in \Cref{sec:NTCF_prelim}.

\begin{theorem}[{\cite[Theorem 2.2]{radian2020semi} and \Cref{thm:lwe-ntcf}}]
\label{thm:NTCFis1of2}
    An NTCF implies a $(1 - \negl, \frac{1}{2})$-\oneoftwo puzzle. Therefore, $(1 - \negl, \frac{1}{2})$-\oneoftwo puzzles exist assuming quantum hardness of LWE. 
\end{theorem}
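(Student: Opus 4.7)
The plan is to instantiate the \oneoftwo puzzle algorithms from the NTCF family, as sketched informally for TCFs in the technical overview, and then establish completeness and \twooftwo soundness separately. Concretely, I would set $\keygen$ to run the NTCF key generator to produce $(\pk,\td)$ and store $\sk := \td$. The $\obligate(\pk)$ algorithm prepares a uniform superposition $\frac{1}{\sqrt{2}}\sum_{b\in\{0,1\}} \ket{b}\frac{1}{\sqrt{|\sX|}}\sum_{x\in\sX}\ket{x}$, coherently evaluates the NTCF into an image register to obtain (approximately) $\frac{1}{\sqrt{2|\sX|}}\sum_{b,x}\ket{b}\ket{x}\sqrt{f_{\pk,b}(x)}$ (where $f_{\pk,b}(x)$ is a distribution), then measures the image register to obtain the obligation $y$. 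The $\solve$ algorithm on challenge bit $0$ measures the remaining $(b,x)$ register in the standard basis and outputs $\ans = (b,x)$; on challenge $1$, it applies Hadamards to the $(b,x)$ register and measures to output $\ans = (c,d)$. The verifier $\ver$ uses $\td$ to invert $y$ into the two preimages $x_0,x_1$ (one for each branch), and either checks $f_{\pk,b}(x)\approx y$ in the standard-basis case, or checks $(c,d)\ne 0$ and $d\cdot(x_0\oplus x_1) = c$ in the Hadamard case.

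For the completeness claim, I would appeal directly to the structural properties of NTCFs (in particular the negligible $L_1$ distance between the ideal claw-superposition and the state actually produced after measuring the image register, together with correctness of inversion with the trapdoor). In both the $b=0$ and $b=1$ branches, a straightforward calculation shows the honest acceptance probability is $1-\negl(\lambda)$, so averaging over $b$ still yields completeness $c = 1 - \negl$.

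For \twooftwo soundness, I would reduce to the adaptive hardcore bit property of the NTCF. Given any QPT \twooftwo solver $\Ts$ that wins ${\sf SOLVE2}_{\Ts,\Zs}$ with probability $\frac12 + \eps$, the reduction simply runs $\Ts(\pk)$ to obtain $(y,\ans_0,\ans_1)$, parses $\ans_0 = (b,x)$ and $\ans_1 = (c,d)$, and outputs the tuple $(b,x,d,y)$ (or $(b,x,d\oplus (c,0,\ldots,0))$ suitably normalized, depending on the exact hardcore-bit formulation). By construction, $\ver(\sk,y,0,\ans_0)=1$ and $\ver(\sk,y,1,\ans_1)=1$ simultaneously means $f_{\pk,b}(x)\approx y$, $d\ne 0$, and $d\cdot(x_0\oplus x_1)=c$, which is exactly the witness forbidden by the adaptive hardcore bit property up to at most $\frac12 + \negl$. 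Hence $\eps = \negl$, giving \twooftwo soundness $s = \frac12$. The second part of the theorem then follows immediately from \Cref{thm:lwe-ntcf}, which provides an NTCF from quantum hardness of LWE.

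The main subtlety, and the only place where care is genuinely needed, is the \emph{noise} in the NTCF: the image register does not take a discrete value but a distribution, so the measured $y$ does not uniquely determine a preimage and the ``residual claw superposition'' only matches the ideal state up to negligible trace distance. I would therefore use the standard NTCF formalism (supports $\supp(f_{\pk,b}(x))$ and the approximate claw condition $f_{\pk,0}(x_0)\approx f_{\pk,1}(x_1)$) to absorb this noise into the negligible slack in both the completeness bound and the reduction, taking care that the verifier's equality check against $y$ is replaced by membership in the support, and that the adaptive hardcore bit statement we invoke is the noisy version tailored to NTCFs. Apart from bookkeeping the noise, everything else is a direct translation of the informal TCF argument from the overview.
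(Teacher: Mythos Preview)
Your proposal is correct and essentially matches the construction the paper records in its appendix (\Cref{construct:ntcf21o2}); the paper itself does not give an independent proof of this theorem but cites \cite[Theorem 2.2]{radian2020semi} for both the completeness calculation and the adaptive-hardcore-bit reduction you outline. The only cosmetic differences are that the paper's construction applies the injection $\mathcal{J}$ already inside $\obligate$ and verifies the $b=0$ branch via the public ${\rm CHK}_{\mathcal{F}}$ procedure rather than the trapdoor, which is what later yields the $0$-challenge-public-verifiability property used in \Cref{lem:strong1o2puzzle}.
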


\subsection{1-of-2 Puzzle as a Non-Local Game}
\label{sec:one_two_puzzle_nonlocal}

We now define a non-local game for \oneoftwo puzzles, and show the connection between the success probability of the non-local game and the \twooftwo soundness of the underlying \oneoftwo puzzle.

\begin{definition}[Non-Local Games of \oneoftwo Puzzles] \label{def:non-local_game}
    Let $\Zs$ be a $(c, s)$-\oneoftwo puzzle. The non-local solving game $\text{\sf NON-LOCAL-SOLVE}_{\Ws, \Zs}(1^\lambda)$ for any non-local player $\Ws = (\As, \Bs, \Cs)$, where $\As, \Bs, \Cs$ are three quantum algorithms, is defined as follows:
    \begin{itemize}
        \item The challenger runs $(\pk, \sk) \gets \keygen(1^\lambda)$. 
        \item The algorithm $\As$ receives public key $\pk$ and outputs a classical obligation $y$ together with a quantum state $\sigma_{BC} \in \mathcal S(\mathcal{H}_B \otimes \mathcal{H}_C)$: $(y, \sigma_{BC}) \gets \As(\pk)$.
        It commits $y$ to the challenger. 
        
        $\Bs$ receives $\sigma_B$ and $\Cs$ receives $\sigma_C$. 
        \item The challenger samples a challenge bit $b \gets \{0, 1\}$ and sends $b$ to both $\Bs$ and $\Cs$. 
        \item $\Bs$ and $\Cs$ perform some local computations and then output $\ans_{\Bs} \gets \Bs(\sigma_B, b)$ and $\ans_{\Cs} \gets \Cs(\sigma_C, b)$ respectively. 
        \item The game outputs $1$ if and only if both $\Bs, \Cs$ answer correctly, i.e.
        \begin{align*}
            \ver(\sk, y, b, \ans_{\Bs}) = 1 \,\wedge\, \ver(\sk, y, b, \ans_{\Cs}) = 1. 
        \end{align*}
    \end{itemize}
\end{definition}

\begin{definition}[\oneoftwo Non-Local Soundness] \label{def:non-local-soundness}
Let $\tau: \mathbb{N} \to \mathbb R$ be an arbitrary function. We say that the \oneoftwo puzzle $\Zs$ has \oneoftwo non-local soundness $\tau$ if for any QPT non-local player $\Ws$, there exists a negligible function $\negl(\lambda)$ such that
\begin{align*}
    \Pr\left[  \text{\sf NON-LOCAL-SOLVE}_{\Ws, \Zs}(1^\lambda) = 1 \right] \leq \tau(\lambda) + \negl(\lambda).
\end{align*}
\end{definition}

We now establish the connection between \oneoftwo non-local soundness and \twooftwo soundness.
\begin{theorem}\label{thm:2of2vsnon-local}
    Let $\Zs$ be a \oneoftwo puzzle with \twooftwo soundness $s$.
    $\Zs$ has \oneoftwo non-local soundness $\tau = (s+1)/2$. 
\end{theorem}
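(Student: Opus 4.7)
The plan is to prove the contrapositive: given a QPT non-local player $\Ws = (\As,\Bs,\Cs)$ that wins the non-local game with probability $p$, I construct a QPT $2$-of-$2$ solver $\Ts$ whose winning probability is at least $2p - 1$. If $p > (s+1)/2 + \negl$, this contradicts the $2$-of-$2$ soundness bound $s$.

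The reduction is natural: on input $\pk$, $\Ts$ runs $\As(\pk)$ to obtain $(y, \sigma_{BC})$ and commits $y$. Since $\Bs$ and $\Cs$ act on disjoint registers $\Hs_B$ and $\Hs_C$ of $\sigma_{BC}$, $\Ts$ can run $\Bs(\sigma_B, 0)$ to produce $\ans_0$ and $\Cs(\sigma_C, 1)$ to produce $\ans_1$, and output the triple $(y, \ans_0, \ans_1)$. This is a valid QPT $2$-of-$2$ solver.

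For the analysis, let $W_b^B$ denote the event that $\Bs$'s answer passes verification when given challenge $b$, and similarly $W_b^C$ for $\Cs$. In the non-local game, the common challenge $b$ is uniform, so
\begin{equation*}
p \;=\; \tfrac{1}{2}\Pr[W_0^B \wedge W_0^C \mid b=0] + \tfrac{1}{2}\Pr[W_1^B \wedge W_1^C \mid b=1] \;\le\; \tfrac{1}{2}\Pr[W_0^B\mid b=0] + \tfrac{1}{2}\Pr[W_1^C\mid b=1].
\end{equation*}
Because $\Bs$ and $\Cs$ act on disjoint registers, the no-signaling principle tells us that the marginal distribution of $\Bs$'s output given challenge $0$ is the same whether $\Cs$ is invoked on challenge $0$, challenge $1$, or not at all; in particular, $\Pr[W_0^B \mid b=0]$ equals the probability that $\ans_0$ passes in the reduction, and similarly for $\Pr[W_1^C \mid b=1]$ and $\ans_1$. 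A union bound on the complementary events then gives
\begin{equation*}
\Pr\bigl[\ver(\sk,y,0,\ans_0)=1 \wedge \ver(\sk,y,1,\ans_1)=1\bigr] \;\ge\; \Pr[W_0^B\mid b=0] + \Pr[W_1^C\mid b=1] - 1 \;\ge\; 2p - 1.
\end{equation*}
Combining with the $2$-of-$2$ soundness of $\Zs$ applied to $\Ts$ yields $2p - 1 \le s + \negl(\lambda)$, i.e.\ $p \le (s+1)/2 + \negl(\lambda)$, which is exactly $\tau$-non-local soundness.

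The only subtle step is the invocation of no-signaling to identify marginals across the two games; this is immediate here because $\sigma_B$ and $\sigma_C$ live on disjoint tensor factors and $\Bs, \Cs$ act only on their own sides, so tracing out the opposite register gives the same reduced state regardless of whether or how the other party measures. No parallel-repetition or rewinding machinery is needed, and the reduction preserves efficiency, so the negligible slack in the $2$-of-$2$ soundness transfers directly to the non-local bound.
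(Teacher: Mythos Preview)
Your proposal is correct and follows essentially the same approach as the paper: the same reduction (run $\As$, then feed $\Bs$ challenge $0$ and $\Cs$ challenge $1$) and the same analysis showing the resulting $2$-of-$2$ solver succeeds with probability at least $2p-1$. The paper's formal proof phrases the inequality in projector notation via $(\I - P)\otimes(\I - Q)\succcurlyeq 0$ rather than events, but its technical overview gives exactly your no-signaling/union-bound argument as the informal version.
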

\begin{proof}%
    Let $\Ws = (\As, \Bs, \Cs)$ be any QPT non-local player for $\textsf{NON-LOCAL-SOLVE}_{\Ws, \Zs}(1^\lambda)$ that achieves success probability $\tau = \tau(\lambda)$.
    Let $(\pk, \sk) \gets \keygen(1^\lambda)$ and $(y, \sigma_{BC}) \gets \As(\pk)$.
    Let $\PB_{\sk, y, b}, \PC_{\sk, y, b}$ be the projection acting on register $B, C$ respectively corresponding to the predicate $\ver(\sk, y, b, \ans) = 1$ where $\ans$ is either the output of $\Bs$ or $\Cs$.
    Using this notation, we can rewrite the success probability:
    \begin{align*}
        \tau = \Pr\left[  \text{\sf NON-LOCAL-SOLVE}_{\Ws, \Zs}(1^\lambda) = 1 \right] %
        =   \mathop{\E}_{\substack{\keygen, \As, b \gets \{0, 1\} } } \Tr \left[  \left({\PB}_{\sk, y, b} \otimes {\PC}_{\sk, y, b} \right) \sigma_{BC} \right]. 
    \end{align*}
    Here the subscripts $\keygen$ and $\As$ stands for the randomness of sampling $(\pk, \sk) \gets \keygen(1^\lambda)$ and the measurement randomness from $(y, \sigma_{BC}) \gets \As(\pk)$. 
    
    Expanding the expectation on $b$ and using the linearity of expectation and the trace operator, we get
    \begin{align*}
     2\tau = \mathop{\E}_{\keygen, \As}\Tr\left[  \left({\PB}_{\sk, y, 0} \otimes {\PC}_{\sk, y, 0} +  {\PB}_{\sk, y, 1} \otimes {\PC}_{\sk, y, 1} \right) \sigma_{BC} \right].
    \end{align*}
    Since $\PB_{\sk, y, b}, \PC_{\sk, y, b} \preccurlyeq \I$ for any $\sk, y, b$, we have:
    \begin{align*}
     2\tau &\le \mathop{\E}_{\keygen, \As}\Tr\left[  \left({\PB}_{\sk, y, 0} \otimes \I + \I \otimes {\PC}_{\sk, y, 1} \right) \sigma_{BC} \right]  \\ 
     &\le \mathop{\E}_{\keygen, \As}\Tr\left[\left(\I \otimes \I + {\PB}_{\sk, y, 0} \otimes {\PC}_{\sk, y, 1} \right) \sigma_{BC} \right]  \\
     &= 1 + \mathop{\E}_{\keygen, \As}\Tr\left[\left({\PB}_{\sk, y, 0} \otimes {\PC}_{\sk, y, 1} \right) \sigma_{BC} \right],
    \end{align*}
    where the second inequality is simply due to the fact that for any $P, Q \preccurlyeq \I$, $0 \preccurlyeq (\I - P) \otimes (\I - Q) = \I \otimes \I - P \otimes \I - \I \otimes Q + P \otimes Q$.
    Therefore,
\begin{align*}
     \E_{\keygen, \As}\Tr\left[\left(\PB_{\sk, y, 0} \otimes \PC_{\sk, y, 1} \right) \sigma_{BC} \right]  \geq  2\tau - 1.
\end{align*}

Now we construct a \oneoftwo puzzle solver $\Ts$, whose success probability is exactly the left hand side of the inequality above:
 \begin{enumerate}
     \item $\Ts$ upon receiving $\pk$, runs $\As(\pk)$ to produce $(y, \sigma_{BC})$. 
     \item It runs $\Bs$ and $\Cs$ on $\rho$ with different challenge bits $0, 1$ respectively, and outputs $\ans_0 = \ans_{\Bs}, \ans_1 = \ans_{\Cs}$. 
     \item It outputs $(y, \ans_0, \ans_1)$.
 \end{enumerate}
$\Ts$ solves \twooftwo puzzle if and only if both $\ans_0, \ans_1$ pass the verification, which in turn is at least $2\tau - 1$ as argued above.

On the other hand, since $\Ws$ is efficient, so is $\Ts$.
Therefore, the success probability of $\Ts$ is at most $s + \negl$.
Thus, $2\tau - 1 \le s + \negl$.
We conclude the proof by simply rearranging the terms.
\end{proof}

Combined with \Cref{thm:NTCFis1of2}, we get the following corollary: 
\begin{corollary}
\label{cor:1of2nonlocal}
    Assuming the quantum hardness of LWE, there exists a \oneoftwo puzzle that has completeness $1 - \negl$ and \oneoftwo non-local soundness \textthreequarters.
\end{corollary}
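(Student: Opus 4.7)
The plan is to combine the two main results established earlier in the section in a direct way. First, I would invoke \Cref{thm:NTCFis1of2}, which (under the quantum hardness of LWE) gives the existence of a \oneoftwo puzzle $\Zs$ with completeness $1 - \negl$ and \twooftwo soundness $1/2$. This takes care of the cryptographic assumption and simultaneously fixes concrete parameters $(c, s) = (1 - \negl, 1/2)$.

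Next, I would feed this puzzle into \Cref{thm:2of2vsnon-local}, which converts \twooftwo soundness into \oneoftwo non-local soundness via the formula $\tau = (s + 1)/2$. Plugging in $s = 1/2$ gives $\tau = 3/4$, so the same puzzle $\Zs$ has \oneoftwo non-local soundness $3/4$ as desired. Completeness is unaffected by this transformation since \Cref{thm:2of2vsnon-local} only restricts the soundness side, so we retain $c = 1 - \negl$.

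I do not expect any real obstacle here — the statement is a corollary precisely because both pieces have been proved. The only minor care is to ensure the quantifiers match: \Cref{thm:2of2vsnon-local} quantifies over QPT non-local players and allows an additive negligible slack, which is exactly what \Cref{def:non-local-soundness} requires, so combining the two yields the claimed bound $3/4$ up to $\negl$. Thus the entire proof can be presented in essentially one or two sentences, chaining the two cited results.
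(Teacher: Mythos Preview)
Your proposal is correct and matches the paper's own approach exactly: the paper simply states that the corollary follows by combining \Cref{thm:NTCFis1of2} with \Cref{thm:2of2vsnon-local}, which is precisely the chaining you describe.
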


\section{Towards Position Verification}
In this section, we formally introduce the model and the definition of position verification.
Starting with a \oneoftwo puzzle with completeness $c$ and \oneoftwo non-local soundness $\tau$, we then show how to achieve a position-robust position verification with completeness $c$ and soundness $\tau$ against adversaries without entanglement.
Combining with the \oneoftwo puzzle from LWE, this gives a non-trivial position verification with almost perfect completeness and constant soundness.
We defer further decreasing soundness and handling entanglement to the next section.

\subsection{The Vanilla Model}

We restrict our attention to position verification in one dimension.
We consider the same model as the Vanilla Model (the standard model) from \cite{CGMO09PBC}, but augment it with quantum capabilities for the prover and the adversaries, as we only work with classically verifiable protocols.
We have three types of parties: prover, verifier, and adversary.
\begin{itemize}
    \item Space and time are continuous.
    \item The clocks of all parties are synchronized.
    \item Before the protocol begins, all parties are given as input the position of all verifiers, the claimed position of the prover, and a security parameter $\lambda$.
    \item The prover and adversaries have quantum computation capabilities, whereas the verifiers are entirely classical.
    \item The verifiers share a private trusted classical communication channel.
    \item All prover-verifier communications are classical broadcast messages. However, adversaries can send directional messages to any specific verifier that expects to receive a broadcast instead.
    \item Adversaries can also use a private (quantum) communication channel, so that the verifiers will not detect any malicious activity.
    \item All computations are done instantaneously, but all messages in all channels travel at speed 1 (the speed of light).
    
\end{itemize}

We first recall the usual completeness and soundness requirements of position verification protocols.
Since physical space satisfies translational symmetry, without loss of generality, we assume 
the claimed position of the prover is fixed a priori (say to be 1), instead of being an input to the parties.

\begin{definition}
  Let $c: \mathbb{N}^+ \to \mathbb R$.
  We call a position verification protocol to have \emph{completeness} $c$, if the prover, located at 1, can convince the verifiers with probability at least $c(\lambda)$ for any security parameter $\lambda$.
\end{definition}
\begin{definition}
  An \emph{adversarial strategy} is specified by a list of pairs $(p_i, A_i)$, where $p_i$ is the location of adversary $i$ and $A_i$ is the interactive (quantum) Turing machine it runs.
  A family of adversarial strategies is a list of adversarial strategies indexed by the security parameter $\lambda$.
\end{definition}

In this work, we focus on the setting where the families can be efficiently uniformly generated, i.e. there exists a deterministic polynomial-time Turing machine $M$ such that a useful description of the adversarial strategy for $\lambda$ can be efficiently generated by $M$ on input $1^\lambda$.
We abuse the notation to omit ``family'' whenever the context is clear.

\begin{definition}
  Let $s: \mathbb{N}^+ \to \mathbb R$ and $\mathcal S$ be any set of adversarial strategies.
  We call a position verification protocol to have \emph{soundness} $s$ against $\mathcal S$, if for any family of adversarial strategies $S \in \mathcal S$, using strategy $S$ can convince the verifiers with probability at most $s(\lambda) + \negl(\lambda)$ for some negligible function $\negl(\cdot)$.
  If $s = 0$, we call the protocol to have negligible soundness against $\mathcal S$.
\end{definition}

In the literature, we usually consider one round protocols where the prover receives two messages, performs one computation, and sends two responses.
In this case, it is easy to see that for soundness, instead of taking $\mathcal S$ to be the largest possible set of adversarial strategies, which is all strategies that could occupy the entire space outside of the claimed position, it is equivalent to only consider strategies with two adversaries, one on each side.

\subsection{Position Robustness}

We now introduce the position-robust version of these requirements.
Again since the spacetime we consider here is unitless, we without loss of generality assume the prover claims that it is somewhere in $(1, 2)$.

\begin{definition}
  Let $c: \mathbb{N}^+ \to \mathbb R$.
  We call a position verification protocol to have \emph{position-robust completeness} $c$, if the prover, located at \emph{anywhere} in $(1, 2)$, can convince the verifiers with probability at least $c(\lambda)$ for all $\lambda$.
\end{definition}

This notion is natural from the practical point of view.
Since the position measurement device always has some errors, a non-robust position verification protocol can never have any practical value.
Another reason we consider this notion is that we do not know how to make our protocol non-robust --- later we will see that in our construction, neither the prover nor the adversaries can occupy point 2.

Naturally, we should modify the set of adversarial strategies that we should consider for soundness under the position-robust setting.
\begin{itemize}
    \item The most general set of strategies, denoted by $\mathcal R$, is the set of strategies with the only restriction that all positions lie in $(-\infty, 1) \cup (2, +\infty)$.
    We allow the adversaries to do quantum setup before the protocol begins, including setting up entanglement between them.
    
      We do not consider adversaries at 1 or 2; that is, we do not allow the prover nor the adversaries to be at point 1 or 2.
      While it might be interesting to try to extend either completeness or soundness to close this small gap, these two single points have measure 0 and thus we consider it to be practically irrelevant --- indeed everything at the end will be subject to the precision of the devices being used, and the location gap caused by the device errors will greatly exceed these two points of failure.
    \item The set of polynomially bounded strategies (with pre-shared entanglement), denoted by $\mathcal R_P$, is the subset of $\mathcal R$ with further restrictions that there exists some polynomial $p(\cdot)$, such that the running time of the Turing machine that generates the strategy as well as that of every adversary is bounded by $p(\lambda)$.
      
      With polynomial hardness assumptions, $\mathcal R_P$ is the largest set of strategies that we will consider for the soundness of a \emph{classically verifiable} protocol, as an unbounded strategy from $\mathcal R$ can always convince the verifiers by simulating the quantum prover classically, and then using the attack strategy from \Cref{thm:necessity-poq}.
    \item The set of bounded-entanglement strategies, denoted by $\mathcal R_L$ for some function $L: \mathbb N^+ \to \mathbb N$, is the subset of $\mathcal R_P$ with further restrictions that the total \emph{quantum} communication between the adversaries \emph{before} the protocol begins, is bounded by $L(\lambda)$, which is an upper bound on the entanglement (measured via von Neumann entanglement entropy) that the adversaries share before the protocol begins.
\end{itemize}

Observe that for 1D, it suffices to consider at most two verifiers --- one on the left of the prover, the other on the right of the prover.
Indeed, if there are more than one verifiers on one side of the prover, the verifier that is closest to the prover can simulate all the interactions for the other verifiers.

Let the verifier on the left be $V_0$, and the one on the right be $V_1$.
We remark that for position-robust position verification, in general $V_0$ could be anywhere in $(-\infty, 1]$ and $V_1$ could be anywhere in $[2, +\infty)$.
However, since for all intents and purposes, sending a message to the right at $(x, t)$ is equivalent to sending it at $(x - \delta, t - \delta)$ (and vice versa), we can without loss of generality assume that $V_0$ and $V_1$ are at 0 and 3.

\subsection{The Base Protocol}

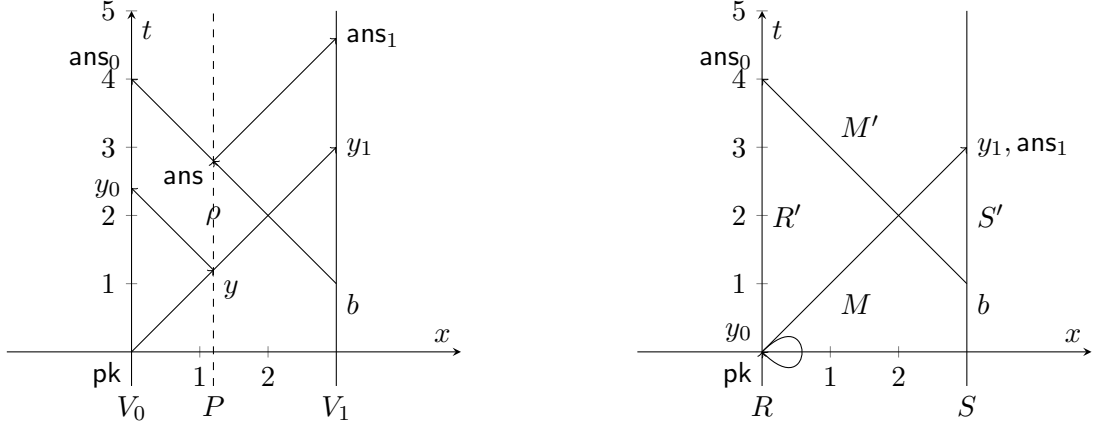
\begin{figure}
  \centering
  \begin{multicols}{2}
    \newcommand{\Px}{1.2}
\begin{tikzpicture}
\begin{axis}[
width=3in,
axis lines=middle,
axis equal,
xmin=0,xmax=3,
ymin=-0.5,ymax=5,
domain=0:3,
xlabel=$x$,ylabel=$t$,
xtick={0, 1, 2},
ytick={0, 1, 2, 3, 4, 5},
clip=false,
]
\draw [->]
  (0, 0) edge (\Px, \Px)
  (\Px, \Px) edge (0, 2 * \Px)
  (\Px, \Px) edge (3, 3)
  (3, 1) edge (\Px, 4 - \Px)
  (\Px, 4 - \Px) edge (0, 4)
  (\Px, 4 - \Px) edge (3, 7 - 2 * \Px)
;
\draw (3, -0.5) -- (3, 5);
\draw [dashed] (\Px, -0.5) -- (\Px, 5);
\node [below left] at (0, 0) {$\pk$};
\node [below right] at (\Px, \Px) {$y$};
\node [left] at (0, 2 * \Px) {$y_0$};
\node [right] at (3, 3) {$y_1$};
\node [below right] at (3, 1) {$b$};
\node [below left] at (\Px, 4 - \Px) {$\ans$};
\node [above left] at (0, 4) {$\ans_0$};
\node [right] at (3, 7 - 2 * \Px) {$\ans_1$};
\node at (\Px, 2) {$\rho$};
\node [below] at (0, -0.5) {$V_0$};
\node [below] at (\Px, -0.5) {$P$};
\node [below] at (3, -0.5) {$V_1$};
\end{axis}
\end{tikzpicture}
\let\Px\undefined
    \begin{tikzpicture}
\begin{axis}[
width=3in,
axis lines=middle,
axis equal,
xmin=0,xmax=3,
ymin=-0.5,ymax=5,
domain=0:3,
xlabel=$x$,ylabel=$t$,
xtick={0, 1, 2},
ytick={0, 1, 2, 3, 4, 5},
clip=false,
]
\draw [->]
  (0, 0) edge (3, 3)
  (3, 1) edge (0, 4)
  (0, 0) edge [out=45, in=-45, distance=1cm] (0, 0)
;
\draw (3, -0.5) -- (3, 5);
\node [below left] at (0, 0) {$\pk$};
\node [above left] at (0, 0) {$y_0$};
\node [right] at (3, 3) {$y_1, \ans_1$};
\node [below right] at (3, 1) {$b$};
\node [above left] at (0, 4) {$\ans_0$};
\node [below] at (0, -0.5) {$R$};
\node [right] at (0, 2) {$R'$};
\node [below] at (3, -0.5) {$S$};
\node [right] at (3, 2) {$S'$};
\node [below right] at (1, 1) {$M$};
\node [above right] at (1, 3) {$M'$};
\end{axis}
\end{tikzpicture}
  \end{multicols}
  \caption{
    Spacetime diagrams for $\prpv$ (left) and the possible adversarial behaviors shown in \Cref{claim:adversarial-behavior} (right).
    $\rho$ indicates the time interval for the prover to keep his quantum memory, and $R, R', S, S', M, M'$ indicate the quantum registers in the adversarial behavior.
    The small loop indicates that $A_0$ immediately outputs $y_0$ upon receiving $\pk$.
  }
  \label{fig:prpv-protocol}
\end{figure}

\begin{construction}[$\prpv$ Protocol]
\label{construction:prpv}
Let $\Zs = (\keygen, \obligate, \solve, \ver)$ be a \oneoftwo puzzle.
We define the $\prpv_\Zs = \prpv_\Zs(\lambda)$ protocol as follows:
\begin{enumerate}
  \item Starting at $t = 0$, $V_0$ samples a pair of keys $(\pk, \sk) \gets \keygen(1^\lambda)$, broadcasts $\pk$, and waits to receive $y_0$ from the prover before time $t < 4$, and $\ans_0$ at time $t = 4$.
  \item At $t = 1$, $V_1$ samples a uniformly random bit $b$ and broadcasts it, and waits to receive $y_1$ from the prover at time $t = 3$, and $\ans_1$ at time $t \le 5$.
  \item At $t = p_P$, the prover, located at $p_P \in [1, 2)$, receives $pk$, it prepares $(y, \rho) \gets \obligate(\pk)$ and broadcasts $y$.
  
  \item At $t = 4 - p_P$, the prover receives $b$, computes $\ans \gets \solve(\pk, y, \rho, b)$ and broadcasts $\ans$.
  
  \item At $t \le 5$ when the verifiers receive all the messages in time described above, they check that $y_0 = y_1$ and $\ans_0 = \ans_1$, and the answers pass the test: 
  \begin{align*}
      \ver(\sk, y_0, b, \ans_0) = 1 \,\wedge\, \ver(\sk, y_1, b, \ans_1) = 1. 
  \end{align*}
\end{enumerate}
\end{construction}

\begin{proposition}
  \label{thm:prpv-completeness}
  If $\Zs$ have completeness $c$, $\prpv_\Zs$ has position-robust completeness $c$ with possible prover locations $[1, 2)$. %
\end{proposition}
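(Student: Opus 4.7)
The plan is a straightforward timing analysis: fix an arbitrary honest prover position $p_P \in [1, 2)$, trace when each message arrives at each party, confirm every verifier timing constraint is satisfied exactly, and then invoke the $(c, \cdot)$-completeness of $\Zs$ to conclude. Because the honest prover acts immediately upon receipt of each incoming message and broadcasts a single $y$ and a single $\ans$, we will have $y_0 = y_1 = y$ and $\ans_0 = \ans_1 = \ans$ automatically, so the two equality checks at the verifiers' end are trivially satisfied.

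First, I would verify the upstream timings. The message $\pk$ is transmitted from $V_0$ at position $0$ at time $t = 0$, so it reaches the prover at time $t = p_P$. The prover runs $(y, \rho) \gets \obligate(\pk)$ and broadcasts $y$ instantaneously. Then $y$ reaches $V_0$ at $t = 2 p_P$, which lies in $[2, 4)$ since $p_P \in [1,2)$, satisfying $V_0$'s requirement of receiving $y_0$ strictly before $t = 4$; and $y$ reaches $V_1$ at $t = p_P + (3 - p_P) = 3$, matching $V_1$'s requirement exactly for every $p_P$. Meanwhile, $V_1$ sends $b$ at $t = 1$ from position $3$, so $b$ reaches the prover at $t = 1 + (3 - p_P) = 4 - p_P$. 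Since $p_P < 2$, we have $4 - p_P > p_P$, so the prover genuinely receives $\pk$ before $b$, and in particular has already committed $y$ by the time he sees $b$.

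Next, I would verify the downstream timings. The prover runs $\ans \gets \solve(\pk, y, \rho, b)$ and broadcasts $\ans$ instantaneously at $t = 4 - p_P$. Then $\ans$ reaches $V_0$ at $t = (4 - p_P) + p_P = 4$, meeting $V_0$'s requirement exactly; and $\ans$ reaches $V_1$ at $t = (4 - p_P) + (3 - p_P) = 7 - 2 p_P$, which lies in $(3, 5]$ since $p_P \in [1, 2)$, satisfying $V_1$'s constraint $t \le 5$. Hence all timing and equality conditions hold deterministically for any $p_P \in [1, 2)$.

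Finally, conditioned on all timing and equality checks succeeding, the probability that $\ver(\sk, y, b, \ans) = 1$ is exactly the acceptance probability of the honest execution $(\keygen, \obligate, \solve, \ver)$ of the \oneoftwo puzzle $\Zs$ against a uniformly random challenge $b$, which by completeness is at least $c(\lambda)$. Therefore $\prpv_\Zs$ achieves position-robust completeness $c$ for all $p_P \in [1, 2)$. The argument contains no real obstacle; the only subtle point worth highlighting is the strict inequality $p_P < 2$, which is what ensures $\pk$ arrives before $b$ at the prover, and this is precisely why the protocol cannot be made robust at the boundary $p_P = 2$.
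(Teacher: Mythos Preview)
Your proof is correct and follows exactly the same timing-analysis approach as the paper's proof, just with more detail (the paper's version is a terse three-line computation of the four arrival times followed by invoking completeness of $\Zs$). One small quibble with your closing aside: the reason completeness fails at the boundary $p_P = 2$ is not the $\pk$-before-$b$ ordering (at $p_P = 2$ they arrive simultaneously, which is still fine for an instantaneous prover), but rather that $y$ would reach $V_0$ at $t = 2p_P = 4$, violating the strict constraint $t < 4$.
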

\begin{proof}
  For any prover location $p_P \in [1, 2)$, his first broadcast $y$ would be received by $V_0$ at time $2p_P < 4$, and by $V_1$ at time exactly 3; his second broadcast $\ans$ would be received by $V_0$ at time exactly 4, and by $V_1$ at time $1 + 2(3 - p_P) = 7 - 2p_P \le 5$.
  Therefore, the verifiers should then accept with probability $c$ by completeness of $\Zs$.
\end{proof}

Before considering soundness, we first justify that any arbitrary strategy could be compiled into a strategy where there are only two adversaries $A_0, A_1$ being at locations $0$ and $3$ respectively.

\begin{claim}
  \label{claim:adversarial-behavior}
  Let $S \in \mathcal R_P$ be any polynomially bounded adversarial strategy for $\prpv$.
  There exists another strategy $S' \in \mathcal R_P$, where for all $\lambda$, $S'_\lambda$ consists of only two adversaries at location $0$ and $3$ respectively, and $S'$ has the same verifier acceptance probability as $S$.
  Furthermore, the amount of entanglement shared between two adversaries in $S'$ is the same as the entanglement shared between adversaries in $(-\infty, 1)$ and those in $(2, +\infty)$ in $S$.
\end{claim}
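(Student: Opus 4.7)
The plan is to construct $S'$ by collapsing all adversaries in $(-\infty,1)$ into a single adversary $A_0$ at position $0$, and all adversaries in $(2,+\infty)$ into a single adversary $A_1$ at position $3$. At a high level, $A_0$ internally runs the joint quantum computation of all left-side adversaries in $S$, absorbing their inter-communications as local (instantaneous) operations, and interfaces externally with $V_0$, $V_1$, and $A_1$ via appropriately timed messages; the construction for $A_1$ is symmetric. Since this collapsing only merges adversaries without introducing or removing entanglement, the entanglement shared between $A_0$ and $A_1$ in $S'$ is exactly the entanglement across the $(-\infty,1)\,/\,(2,+\infty)$ cut in $S$, as required. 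The polynomial time bound is also preserved since merging a polynomial number of polynomial-time adversaries yields a polynomial-time algorithm.

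First I would verify that $A_0$ has timely access to every input it needs in order to reproduce the marginal of $S$ on the left. The adversary $A_0$ receives $\pk$ from $V_0$ at $t=0$, which is as early as any original left adversary could; it receives $b$ from $V_1$ at $t=4$, which may be later than some left adversary at $x\in(0,1)$ would receive it (at time $4-x$). I would then argue by case analysis that this possible delay of $b$ at $A_0$ never hurts: any left-side output depending on $b$ that matters must either (a) be sent to $V_0$, whose arrival deadline $t=4$ matches $A_0$'s availability exactly; (b) be sent to $V_1$, in which case even the outgoing leg takes $3-x$ time after receipt at $4-x$, yielding arrival $7-2x>5$ for $x<1$, past the deadline; or (c) be forwarded to a right-side adversary, but $A_1$ already obtains $b$ directly from $V_1$ at $t=1$, strictly earlier than any routing through a left adversary could provide.

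Next I would schedule the inter-adversary communications between $A_0$ and $A_1$ to faithfully reproduce the original left-right signaling pattern: for each message from some left adversary at $x_L$ sent at time $t_L$ intended for a right adversary at $x_R$ (arriving at $(x_R,\,t_L+x_R-x_L)$), have $A_0$ send the same content to $A_1$ at the appropriate time so that $A_1$'s internal simulation delivers it to the virtual right adversary at the original arrival time. This is feasible because the direct delay between $A_0$ and $A_1$ is $3$, which equals the minimum possible light-cone delay between any point in $(-\infty,1)$ and any point in $(2,+\infty)$. The symmetric analysis on $A_1$'s side handles outputs to $V_1$ and forwarding to $A_0$.

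The main obstacle I anticipate is simply the detailed timing bookkeeping needed to show that every message to a verifier in $S'$ is produced at the same time and with the same joint distribution as in $S$, so that the verifier transcripts are identically distributed. This boils down to the causal structure of 1D spacetime at signaling speed $1$: for any two spacetime points $(p_1,t_1)$ and $(p_2,t_2)$, the minimum adversarial delay is $|p_2-p_1|$ regardless of which intermediate adversary positions are used, so no configuration of left (respectively right) adversaries can generate an output usable at a verifier any earlier than the collapsed adversary at $0$ (respectively $3$) can.
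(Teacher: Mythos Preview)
Your overall approach---collapse the left-side adversaries into a single $A_0$ at position $0$ and the right-side adversaries into $A_1$ at position $3$, then verify by timing/light-cone analysis that nothing is lost---is the same as the paper's. However, one step in your sketch is incorrect as written. You claim that ``the direct delay between $A_0$ and $A_1$ is $3$, which equals the minimum possible light-cone delay between any point in $(-\infty,1)$ and any point in $(2,+\infty)$.'' That infimum is $1$, not $3$. Consequently your proposed scheduling, in which $A_1$'s internal simulation delivers each cross-party message to the virtual right adversary \emph{at the original arrival time}, is not feasible in general: a $\pk$-dependent message from a leftie at $0.5$ to a rightie at $2.5$ arrives at time $2.5$ in $S$, but $A_1$ cannot receive anything $\pk$-dependent from $A_0$ before time $3$.

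The repair is precisely the idea you state in your final paragraph, and it is what the paper carries out explicitly: drop the attempt to preserve internal arrival times and argue directly at the level of verifier outputs. The paper does this by intersecting the forward light cones above $(0,0)$ and $(3,1)$ (where the verifier messages originate) with the backward light cones below $(0,4)$ and $(3,3)$ (where the verifier deadlines sit). This forces the only meaningful left-to-right communication to cross location $1$ at time $1$, and the only meaningful right-to-left communication to cross location $2$ at time $2$. Hence $A_0$ can simulate all lefties along the segment $(0,0)$--$(1,1)$ at time $0$, send one bundled message to $A_1$, and finish along $(1,3)$--$(0,4)$ at time $4$; symmetrically for $A_1$. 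Once you replace the per-message scheduling with this verifier-level argument (which your last paragraph already gestures at), your proof goes through.
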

\begin{proof}
  Without loss of generality, we assume that there is no adversaries in $(-\infty, 0) \cup (3, +\infty)$, since this is beyond where the verifiers sit and those adversaries can always be simulated by an adversary sitting at location $0$ and $3$.
  We look at all the adversaries in $[0, 1)$ (calling them ``the lefties'') and those in $(2, 3]$ (calling them ``the righties'').
  We focus on when their private messages crosses the ``event horizon'' $[1, 2]$ (calling them CPC, short for ``cross party communications'').
  Observe that any communication that is computed not in the two light cones\footnote{
    A light cone above $(x, t)$ includes all points in the space-time diagram $(x', t')$ such that $t' \geq t$ and $||x - x'|| \leq t' -t$.
    Similarly, a light cone below $(x, t)$ includes all points in the space-time diagram $(x', t')$ such that $t' \leq t$ and $||x - x'|| \leq t - t'$.
  } above $(0, 0)$ and $(3, 1)$ does not depend on either verifier's message, and therefore can be precomputed at time $t = -\infty$.
  Similarly, we can also discard any communication that happened outside of the two light cones below $(0, 4)$ and $(3, 3)$, since $V_0$ and $V_1$'s decisions only depends on information that comes under those two light cones.
  Therefore, the only meaningful message from the lefties needs to reach location 1 at time 1, and the only meaningful message from the righties need to reach location 2 at time 2.
  
  We can now see that a single adversary can be placed at location 0 that simulates all the lefties.
  In particular, he will at time 0, compute the actions for lefties on line $(0, 0) - (1, 1)$ in order, send the messages to the right, and simulate all the lefties' internal communications afterwards before $(3, 1) - (4, 0)$; at time 4, he receives the only meaningful message from the righties and compute the actions for lefties on line $(3, 1) - (4, 0)$ in order, and output the corresponding output.
  Using the same argument, a single simulated adversary at location 3 can also simulate the righties perfectly.
  Finally, since $S \in \mathcal R_P$, there are only polynomial number of adversaries from $S$, therefore the final adversaries we construct are also polynomial time.
\end{proof}

We now more formally characterize the behaviors of the two adversaries $A_0, A_1$ at position $0$ and $3$ respectively, as described above (also illustrated at \Cref{fig:prpv-protocol}).
\begin{enumerate}
  \setcounter{enumi}{-1}
  \item At $t = -\infty$, $A_0$ receives the security parameter $\lambda$, runs a set up $U_0$ to prepare state $\rho^{(0)}_{RS}$. It then sends $\rho^{(0)}_S$ to $A_1$. 
  
  Thus, we can assume before the protocol begins, $A_0$ and $A_1$ possess $\rho^{(0)}_R, \rho^{(0)}_S$ respectively. 
  
  \item At $t = 0$, $A_0$ receives $\pk$. It applies a quantum circuit $U_1$ on $\ket\pk\bra\pk$ and $\rho^{(0)}_R$ to get a classical string $y_0$ along with the resulting overall state $\rho^{(1)}_{R'MS}$ and sends the classical $y_0$ to $V_0$ immediately at time $t = 0$ (he could delay the message, but since he does not obtain new information before he has to send $y_0$ (at time $t < 4$), this does not help him).
  
  For the residual registers $R'$ and $M$, $A_0$ also sends $\rho^{(1)}_M$ to $A_1$ immediately at time $t = 0$, and stores $\rho^{(1)}_{R'}$.
  
  \item At $t = 1$, $A_1$ receives $b$, applies $U_2$ on $\ket b\bra b$ and $\rho^{(1)}_{S}$, and let the overall resulting state be $\rho^{(2)}_{R'MM'S'}$.
  He sends $\rho^{(2)}_{M'}$ to $A_0$ and stores $\rho^{(2)}_{S'}$.
  
  \item At $t = 3$, $A_1$ receives register $M$, performs a POVM measurement $U_3$ on $\rho^{(2)}_{MS'}$ to obtain $y_1, \ans_1$ to send to $V_1$.
  \item At $t = 4$, $A_0$ receives register $M'$, performs a POVM measurement $U_4$ on $\rho^{(2)}_{R'M'}$ to obtain $\ans_0$ to send to $V_0$.
\end{enumerate}

We now consider a special case of $U_2$, denoted by the challenge forwarding unitary $F$, where it on input $\ket b\bra b \otimes \rho_S^{(1)}$, and outputs $\ket b\bra b$ into register $M'$ and $\ket b\bra b \otimes \rho_S^{(1)}$ into register $S'$.
\begin{definition}
We denote the set of adversarial strategies having $U_2 = F$ to be $\mathcal R_F \subseteq \mathcal R_P$.
\end{definition}  

Note that for adversaries in $\mathcal R_F$ \emph{can} pre-share entanglement.
This is needed later in \Cref{cor:cvpv-qrom}.

\begin{theorem}
  \label{thm:prpv-sound-forwarding}
  If $\Zs$ have \oneoftwo non-local soundness $\tau$, $\prpv_\Zs$ has soundness $\tau$ against $\mathcal R_F$.
\end{theorem}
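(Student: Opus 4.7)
The plan is to reduce any challenge-forwarding adversarial strategy directly to a non-local player for the \oneoftwo puzzle game, exploiting the fact that the restriction $U_2 = F$ cleanly decouples the post-$b$ computations of $A_0$ and $A_1$. Fix any $(A_0, A_1) \in \mathcal R_F$ specified by $U_0, U_1, U_2 = F, U_3, U_4$ as in the characterization preceding the theorem, and construct a QPT non-local player $\Ws = (\As, \Bs, \Cs)$ as follows. $\As$, on input $\pk$, runs $U_0$ to prepare $\rho^{(0)}_{RS}$ and then runs $U_1$ on $\ket{\pk}\bra{\pk} \otimes \rho^{(0)}_R$ to obtain a classical string $y_0$ together with residual state $\rho^{(1)}_{R'MS}$. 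It commits $y_0$ as the obligation, hands register $R'$ to $\Bs$, and hands $(M, S)$ to $\Cs$.

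Upon receiving $b$, each of $\Bs$ and $\Cs$ locally simulates the corresponding final stage of its paired adversary. Concretely, $\Bs$ prepares a fresh $\ket{b}$ in a register playing the role of $M'$ (the message that $A_0$ would have received from $A_1$ via forwarding) and applies the POVM $U_4$ on $(R', M')$ to produce $\ans_0$. Similarly, $\Cs$ applies $F$ to $\ket{b} \otimes \rho^{(1)}_S$ to obtain registers $M', S'$, discards $M'$, and applies the POVM $U_3$ on $(M, S')$ to produce $(y_1, \ans_1)$, outputting $\ans_1$ only. Since $U_0, U_1, F, U_3, U_4$ are all efficient, $\Ws$ is QPT. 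Moreover, by construction the joint distribution of $(\pk, \sk, b, y_0, \ans_0, y_1, \ans_1)$ produced by $\Ws$ in $\text{\sf NON-LOCAL-SOLVE}_{\Ws, \Zs}$ is identical to the one produced by $(A_0, A_1)$ in $\prpv_\Zs$; the only difference is what the respective verifiers subsequently check.

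Now I show the win in $\prpv_\Zs$ implies a win in the non-local game. Whenever the $\prpv_\Zs$ verifiers accept, we have $y_0 = y_1$, $\ans_0 = \ans_1$, and $\ver(\sk, y_i, b, \ans_i) = 1$ for both $i$. Substituting $y_1 = y_0$ into the predicate on $\Cs$'s answer gives $\ver(\sk, y_0, b, \ans_{\Cs}) = \ver(\sk, y_0, b, \ans_1) = 1$, and trivially $\ver(\sk, y_0, b, \ans_{\Bs}) = \ver(\sk, y_0, b, \ans_0) = 1$; hence the non-local game also accepts. Therefore
\[
\Pr[\prpv_\Zs \text{ accepts}] \;\le\; \Pr[\text{\sf NON-LOCAL-SOLVE}_{\Ws, \Zs}(1^\lambda) = 1] \;\le\; \tau(\lambda) + \negl(\lambda),
\]
where the last inequality is the \oneoftwo non-local soundness of $\Zs$ applied to the QPT player $\Ws$.

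The main conceptual step is recognizing that the challenge-forwarding restriction is precisely what makes this simulation faithful: because $U_2 = F$ does nothing more than duplicate $b$ into the register later sent to $A_0$, each side's post-$b$ action depends only on its own pre-$b$ state together with a locally reconstructible copy of $b$, which is exactly the information structure available to the two separated players in the non-local game. The minor bookkeeping issue---that $\prpv_\Zs$ additionally verifies the consistency $y_0 = y_1$ and $\ans_0 = \ans_1$ while the non-local game only requires both answers to verify against the single committed $y$---goes in our favor: $\prpv_\Zs$-acceptance is strictly stronger than non-local-game-acceptance for this particular $\Ws$, yielding the desired inequality on success probabilities.
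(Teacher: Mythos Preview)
Your proof is correct and follows essentially the same approach as the paper: construct a non-local player $(\As,\Bs,\Cs)$ by having $\As$ run $U_1U_0$ on $\pk$, commit $y_0$, and distribute $R'$ to $\Bs$ and $(M,S)$ to $\Cs$, after which each locally reconstructs the forwarded $b$ and applies the appropriate final POVM. Your bookkeeping is in fact slightly more careful than the paper's, in that you explicitly argue the inequality $\Pr[\prpv_\Zs\text{ accepts}]\le\Pr[\text{non-local game accepts}]$ via the observation that the $\prpv$ acceptance condition (which also enforces $y_0=y_1$ and $\ans_0=\ans_1$) is strictly stronger than the non-local check against the single committed $y_0$; the paper simply asserts the simulation is perfect.
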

\begin{proof}
  The idea is to reduce this adversary to the 1-of-2 non-local game defined in \Cref{def:non-local_game}.
  Note that since $U_2 = F$ is the challenge forwarding operator, we can assume the challenge $b$ is given to both $A_0, A_1$ at time $t = 3, t = 4$ respectively, instead of being sent from $V_1$, which achieves the same acceptance probability.
  
  Assume for contradiction that $A_0, A_1$ achieves success probability noticeably more than $\tau$.
  We construct an adversary that breaks the \oneoftwo non-local soundness.
  \begin{itemize}
      \item The algorithm $\As$ upon receiving $\pk$, it runs $U_1 U_0$ on $\pk$ to obtain $y_0$ along with $\rho_{R'MS}^{(1)}$.
      It outputs $\rho_{R'}^{(1)}$ into register $B$ and $\rho_{MS}^{(1)}$ into register $C$.
      \item Since $M'$ register simply contains $b$, let $\Bs$ perform POVM $U_4$ as $M'$ at time $t = 4$, and let $\Cs$ perform POVM $U_3$ .
  \end{itemize}
  Thus, $(\As, \Bs, \Cs)$ perfectly simulates $(A_0, A_1)$ in $\prpv$, and thus achieves the same success probability in the non-local game as that by $(A_0, A_1)$ in $\prpv$, which by assumption violates the \oneoftwo non-local soundness $\tau$ of the underlying puzzle $\Zs$.
\end{proof}

We now show that this protocol is also sound with respect to a weaker restriction on the adversaries, in particular, the set of strategies where they do not pre-share entanglement.
\begin{theorem}
  \label{thm:prpv-sound-no-entanglement}
  If $\Zs$ have \oneoftwo non-local soundness $\tau$, $\prpv_\Zs$ has soundness $\tau$ against $\mathcal R_0$, i.e. the set of adversarial strategies where $S$ is simply a classical string.
\end{theorem}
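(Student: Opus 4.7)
The plan is to reduce any strategy in $\mathcal R_0$ to a challenge-forwarding strategy in $\mathcal R_F$ and then invoke \Cref{thm:prpv-sound-forwarding}. The key structural fact is that since the adversaries share no pre-protocol quantum communication, the joint pre-protocol state $\rho^{(0)}_{RS}$ must be separable, so we can write $\rho^{(0)}_{RS} = \E_r[\sigma_R(r) \otimes \sigma_S(r)]$ for some distribution over classical strings $r$. In particular, $A_1$'s initial register $\sigma_S(r)$ can be locally re-prepared by any party that holds $r$, which lets $A_0$'s side internally simulate $A_1$.

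Given any $(A_0, A_1) \in \mathcal R_0$, I construct $(B_0, B_1) \in \mathcal R_F$ as follows. At $t = -\infty$, $B_0$ and $B_1$ sample the same classical string $r$ according to the above distribution (which is permitted since $\mathcal R_F$ allows pre-shared entanglement, and a fortiori classical randomness). At $t = 0$, $B_0$ receives $\pk$ and does the following in parallel: (i) simulates $A_0(\pk)$ acting on $\sigma_R(r)$ to obtain $y_0$, a residual register $R'$, and the outgoing message register $M$; (ii) for each $b' \in \{0, 1\}$, locally prepares a fresh copy of $\sigma_S(r)$ and applies $U_2$ on $\ket{b'}\bra{b'}$ together with this copy, producing a state on registers $(M_{b'}, S_{b'})$. $B_0$ immediately broadcasts $y_0$, sends $(M, S_0, S_1)$ toward $B_1$, and stores $(R', M_0, M_1)$. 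At $t = 1$, $B_1$ applies the challenge-forwarding operator $F$, sending $b$ toward $B_0$ and keeping a copy. At $t = 3$, $B_1$ receives $(M, S_0, S_1)$, discards $S_{1-b}$, and applies $A_1$'s measurement $U_3$ to $M \otimes S_b$ to produce $(y_1, \ans_1)$. At $t = 4$, $B_0$ receives the forwarded $b$, discards $M_{1-b}$, and applies $A_0$'s measurement $U_4$ to $R' \otimes M_b$ to produce $\ans_0$.

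Next I would verify that, for every fixing of the verifier randomness and the shared string $r$, the joint distribution of $(y_0, y_1, \ans_0, \ans_1)$ under $(B_0, B_1)$ exactly matches that under $(A_0, A_1)$. Indeed, in the $B$-execution the kept branch $b$ performs precisely the same sequence of unitaries and measurements on registers prepared identically as in the $A$-execution, while the discarded branch $1-b$ lives on a tensor-product register (by the absence of pre-shared entanglement together with the independent fresh preparations of $\sigma_S(r)$) and is therefore traced out harmlessly. Since $(B_0, B_1)$ is challenge-forwarding by construction and incurs only a constant-factor overhead over $(A_0, A_1)$, it lies in $\mathcal R_F$; \Cref{thm:prpv-sound-forwarding} then bounds its success probability by $\tau + \negl$, and the same bound transfers back to $(A_0, A_1)$.

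The only subtle point is justifying the re-preparation step --- namely that ``$S$ is simply a classical string'' really does force $\rho^{(0)}_{RS}$ to be separable with an efficiently samplable decomposition into product states. This follows directly from the definition of $\mathcal R_0$ (zero pre-protocol quantum communication, hence zero pre-shared entanglement between the adversaries), so no additional analysis is required; the remainder is direct bookkeeping to line up the registers across the $A$- and $B$-executions.
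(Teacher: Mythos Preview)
Your proof is correct and follows essentially the same approach as the paper: both compile an $\mathcal R_0$ adversary into an $\mathcal R_F$ adversary by having $B_0$ locally run $U_2$ on both challenge values (exploiting that $S$ is classical and hence copyable), send the resulting $S'$-branches along with $M$ to $B_1$, and have each party select the branch indexed by $b$ once it is known. The paper's version skips the separability/shared-randomness framing---since $A_0$ already prepares $\rho^{(0)}_{RS}$, it simply copies the classical register $S$ at $t=0$ without any pre-shared setup---but the construction and register bookkeeping match yours.
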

\begin{proof}
  The idea is to compile any such adversary into a challenge-forwarding adversary, and invoke the soundness argument of \Cref{thm:prpv-sound-forwarding}.
  The equivalent adversary could be constructed as follows.
  The key observation is that $U_2$'s input only depends on $b$ which can only be one of two values, and register $S$ which only holds a classical string.
  Thus, $A_0$ can perform two executions of $U_2$ on $b = 0, 1$ respectively before it knows the actual $b$.
  \begin{enumerate}
    \setcounter{enumi}{-1}
    \item At $t = -\infty$, no set up is done.
    \item At $t = 0$, $A_0$ runs $U_1 U_0$ on $\pk$, in addition he also copies the classical string in register $S$, and runs $U_2$ (originally run by $A_1$) twice on two copies but for $b = 0, 1$ respectively.
    We denote the output as $\rho_{M'_0 S'_0}^{(2)} \otimes \rho_{M'_1 S'_1}^{(2)}$.
    He sends registers $S'_0, S'_1, M$ as his message and keep registers $R', M'_0, M'_1$ to himself.
    \item At $t = 1$, $A_1$ simply copies and forwards $b$, i.e. invokes $F$.
    \item At $t = 3$, $A_1$ receives registers $S'_0, S'_1, M$ and $b$, and runs the original POVM on registers $M, S'_b$.
    \item At $t = 4$, $A_0$ receives the new second private message $b$, and runs the original POVM on registers $R', M'_b$.
  \end{enumerate}
  It is easy to see that this compiler preserves the behavior of the adversaries perfectly, and therefore the soundness follows.
\end{proof}

\section{Parallel Repetition}

\subsection{Strong 1-of-2 Puzzles}

For the sake of the proof later, we identify two special properties from the underlying \oneoftwo puzzle.

\begin{definition}
\label{ver_pk_property}
  A \oneoftwo puzzle satisfies \emph{0-challenge-public-verifiability}, if $\ver$ can be computed using only $\pk$ when $b = 0$, i.e. for $(\pk, \sk) \gets \keygen(1^\lambda)$ and any $y, \ans$, $\ver(\sk, y, 0, \ans)$ can be efficiently computed correctly with probability 1 using only $(\pk, y, \ans)$.
\end{definition}

We can easily verify this property by looking at the NTCF construction, as we show in \Cref{cor:zero_pub_ver}.

The other property that we need later is that the \oneoftwo non-local soundness should be \textonehalf{} instead of \textthreequarters.
In order to achieve this, we recall the following theorem.

\begin{theorem}[{\cite[Corollary 2.10]{radian2020semi}}]
  \label{thm:strong1o2puzzle-base}
  Assuming the quantum hardness of LWE, there exists a $(1 - \negl, 0)$-\oneoftwo puzzle $\Zs$, furthermore, it satisfies 0-challenge-public-verifiability.
\end{theorem}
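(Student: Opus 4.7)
My plan is to realize the strong puzzle via parallel repetition of the base NTCF-based \oneoftwo puzzle guaranteed by \Cref{thm:NTCFis1of2}, and separately confirm the $0$-challenge-public-verifiability property. Concretely, I set $\Zs$ to be $k = \lambda$-fold parallel repetition of the base puzzle sharing a single challenge bit: $\keygen$ runs the base $\keygen$ in parallel $k$ times to produce $(\pk_1, \ldots, \pk_k)$ and $(\sk_1, \ldots, \sk_k)$; $\obligate$ runs the base $\obligate$ coordinate-wise; $\solve$ on challenge $b \in \{0, 1\}$ invokes the base $\solve$ on every coordinate with the \emph{same} $b$; and $\ver$ accepts iff the base $\ver$ accepts on every coordinate.

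Completeness $1 - \negl$ is immediate by a union bound over the $k$ copies of the base completeness. For \twooftwo soundness, I will argue via a hybrid reduction to the base adaptive hardcore bit property of NTCFs. A \twooftwo solver against $\Zs$ outputs $(y_1, \ldots, y_k)$ together with answers $\ans_0^{(i)}, \ans_1^{(i)}$ for each coordinate, and must satisfy for every $i$ that $\ans_0^{(i)}$ is a preimage of $y_i$ under $f_{\pk_i}$ and that $\ans_1^{(i)} \ne 0$ together with $\ans_1^{(i)} \cdot (x_{i, 0} \oplus x_{i, 1}) = 0$, where $x_{i, 0}, x_{i, 1}$ denote the two preimages of $y_i$. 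Defining the $k$ hardcore bits $c_i = \ans_1^{(i)} \cdot (x_{i, 0} \oplus x_{i, 1})$, winning the \twooftwo game forces all $c_i = 0$. Via a standard hybrid in which we progressively replace each $c_i$ by a uniform bit --- using the trapdoors $\sk_j$ for $j \ne i$ to simulate the other coordinates and reducing the single-coordinate distinguishing gap to the base adaptive hardcore bit property --- I conclude that the joint distribution of $(c_1, \ldots, c_k)$ is computationally indistinguishable from uniform over $\{0, 1\}^k$. Hence the winning probability is at most $2^{-k} + \negl(\lambda) = \negl(\lambda)$ for $k = \lambda$.

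Finally, $0$-challenge-public-verifiability follows by inspecting the NTCF instantiation of the base puzzle: when $b = 0$, the verifier simply checks that $\ans$ is a preimage of $y$ under the publicly computable $f_{\pk}$, which uses only $\pk$ and not the trapdoor. Since $\ver$ of $\Zs$ is the coordinate-wise conjunction of the base $\ver$'s, this property lifts to $\Zs$ using only $(\pk_1, \ldots, \pk_k)$. The main technical subtlety lies in the hybrid step: the reduction to the base adaptive hardcore bit game must embed its single challenge key into one coordinate while faithfully simulating the $k - 1$ remaining coordinates using their fresh trapdoors, ensuring the $k$-parallel solver's view --- including the quantum state handed to it --- is identically distributed to the real game. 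This is standard but requires care; if a direct hybrid turns out to be insufficient for the adaptive hardcore bit in the quantum setting, I would fall back on the Canetti--Halevi--Steiner style parallel repetition argument used by Radian and Sattath themselves.
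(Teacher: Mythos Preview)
The paper does not prove this theorem itself; it is cited from Radian--Sattath, and the paper only remarks that their proof is via parallel repetition of the base $(1-\negl,\tfrac12)$ puzzle with a single shared challenge bit, together with the Canetti--Halevi--Steiner parallel repetition theorem for weakly verifiable puzzles. Your construction---$k$-fold repetition with one challenge bit---matches this exactly, and your treatment of completeness and of $0$-challenge-public-verifiability (lifting coordinate-wise from the base NTCF puzzle) is correct and is how the paper handles it.

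Where you diverge is in the soundness argument. The cited proof is \emph{generic}: it treats the base $2$-of-$2$ game as a black-box weakly verifiable puzzle with soundness $\tfrac12$ and applies CHS (which extends to quantum solvers since the reduction is non-rewinding). Your primary route is \emph{structure-specific}: a coordinate-by-coordinate hybrid that replaces the hardcore bits $c_i$ by uniform bits, reducing each step to the adaptive hardcore bit property at the embedded coordinate. This works in spirit, but be aware of two points. First, your phrasing ``the joint distribution of $(c_1,\ldots,c_k)$ is computationally indistinguishable from uniform'' is loose---the $c_i$ are functions of the adversary's output, not samples you control; the actual argument is the hybrid bound $\Pr[\text{win}]\le 2^{-k}+k\cdot\negl$. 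Second, the adaptive hardcore bit in \Cref{def:trapdoorclawfree} carries a side condition $d\in G_{k,b,x}$, whereas $\ver$ for challenge $1$ only checks $d\ne 0$; your reduction must account for adversarial $d\notin G$. This is the same gap that already arises when proving the base puzzle has $2$-of-$2$ soundness $\tfrac12$ (\Cref{thm:NTCFis1of2}), so it is handled there and the same handling applies to your hybrid---but it is not automatic from the abstract statement. Your explicit fallback to CHS is precisely the Radian--Sattath argument the paper cites, so either way you land on a correct proof.
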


In the prior work, they call this a \emph{strong} \oneoftwo puzzle, and the proof is via constructing a flavor of parallel repetition of \emph{any} base $(1 - \negl, \frac12)$-\oneoftwo puzzle.

Invoking \Cref{thm:2of2vsnon-local}, we get the following.
\begin{lemma}
  \label{lem:strong1o2puzzle}
  Assuming the quantum hardness of LWE, there exists a \oneoftwo puzzle with completeness $1 - \negl$, \oneoftwo non-local soundness \textonehalf, and 0-challenge-public-verifiability.
\end{lemma}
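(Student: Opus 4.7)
My plan is to obtain this lemma as a direct combination of the two results immediately preceding it, without any new technical work. Essentially, \Cref{thm:strong1o2puzzle-base} already hands us the hard piece (a $(1-\negl, 0)$-\oneoftwo puzzle $\Zs$ with 0-challenge-public-verifiability), and \Cref{thm:2of2vsnon-local} gives the generic bridge from \twooftwo soundness to \oneoftwo non-local soundness.

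More concretely, the first step is to take the puzzle $\Zs$ supplied by \Cref{thm:strong1o2puzzle-base}. This already fixes the completeness parameter at $1-\negl$ and guarantees 0-challenge-public-verifiability, both of which are properties that refer only to $\keygen$, $\obligate$, $\solve$, $\ver$ intrinsically and are unaffected by how we later measure soundness. So there is nothing to do for those two requirements beyond recording what \Cref{thm:strong1o2puzzle-base} already says.

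The second step is to upgrade the \twooftwo soundness bound $s = 0$ to a \oneoftwo non-local soundness bound. For this I plug $s = 0$ directly into \Cref{thm:2of2vsnon-local}, which yields non-local soundness $\tau = (s+1)/2 = 1/2$ for exactly the same puzzle $\Zs$. Because \Cref{thm:2of2vsnon-local} is a black-box implication that does not modify the algorithms of the puzzle, the completeness and 0-challenge-public-verifiability established in the first step carry over verbatim.

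The only thing worth double-checking is that \Cref{thm:strong1o2puzzle-base} is stated as giving $s = 0$ (up to the negligible slack that is already absorbed into the definition of \twooftwo soundness), so that the bound produced by \Cref{thm:2of2vsnon-local} is genuinely $1/2$ rather than some constant strictly larger than $1/2$. Assuming that reading of \Cref{thm:strong1o2puzzle-base} is correct, there is no real obstacle; the proof is essentially one line invoking both results in sequence and observing that the 0-challenge-public-verifiability property is untouched by the reduction.
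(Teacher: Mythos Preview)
Your proposal is correct and matches the paper's approach exactly: the paper states the lemma with no separate proof, simply prefacing it with ``Invoking \Cref{thm:2of2vsnon-local}, we get the following,'' which is precisely the combination of \Cref{thm:strong1o2puzzle-base} (to get a $(1-\negl,0)$-\oneoftwo puzzle with 0-challenge-public-verifiability) and \Cref{thm:2of2vsnon-local} (to convert $s=0$ into non-local soundness $\tfrac12$) that you describe.
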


\subsection{\oneofpoweroftwo Puzzles}

Unfortunately, for any \oneoftwo puzzle $\Zs$ with completeness $c$, there is always an adversarial strategy breaking $\prpv_\Zs$ (and therefore non-local soundness for $\Zs$) with probability $c/2$ by simply guessing the challenge $b$, which would be correct with probability \textonehalf.
Therefore, in order to beat this barrier, we consider the parallel repetition of \oneoftwo puzzles, and show that non-local soundness decreases exponentially, as a stepping stone to achieving negligibly-sound position verification.

\begin{definition}[\oneofpoweroftwo Puzzles]
  A \oneofpoweroftwo puzzle is exactly the same as a \oneoftwo puzzle, except that the challenge $b$ is a uniformly random $k$-bit bitstring instead of a single random bit.
\end{definition}

Two main requirements of interest for \oneofpoweroftwo puzzles are completeness and \oneofpoweroftwo non-local soundness for \oneofpoweroftwo puzzles, which can naturally be extended from completeness and \oneoftwo non-local soundness (see \Cref{def:non-local_game}, \Cref{def:non-local-soundness}) for \oneoftwo puzzles by simply changing $b$.

\begin{construction}[Parallel Repetition of \oneoftwo Puzzles]
  Let $\Zs = (\keygen, \obligate, \solve, \ver)$ be a \oneoftwo puzzle.
  The $k$-fold parallel repetition of $\Zs$, denoted as $\Zs^k$, is a \oneofpoweroftwo puzzle constructed as follows:
  \begin{itemize}
    \item $\keygen, \obligate, \solve$ for $\Zs^k$ simply runs $\keygen, \obligate, \solve$ for $\Zs$ $k$ times respectively.
    \item $\ver$ for $\Zs^k$ runs $\ver$ on all $k$ instances, and accepts if and only if all of them accept.
  \end{itemize}
\end{construction}

We note that the parallel repetition we consider here is different from the one used in the proof of \Cref{thm:strong1o2puzzle-base}.
In their case, the same challenge bit is reused across all $k$ instances, whereas in our case, each instance has a fresh random bit.
Thus, they obtain a \oneoftwo puzzle after the repetition, and here we get a \oneofpoweroftwo puzzle.

\begin{theorem}
  \label{thm:parallel-rep-1of2m}
  Let $\Zs$ be a \oneoftwo puzzle with completeness $1 - \negl$, \oneoftwo non-local soundness \textonehalf, and 0-challenge-public-verifiability.
  Then for any $k = \poly(\lambda)$, $\Zs^k$ has completeness $1 - \negl$ and \oneofpoweroftwo non-local soundness $2^{-k}$.
\end{theorem}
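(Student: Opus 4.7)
Completeness of $\Zs^k$ is immediate from a union bound: each of the $k = \poly(\lambda)$ instances verifies correctly with probability $1 - \negl(\lambda)$ by completeness of $\Zs$, so all $k$ simultaneously verify except with probability $k \cdot \negl(\lambda) = \negl(\lambda)$.

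For the \oneofpoweroftwo non-local soundness, the plan is to adapt the parallel repetition techniques developed for NTCF-based protocols in the classical verification of quantum computation literature~\cite{AlagicCGH20,chia2019classical} to the non-local game setting. Note that this parallel repetition, with independent challenge bits across coordinates, differs fundamentally from the same-challenge repetition implicit in \Cref{thm:strong1o2puzzle-base}, hence the exponential (rather than constant) soundness target here. The first step is to reinterpret the hypothesis ``\oneoftwo non-local soundness $\frac{1}{2}$'' as a computational almost-orthogonality statement: unfolding the definition yields $\E\Tr\bigl[(P^B_0 \otimes P^C_0 + P^B_1 \otimes P^C_1)\sigma_{BC}\bigr] \le 1 + \negl(\lambda)$ for every QPT non-local player, which informally says that the accept subspaces for the two challenges are disjoint on the shared state up to negligible error. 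This per-coordinate near-orthogonality is the computational content we amplify.

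The second step is to iterate this orthogonality across all $k$ coordinates. Assume for contradiction that some QPT non-local player $\Ws^* = (\As^*, \Bs^*, \Cs^*)$ wins the $\Zs^k$ non-local game with probability $2^{-k} + \epsilon$ for non-negligible $\epsilon$; the goal is to construct a single-copy non-local player winning the $\Zs$ game with probability above $\frac{1}{2}$. The reduction embeds the external challenge $\pk^*$ at a coordinate $j \in [k]$, generates the other $k-1$ keys honestly, runs $\As^*$ to obtain $y_1, \ldots, y_k$ and the shared state, and commits $y_j$ to the single-copy challenger. The crucial role of $0$-challenge-public-verifiability is to keep intermediate checks in the reduction efficient without requiring a secret-key oracle: one can decide whether partial answers from $\Bs^*, \Cs^*$ pass on the $0$-challenge using only the public key, allowing the reduction to condition on such events inside the separated non-local players $\Bs$ and $\Cs$ without needing additional secret material for the non-focal coordinates.

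The main obstacle I expect is the non-commutativity of the per-coordinate accept-projectors across different coordinates: the projectors $P^B_{j,c}$ for different $j$ all act on $\Bs^*$'s shared register and in general do not commute, so a naive product-of-orthogonalities bound is invalid. Following the cited works, this is overcome by a sequential measurement (``orthogonality amplification'') argument: one peels off coordinates one at a time by measuring the current coordinate's accept projector on the state already conditioned on accept for all previously peeled coordinates, each step invoking the base near-orthogonality on a renormalized state and incurring only polynomial loss. Composing $k$ such peeling steps yields the final bound $2^{-k} + \negl(\lambda)$ and contradicts the assumption.
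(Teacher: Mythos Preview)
Your high-level plan---extract computational orthogonality from the $\tfrac12$ non-local soundness and amplify via the \cite{AlagicCGH20,chia2019classical} framework---matches the paper. But the concrete mechanism you describe, ``peel off coordinates one at a time by measuring the current coordinate's accept projector on the state conditioned on accept for previously peeled coordinates,'' is not what those works do and does not close as stated. The per-coordinate accept projectors all act on the same entangled $B,C$ registers, and conditioning on having passed earlier coordinates with \emph{random} challenges produces a state that is neither efficiently preparable (verifying a $1$-challenge needs the trapdoor, not just the public key) nor one to which the single-copy non-local soundness hypothesis applies; your inductive step has no hook.

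What the paper actually does is work with the $2^k$ \emph{global} projectors $\Pi_{\sk,c}$, one per full challenge string $c\in\{0,1\}^k$, and prove they are \emph{pairwise} almost-orthogonal on the adversary's state: $\E_{\keygen}\bigl[\langle\psi_\pk|(\Pi_{\sk,a}\Pi_{\sk,b}+\Pi_{\sk,b}\Pi_{\sk,a})|\psi_\pk\rangle\bigr]=\negl(\lambda)$ for every $a\neq b$ (\Cref{lem:abba-is-negl}). The final bound then follows from a purely linear-algebraic inequality (\Cref{lem:acgh-key-technical}, taken from \cite{AlagicCGH20}): if $m$ projectors have pairwise cross terms bounded by $\delta_{ij}$, then their sum on any unit vector is at most $1+(\sum_{i<j}\delta_{ij})^{1/2}$. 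This lemma is precisely the device that handles the non-commutativity you flagged, and it is missing from your outline. Two further corrections to your reduction sketch: (i) the $0$-challenge-public-verifiability is used by the \emph{setup} player $\As^*$, not inside the separated $\Bs,\Cs$---$\As^*$ repeatedly prepares fresh copies of $|\psi_\pk\rangle$ and applies the full measurement $\{\Pi_{\sk,b},\I-\Pi_{\sk,b}\}$ for a fixed string $b$ with $b_i=0$ at the embedded coordinate, which is efficient because coordinate $i$ can be checked from $\pk$ alone while the other coordinates use self-generated trapdoors; (ii) this repeated-preparation loop (up to $\poly(1/\epsilon)$ trials) is essential to force the residual state close to the normalized $\Pi_{\sk,b}|\psi_\pk\rangle$ before invoking \Cref{lem:perfect-0-player}---a single run of $\As$ committing $y_j$, as you describe, is not enough.
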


Recall that by definition, this theorem means that for any QPT adversary, there exists a negligible function $\negl$, such that the success probability is at most $2^{-k} + \negl(\lambda)$.

The rest of the section will be dedicated to proving the theorem, mainly the non-local soundness.
Our proof follows the ideas from the work by Alagic et al.\,\cite[Section 4]{AlagicCGH20}.
This work along with the one by Chia et al.\,\cite{chia2019classical} proved that parallel repetition decreases soundness exponentially for Mahadev's quantum delegation, which is a different application of NTCFs.

We begin by showing that for $\Zs$, the projection corresponding to challenge $0$ and $1$ are ``computationally orthogonal''.

\begin{lemma}
  \label{lem:perfect-0-player}
  Let $\Zs$ be a \oneoftwo puzzle with \oneoftwo non-local soundness \textonehalf.
  For any efficient non-local player $\Ws = (\As, \Bs, \Cs)$, let $\ket{\psi_{\pk}}_{BCY}$ be the overall state\footnote{Without loss of generality, we assume the state is purified where the auxiliary space is either in $B$ or $C$ register.} of $\As$'s output, where measuring $Y$ register in the computational basis gives $y$.
  Let projection $\Pi_{\sk, b}$ correspond to the quantum predicate $\ver(\sk, y, b, \ans_\Bs) \wedge \ver(\sk, y, b, \ans_\Cs)$.
  If
  \[\E_{\keygen}\left[\braket{\psi_\pk | \Pi_{\sk, 0} | \psi_\pk}\right] \ge 1 - \negl(\lambda),\]
  then
  \[\E_{\keygen}\left[\braket{\psi_\pk | \Pi_{\sk, 1} | \psi_\pk}\right] = \negl(\lambda).\]
\end{lemma}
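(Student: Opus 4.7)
The plan is to derive this lemma directly from the \oneoftwo non-local soundness of $\Zs$, using $\Ws = (\As, \Bs, \Cs)$ itself as the adversary in the non-local solving game. The intuition is that if $\ket{\psi_\pk}$ passed both the challenge-0 and challenge-1 verifications with high probability, then averaging the two over a uniform challenge would give a win probability noticeably above \textonehalf, contradicting the assumed non-local soundness of \textonehalf.

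More concretely, I would first unpack $\Pi_{\sk, b}$ as a genuine projector on $\ket{\psi_\pk}_{BCY}$ obtained by coherently implementing the $Y$-register readout together with the measurements performed by $\Bs, \Cs$ on challenge $b$ and the classical predicate $\ver(\sk, y, b, \cdot) \wedge \ver(\sk, y, b, \cdot)$. With this in place, $\E_\keygen[\braket{\psi_\pk | \Pi_{\sk, b} | \psi_\pk}]$ is precisely the probability that $\Ws$ wins $\text{\sf NON-LOCAL-SOLVE}_{\Ws, \Zs}$ conditioned on the challenge bit being $b$. Since the preparation $\As(\pk)$ happens before $b$ is revealed, linearity of expectation gives
\begin{align*}
  \Pr[\text{\sf NON-LOCAL-SOLVE}_{\Ws, \Zs}(1^\lambda) = 1]
  = \frac{1}{2}\,\E_\keygen[\braket{\psi_\pk | \Pi_{\sk, 0} | \psi_\pk}] + \frac{1}{2}\,\E_\keygen[\braket{\psi_\pk | \Pi_{\sk, 1} | \psi_\pk}].
\end{align*}

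Since $\Ws$ is efficient and $\Zs$ has \oneoftwo non-local soundness \textonehalf, the left hand side is at most $\frac{1}{2} + \negl(\lambda)$. Substituting the hypothesis $\E_\keygen[\braket{\psi_\pk | \Pi_{\sk, 0} | \psi_\pk}] \geq 1 - \negl(\lambda)$ and rearranging immediately yields $\E_\keygen[\braket{\psi_\pk | \Pi_{\sk, 1} | \psi_\pk}] \leq \negl(\lambda)$, as desired.

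There is no substantive technical obstacle; the only care needed is in defining $\Pi_{\sk, b}$ as a bona fide projector via appropriate purification so that the factorization above is a rigorous identity rather than a heuristic. The efficiency of $\Ws$ is of course essential, since the \oneoftwo non-local soundness of $\Zs$ is a computational bound that only applies to QPT non-local players.
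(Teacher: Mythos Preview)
Your proposal is correct and essentially identical to the paper's proof. The paper makes the same decomposition of the non-local win probability into $\frac{1}{2}\,\E_{\keygen}\left[\braket{\psi_\pk | (\Pi_{\sk,0} + \Pi_{\sk,1}) | \psi_\pk}\right]$, justifying it by the observation that $\Pi_{\sk,b}$ commutes with the computational-basis measurement on $Y$ (since $\Bs,\Cs$ do not act on $Y$ and $\ver$ is only classically controlled on $Y$), and then derives the contradiction from the \textonehalf\ non-local soundness bound exactly as you describe.
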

\begin{proof}
  Note that by definition, $\Pi_{\sk, b}$ only involves running $\Bs, \Cs$ and $\ver$.
  Since $\Bs, \Cs$ does not act on the $Y$ register, and $\ver$ is only classically controlled on $Y$, $\Pi_{\sk, b}$ commutes with the computational-basis measurement on $Y$ for any $\sk, b$.
  Therefore, the overall winning probability for $\Ws$ is exactly $\frac12 \E_{\keygen}\left[\braket{\psi_\pk | (\Pi_{\sk, 0} + \Pi_{\sk, 1}) | \psi_\pk}\right]$.
  Suppose the lemma does not hold, then the expression above would be noticeably higher than \textonehalf.
  A contradiction.
\end{proof}

We now show that a similar version also holds for $\Zs^k$.
The notation $\Pi_{\sk, b}$ for \oneoftwo puzzles above can be extended similarly for \oneofpoweroftwo puzzles.

\begin{lemma}
  \label{lem:abba-is-negl}
  Let $\Zs$ be a \oneoftwo puzzle with completeness $1 - \negl$, \oneoftwo non-local soundness \textonehalf, and 0-challenge-public-verifiability.
  For any $k = \poly(\lambda)$, any non-local player $\Ws$ for $\Zs^k$, and any $a \neq b \in \{0, 1\}^k$,
  \[\E_{\keygen}\left[\braket{\psi_\pk | (\Pi_{\sk, b}\Pi_{\sk, a} + \Pi_{\sk, a}\Pi_{\sk, b}) | \psi_\pk}\right] = \negl(\lambda).\]
\end{lemma}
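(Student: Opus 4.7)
The plan is to reduce the $k$-coordinate cross-term to a single-coordinate computational orthogonality statement, and then prove the latter by a reduction to \Cref{lem:perfect-0-player} that exploits 0-challenge-public-verifiability. This mirrors the techniques used in the parallel repetition analyses of NTCF-based quantum delegation~\cite{AlagicCGH20, chia2019classical}.

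Without loss of generality, I would purify the non-local player so that the per-coordinate projections $\Pi^{(j)}_{\sk_j, c_j}$ act on disjoint answer registers, pairwise commute across different $j$, and satisfy $\Pi_{\sk, c} = \prod_j \Pi^{(j)}_{\sk_j, c_j}$. Pick any coordinate $i$ with $a_i \neq b_i$, say $a_i = 0$ and $b_i = 1$. Writing $\Pi_{\sk, a} = \Pi^{(i)}_{\sk, 0}\, R_a$ and $\Pi_{\sk, b} = \Pi^{(i)}_{\sk, 1}\, R_b$, where $R_a, R_b$ act only on coordinates $j \neq i$ and commute with $\Pi^{(i)}_{\sk, \cdot}$, a Cauchy--Schwarz step followed by Jensen's inequality yields
\[
\left|\E_{\keygen}\!\left[\langle \psi_{\pk}|\Pi_{\sk, b}\Pi_{\sk, a}|\psi_{\pk}\rangle\right]\right|
\leq \sqrt{\E_{\keygen}\!\left[\langle \psi_{\pk}|\Pi^{(i)}_{\sk, 1}\Pi^{(i)}_{\sk, 0}\Pi^{(i)}_{\sk, 1}|\psi_{\pk}\rangle\right]},
\]
and Hermitian symmetry gives the analogous bound for $\Pi_{\sk, a}\Pi_{\sk, b}$. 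It therefore suffices to show that the single-coordinate quantity on the right is $\negl(\lambda)$.

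For this, I would construct a non-local player $\Ws' = (\As', \Bs', \Cs')$ for the single-copy game of $\Zs$. On input $\pk^*$, $\As'$ samples $k - 1$ fresh key pairs $\{(\pk_j, \sk_j)\}_{j \neq i}$ for the other coordinates, sets $\pk_i = \pk^*$, runs the original $\As$ on the combined $\pk$, and outputs $y_i$ as the single-copy obligation. Crucially, since 0-challenge-public-verifiability makes $\Pi^{(i)}_{\sk^*, 0}$ evaluable from $\pk^*$ alone, $\As'$ applies fixed-point amplitude amplification on this publicly-verifiable projection to amplify the coord-$i$ chal-0 pass probability to $1 - \negl(\lambda)$. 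The amplified state is split between $\Bs'$ and $\Cs'$, who on single-copy challenge $b$ each run the coord-$i$ subroutines $\Bs^{(i)}_b, \Cs^{(i)}_b$ on their share. By design, $\Ws'$'s chal-0 pass probability is $\geq 1 - \negl$, while its chal-1 pass probability is (up to constants controlled by the amplification) the target single-coordinate quantity. Invoking \Cref{lem:perfect-0-player} on $\Ws'$ forces the latter to be $\negl(\lambda)$.

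The main obstacle will be performing the amplification \emph{uniformly in $\pk$}, since amplitude amplification is efficient only when the baseline chal-0 success is at least inverse-polynomially bounded below. A standard case split on the magnitude of the baseline handles this: for keys on which the baseline is polynomially large, the amplification proceeds as above; for keys with negligible baseline, the single-coordinate quantity is controlled separately via the 2-of-2 soundness implicit in \Cref{thm:2of2vsnon-local}, by relating $\|\Pi^{(i)}_{\sk, 0}\Pi^{(i)}_{\sk, 1}\psi_{\pk}\|^2$ to the probability of simultaneously extracting valid chal-0 and chal-1 answers on coordinate $i$, which is negligible by the adaptive hardcore bit property inherited from the underlying NTCF.
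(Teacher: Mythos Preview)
There is a genuine gap in your decomposition step. You assume that after purification the projectors factor as $\Pi_{\sk,c}=\prod_j \Pi^{(j)}_{\sk_j,c_j}$ with the $j$-th factor depending only on the $j$-th challenge bit $c_j$. This is not without loss of generality: the non-local players $\Bs,\Cs$ receive the \emph{entire} challenge string $c\in\{0,1\}^k$ and may apply a joint unitary $U^{\Bs}_c$ (resp.\ $U^{\Cs}_c$) that correlates all coordinates before measuring the answer registers. The verification predicate itself does factor as $\prod_j V^{(j)}_{\sk_j,c_j}$ over disjoint answer registers, but after conjugating by $U^{\Bs}_c\otimes U^{\Cs}_c$ each coordinate factor depends on the \emph{full} $c$, not only on $c_j$. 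Consequently there is no well-defined single object ``$\Pi^{(i)}_{\sk,0}$'' or ``$\Pi^{(i)}_{\sk,1}$'' to sandwich, your Cauchy--Schwarz step does not reduce to a clean single-coordinate quantity, and the ``coord-$i$ subroutines $\Bs^{(i)}_b,\Cs^{(i)}_b$'' that your $\Ws'$ is supposed to run simply do not exist for a general player.

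The paper's proof avoids this by never attempting a per-coordinate tensor decomposition. It works directly with the full projector $\Pi_{\sk,b}$: choosing $i$ with $a_i=1$ and $b_i=0$, the reduction samples the $k-1$ other key pairs itself, then repeatedly prepares fresh copies of $\ket{\psi_\pk}$ and \emph{measures} the full $\Pi_{\sk,b}$ (efficiently checkable because coordinate $i$ uses $0$-challenge public verifiability and the remaining coordinates use the locally generated $\sk_j$), post-selecting on acceptance. On single-copy challenge $0$ it feeds the players the full string $b$ and outputs coordinate $i$; on challenge $1$ it feeds the full string $a$ and outputs coordinate $i$. Only the \emph{middle} term is ever relaxed, from $\Pi_{\sk,a}$ to the coordinate-$i$ check $\Sigma_{\sk,a}$ (still defined with the full challenge $a$), via $\Pi_{\sk,a}\preccurlyeq\Sigma_{\sk,a}$. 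This keeps every operator well-defined for arbitrary correlated strategies. Your amplitude-amplification idea and the case split on the baseline are not the obstacle; the problem is the invalid factorization of the players' measurements, on which the rest of your argument rests.
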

\begin{proof}
  Since $a, b$ is symmetric, without loss of generality, assume that there exists $i$ so that $a_i = 1$ and $b_i = 0$. 
  We claim that
  \begin{equation}
    \label{eq:projection-orthogonality}
    \E_{\keygen}\left[\braket{\psi_\pk | \Pi_{\sk, b}\Pi_{\sk, a}\Pi_{\sk, b} | \psi_\pk}\right] = \negl(\lambda).
  \end{equation}
  Assume for contradiction that there is a player strategy that makes this term noticeable, say $1/\eta(\lambda)$ for some polynomial $\eta$.
  Define $\Sigma_{\sk, a}$ projector to be same as $\Pi_{\sk, a}$ except that it only checks the $i$-th repetition and acts as identity on every other repetition.
  By definition $\Pi_{\sk, a} \preccurlyeq \Sigma_{\sk, a}$, and thus $\E_{\keygen}\left[\braket{\psi_\pk | \Pi_{\sk, b}\Sigma_{\sk, a}\Pi_{\sk, b} | \psi_\pk}\right] \ge 1/\eta$.
  Consider the single-copy non-local player $\Ws^* = (\As^*, \Bs^*, \Cs^*)$ as follows:
  \begin{itemize}
      \item For $\As^*$, on input $\pk^*$, it runs $\keygen$ $(k - 1)$ times, and set $\pk$ to be a list of $k$ public keys with $\pk^*$ inserted at the $i$-th position.
      \item Repeat the following for at most $q := \max\{4\eta^2, \lambda^2\}$ times: prepare $\ket{\psi_{\pk}}_{BCY}$, apply measurement $\Pi_{\sk, b}$, and abort the loop if the measurement accepts.
      This is efficient as we know all the secret keys except for index $i$, which we can still verify by 0-challenge-public-verifiability as $b_i = 0$.
      
      If the loop has not succeeded after $q$ iterations, $\As$ simply invokes the honest prover, guesses the challenge is 0, and sends the $\ans$ corresponding to 0-challenge to $\Bs$ and $\Cs$, so that later they simply output $\ans$.
      \item Measure $Y$ register of the residual state as $y$, and send $B, C$ registers as input registers for the next round.
      Let the overall residual state be denoted as $\ket\phi$.
      \item For $\Bs^*, \Cs^*$, if the challenge is 0 (or 1), run $\Bs, \Cs$ on $b$ (or $a$ respectively), and output index $i$.
  \end{itemize}
  By construction, the probability that $\Ws$ succeeds when challenge is 0 is $1 - \negl$.
  
  Now consider the case for challenge 1.
  Let
  \[\Omega := \left\{(\pk, \sk) : \braket{\psi_\pk | \Pi_{\sk, b} | \psi_\pk} > q^{-1/2} \right\}\]
  denote the set of ``good'' keys for the parallel scheme.
  For $(\pk, \sk) \in \Omega$, the probability of not terminating within $q$ iterations is at most $(1 - q^{-1/2})^q \le e^{-\sqrt q} \le e^{-\lambda}$ and thus $\ket\phi$ is exponentially close to $\Pi_{\sk, b}\ket{\psi_\pk}$.
  By \Cref{lem:perfect-0-player}, we have
  \[\E_{\keygen \wedge \Omega}\left[\braket{\phi | \Sigma_{\sk, a} | \phi}\right] \le \E_{\keygen}\left[\braket{\phi | \Sigma_{\sk, a} | \phi}\right] = \negl(\lambda),\]
  where $\E_{\keygen \land \Omega}[f(\pk, \sk)] = \E_{\keygen}[f(\pk, \sk) \cdot \mathbf{1}_{(\pk, \sk) \in \Omega}])$.  
  However,
  \begin{align*}
    \eta^{-1} &\le \E_{\keygen}\left[\braket{\psi_\pk | \Pi_{\sk, b}\Sigma_{\sk, a}\Pi_{\sk, b} | \psi_\pk}\right] \\
      &= \E_{\keygen \land \Omega}\left[\braket{\psi_\pk | \Pi_{\sk, b}\Sigma_{\sk, a}\Pi_{\sk, b} | \psi_\pk}\right] + \E_{\keygen \land \lnot\Omega}\left[\braket{\psi_\pk | \Pi_{\sk, b}\Sigma_{\sk, a}\Pi_{\sk, b} | \psi_\pk}\right] \\
      &\le \negl(\lambda) + q^{-1/2} \\
      &\le \negl(\lambda) + \eta^{-1}/2,
  \end{align*}
  which contradicts with that $\eta^{-1}$ is noticeable.
  Therefore, \eqref{eq:projection-orthogonality} must hold.
  It then follows that
  \begin{align*}
    \E_{\keygen}\left[\braket{\psi_\pk | (\Pi_{\sk, b}\Pi_{\sk, a} + \Pi_{\sk, a}\Pi_{\sk, b}) | \psi_\pk}\right]
      &= 2\E_{\keygen}\left[\Re\left(\braket{\psi_\pk | \Pi_{\sk, b}\Pi_{\sk, a} | \psi_\pk}\right)\right] \\
      &\le 2\E_{\keygen}\left[\left|\braket{\psi_\pk | \Pi_{\sk, b}\Pi_{\sk, a} | \psi_\pk}\right|\right] \\
      &\le 2\E_{\keygen}\left[\braket{\psi_\pk | \Pi_{\sk, b}\Pi_{\sk, a}\Pi_{\sk, b} | \psi_\pk}^{1/2}\right] \\
      &\le 2\E_{\keygen}\left[\braket{\psi_\pk | \Pi_{\sk, b}\Pi_{\sk, a}\Pi_{\sk, b} | \psi_\pk}\right]^{1/2} \\
      &\le \negl(\lambda)
  \end{align*}
  as claimed.
\end{proof}

We now recall a key technical lemma from the prior work before proving the main theorem.

\begin{lemma}[{\cite[Lemma 4.3]{AlagicCGH20}}]
  \label{lem:acgh-key-technical}
  Let $A_1, ..., A_m$ be projectors and $\ket\psi$ be a quantum state.
  Suppose there are real numbers $\delta_{ij} \in [0, 2]$ such that $\braket{\psi|(A_iA_j + A_jA_i)|\psi} \le \delta_{ij}$ for all $i \neq j$.
  Then $\braket{\psi|(A_1 + \cdots + A_m) |\psi} \le 1 + \left(\sum_{i < j} \delta_{ij}\right)^{1/2}$.
\end{lemma}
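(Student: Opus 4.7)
The plan is to reduce the bound to a Cauchy--Schwarz estimate on the auxiliary vector $\ket{\phi} := \sum_{i=1}^{m} A_i \ket{\psi}$. Write $S := \braket{\psi | (A_1 + \cdots + A_m) | \psi}$ and $D := \sum_{i<j} \delta_{ij}$; the goal becomes showing $S \le 1 + \sqrt{D}$.

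First I would expand $\braket{\phi | \phi} = \sum_{i,j} \braket{\psi | A_i A_j | \psi}$. Since each $A_i$ is a projector we have $A_i^2 = A_i$, so the diagonal ($i=j$) terms contribute exactly $S$. The off-diagonal terms pair up symmetrically as $(A_iA_j + A_jA_i)$ for $i < j$, and the hypothesis bounds each such pair by $\delta_{ij}$, yielding $\braket{\phi | \phi} \le S + D$.

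Second, I would apply Cauchy--Schwarz to the unit vector $\ket{\psi}$: $S = \braket{\psi | \phi} \le \sqrt{\braket{\phi | \phi}}$, so $S^2 \le S + D$. Solving the quadratic inequality $S^2 - S - D \le 0$ for $S \ge 0$ gives $S \le \tfrac{1}{2}\bigl(1 + \sqrt{1 + 4D}\bigr)$, and the elementary estimate $\sqrt{1 + 4D} \le 1 + 2\sqrt{D}$ (verifiable by squaring) delivers the claimed bound $S \le 1 + \sqrt{D}$.

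There is no real obstacle: once the auxiliary vector $\ket{\phi}$ is written down, the argument is a short calculation combining the projector identity $A_i^2 = A_i$, Cauchy--Schwarz, and a square root estimate. Notably the argument uses nothing about quantum structure beyond $A_i$ being projectors and $\ket{\psi}$ being a unit vector, so the same statement would hold for any finite collection of projectors on an arbitrary Hilbert space; in our application $A_i$ will be taken to be the commutator-sum projectors arising from the parallel-repetition challenges, where the hypothesis $\delta_{ij}$ is provided by \Cref{lem:abba-is-negl}.
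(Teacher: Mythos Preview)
Your argument is correct. The paper does not supply its own proof of this lemma; it is quoted verbatim from \cite[Lemma~4.3]{AlagicCGH20} and used as a black box in the proof of \Cref{thm:parallel-rep-1of2m}, so there is no in-paper proof to compare against. For what it is worth, your Cauchy--Schwarz argument via $\ket{\phi} = \sum_i A_i\ket{\psi}$ is exactly the standard proof of this inequality (and is essentially how it is done in the cited reference): the projector identity gives the diagonal contribution $S$, the hypothesis bounds the off-diagonal contribution by $D$, and solving $S^2 \le S + D$ finishes it.
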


\begin{proof}[Proof of \Cref{thm:parallel-rep-1of2m}]
  Completeness follows immediately by running the honest \oneoftwo solver in parallel and a union bound on the failure probability.

  For soundness, consider any $1$-of-$2^k$ non-local player $\Ws$, and let its success probability be
  \[\tau = 2^{-k} \E_{\keygen} \left[\braket{\psi_\pk| (\sum_{c} \Pi_{\sk, c}) |\psi_\pk}\right]. \]
  Define an arbitrary total order ``$<$'' on $\{0, 1\}^k$.
  Then by \Cref{lem:acgh-key-technical} and \Cref{lem:abba-is-negl}, we have
  \begin{align*}
    \tau &\le 2^{-k} + 2^{-k}\E_\keygen \left[\braket{\psi_\pk| \sum_{a < b} (\Pi_{\sk, a}\Pi_{\sk, b} + \Pi_{\sk, b}\Pi_{\sk, a}) |\psi_\pk}\right]^{1/2} \\
      &\le 2^{-k} + 2^{-k} \sqrt{2^{2k} \cdot \negl(\lambda)} \\
      &= 2^{-k} + \negl(\lambda),
  \end{align*}
  concluding the proof.
\end{proof}

\subsection{0-Entanglement Soundness from Polynomial Hardness}

We now consider the $k$-fold pararell repetition of $\prpv$.

\begin{construction}
  Let $\Zs$ be a \oneoftwo puzzle.
  $\prpv_\Zs^k$ is the position verification protocol where the two verifiers and the prover runs $k$ instances of $\prpv_\Zs$ in parallel.
  At the end, the verifiers accept if and only if all $k$ instances accept.
\end{construction}

One can also naturally define $\prpv_{\Zs'}$ for any \oneofpoweroftwo puzzle $\Zs'$ by simply changing $b$ to be a $k$-bit bitstring.
By construction, $\prpv_{\Zs^k}$ results in the exact same protocol as $\prpv_\Zs^k$ for any \oneoftwo puzzle $\Zs$.

\begin{theorem}
  \label{thm:prpv-negl}
  Let $\Zs$ be a \oneoftwo puzzle with completeness $1 - \negl$ and \oneoftwo non-local soundness $1/2$.
  Then for any $k = \poly(\lambda)$, $\prpv_\Zs^k$ has position-robust completeness $1 - \negl$ with possible prover locations $[1, 2)$, and soundness $2^{-k}$ against both $\mathcal R_F$ and $\mathcal R_0$.
\end{theorem}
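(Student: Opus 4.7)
The plan is to reduce to the parallel-repeated puzzle $\Zs^k$ and then reuse the reductions already established for the single-shot $\prpv$ protocol, suitably generalized from one-bit challenges to $k$-bit challenges. The first observation is that, by construction, $\prpv_\Zs^k$ is literally the same protocol as $\prpv_{\Zs^k}$: the prover and verifiers act on $k$ independent instances of the underlying puzzle in parallel, which coincides with running the position verification scheme on the \oneofpoweroftwo puzzle $\Zs^k$. Since each coordinate of the challenge is an independent uniform bit, timing and message structure are identical.

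For position-robust completeness, I would run the honest single-shot prover of \Cref{thm:prpv-completeness} in parallel on each of the $k$ coordinates. All coordinates have the same spacetime diagram, so for any $p_P \in [1, 2)$ the messages arrive in time. A union bound over the $k = \poly(\lambda)$ coordinates, combined with the per-coordinate completeness $1 - \negl$ of $\prpv_\Zs$, yields $1 - \negl(\lambda)$ completeness overall.

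For soundness, I would generalize \Cref{thm:prpv-sound-forwarding} from $\prpv_{\Zs'}$ for \oneoftwo puzzles $\Zs'$ to \oneofpoweroftwo puzzles. The proof of \Cref{thm:prpv-sound-forwarding} is entirely syntactic in the size of the challenge: a challenge-forwarding adversary $(A_0, A_1)$ against $\prpv_{\Zs^k}$ can be repackaged as a non-local player $(\As, \Bs, \Cs)$ against $\Zs^k$ by letting $\As$ run $U_1 U_0$ on $\pk$, letting $\Bs$ apply $U_4$ on receiving the forwarded challenge, and letting $\Cs$ apply $U_3$. The success probabilities are equal. Then I would apply \Cref{thm:parallel-rep-1of2m} (invoked with the instantiation from \Cref{lem:strong1o2puzzle}, which also provides the required 0-challenge-public-verifiability) to conclude that $\Zs^k$ has \oneofpoweroftwo non-local soundness $2^{-k}$. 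This gives soundness $2^{-k}$ against $\mathcal R_F$. To upgrade to $\mathcal R_0$, I would then invoke the compiler in the proof of \Cref{thm:prpv-sound-no-entanglement}, which converts an arbitrary non-entangled adversary into a challenge-forwarding one by having $A_0$ execute $U_2$ on both branches $b = 0$ and $b = 1$; in the \oneofpoweroftwo setting, $A_0$ instead runs $U_2$ on all $2^k$ possible challenge strings. Since $\rho_S^{(0)}$ is classical when there is no pre-shared entanglement, this branching simulation is still well-defined and efficient whenever $2^k = \poly(\lambda)$, and preserves the induced strategy exactly.

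The main obstacle I anticipate is the mismatch between the hypotheses stated here and those required by \Cref{thm:parallel-rep-1of2m}: the latter needs 0-challenge-public-verifiability of the base puzzle, which is not explicitly listed in the theorem hypothesis. This is resolved by noting that the theorem will only ever be applied to the \oneoftwo puzzle provided by \Cref{lem:strong1o2puzzle}, which does satisfy all three properties. A secondary subtlety is the efficiency of the $\mathcal R_0$ compiler: the $2^k$-way branching on challenge strings is only polynomial when $k = O(\log \lambda)$ for efficient compilation, but since the reduction only needs to exist abstractly (the adversaries are still polynomial when $k = \poly(\lambda)$) and the non-local soundness statement is purely information-theoretic about a bounded-depth computation, this is not a genuine obstacle; I would verify this carefully rather than relying on intuition.
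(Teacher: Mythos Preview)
Your treatment of completeness and of soundness against $\mathcal R_F$ matches the paper's proof exactly: identify $\prpv_\Zs^k$ with $\prpv_{\Zs^k}$, observe that the proof of \Cref{thm:prpv-sound-forwarding} is agnostic to the length of $b$, and invoke \Cref{thm:parallel-rep-1of2m}. Your remark about the missing 0-challenge-public-verifiability hypothesis is also apt; the paper silently relies on it via \Cref{lem:strong1o2puzzle} just as you suggest.

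The genuine gap is in your handling of $\mathcal R_0$ when $k = \omega(\log\lambda)$. You correctly note that the compiler from \Cref{thm:prpv-sound-no-entanglement} must branch over all $2^k$ challenge strings, and hence is efficient only for $k = O(\log\lambda)$. But your proposed escape --- that ``the non-local soundness statement is purely information-theoretic'' so an inefficient reduction suffices --- is incorrect. The \oneofpoweroftwo non-local soundness of $\Zs^k$ established in \Cref{thm:parallel-rep-1of2m} is \emph{computational}: it holds only against QPT players, because it is ultimately inherited from the computational \twooftwo soundness (adaptive hardcore bit). A compiler that produces a $2^k$-time forwarding adversary yields no contradiction when $2^k$ is superpolynomial.

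The paper closes this gap with an additional idea you are missing. When $k = \omega(\log\lambda)$ the target bound $2^{-k} + \negl(\lambda)$ is itself negligible, so it suffices to show no $\mathcal R_0$ adversary succeeds with probability $1/p(\lambda)$ for any polynomial $p$. Given such an adversary, one \emph{downgrades} it to an adversary against $\prpv_\Zs^m$ for $m = \lceil\log p\rceil + 1 = O(\log\lambda)$: the adversary simply samples the remaining $k - m$ keypairs itself and simulates those coordinates internally, preserving the success probability $1/p$. Now $m = O(\log\lambda)$, so the $2^m$-branching compiler \emph{is} efficient, producing an $\mathcal R_F$ adversary for $\prpv_\Zs^m$ with success probability $1/p$, which noticeably exceeds the $\mathcal R_F$ soundness $2^{-m} \le 1/(2p)$ --- a contradiction. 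You should add this downgrading step to complete the argument.
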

\begin{proof}
  Completeness follows directly by invoking the honest prover in parallel.
  Since the protocol has the exact same timing constraints as before, the adversarial behavior can be described exactly the same as \Cref{claim:adversarial-behavior}.
  Therefore, soundness against $\mathcal R_F$ can be proven exactly the same as \Cref{thm:prpv-sound-forwarding} as the proof does not rely on $b$ being a single bit, except that we instead reduce to $1$-of-$2^k$ non-local soundness of $\Zs^k$, which we prove in \Cref{thm:parallel-rep-1of2m}.
  
  As for soundness against $\mathcal R_0$, we also employ the same idea, which is to compile any such adversary into a challenge forwarding adversary, and invoke the soundness against $\mathcal R_F$.
  Note that the proof of \Cref{thm:prpv-sound-no-entanglement} requires running $U_2$ on all possible challenges $b$, and in this case, the challenge space is as large as $2^k$.
  The compiler from \Cref{thm:prpv-sound-no-entanglement} trying all possible coins will give a forwarding adversarial strategy with run-time/communication blow up in $2^k$.
  Therefore, the proof immediately extends if $k = O(\log \lambda)$.
  
  Assume $k = \omega(\log \lambda)$, then we need to prove that any efficient adversary strategy can only succeed with probability at most $2^{-k} + \negl(\lambda)$, which is overall a negligible function.
  Assume if an $\mathcal R_0$ adversary is able to break the protocol with probability $1/p$ for some polynomial $p(\lambda)$, then we can come up with an adversary that breaks $\prpv_\Zs^m$ with the same probability $1/p$ for any $m \le k$, by simply simulating the other $(m - k)$ executions and running the original adversary.
  Pick $m = \log p + 1 = O(\log \lambda)$, and we get an $\mathcal R_F$ adversary for $\prpv_\Zs^m$ with success probability $1/p$.
  However, as we argued, $\prpv_\Zs^m$ has soundness $2^{-m} = 2^{1 - \log p} = 1/2p$ against $\mathcal R_F$.
  This leads to a contradiction as $1/2p$ is noticeable.
\end{proof}

Combining this with \Cref{lem:strong1o2puzzle}, we get the following.
\begin{corollary}
  \label{cor:cvpv}
  Assuming polynomial quantum hardness of LWE, there is a classically-verifiable position verification scheme having position-robust completeness $1 - \negl$ with possible prover locations $[1, 2)$ and negligible soundness against $\mathcal{R}_F$ and $\mathcal R_0$.
\end{corollary}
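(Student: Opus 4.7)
The plan is to simply combine the two main ingredients that have just been established, namely \Cref{lem:strong1o2puzzle} and \Cref{thm:prpv-negl}. First, I would invoke \Cref{lem:strong1o2puzzle} to obtain, under the quantum polynomial hardness of LWE, a \oneoftwo puzzle $\Zs$ with completeness $1-\negl$, \oneoftwo non-local soundness $\tfrac12$, and $0$-challenge-public-verifiability. The last property is not needed in the statement of this corollary but it was crucial for obtaining the $\tfrac12$ non-local soundness via the strong \oneoftwo puzzle machinery.

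Next, I would instantiate the parallel repetition construction $\prpv_\Zs^k$ from the previous subsection, choosing the repetition parameter $k = k(\lambda)$ to be any polynomial that is also superlogarithmic in $\lambda$ (for concreteness, one can simply take $k(\lambda) = \lambda$). By \Cref{thm:prpv-negl}, the resulting protocol enjoys position-robust completeness $1-\negl$ with possible prover locations $[1,2)$, and soundness $2^{-k(\lambda)}$ against both $\mathcal R_F$ and $\mathcal R_0$. Since $k(\lambda) = \omega(\log\lambda)$, the quantity $2^{-k(\lambda)}$ is negligible in $\lambda$, which yields negligible soundness against $\mathcal R_F$ and $\mathcal R_0$ as required. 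Classical verifiability is inherited from $\prpv_\Zs$ itself: the verifiers only ever send the classical keys $\pk$ and challenges $b$ produced by $\keygen$ and a coin flip, receive classical messages $y_0, y_1, \ans_0, \ans_1$, and run the classical $\ver$ algorithm.

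There is essentially no obstacle here beyond choosing the parameters correctly, since all of the hard work, namely the non-local soundness analysis of the underlying puzzle, the parallel repetition theorem \Cref{thm:parallel-rep-1of2m}, and the soundness reductions against $\mathcal R_F$ and $\mathcal R_0$ in \Cref{thm:prpv-sound-forwarding,thm:prpv-sound-no-entanglement,thm:prpv-negl}, has already been carried out. The only point worth double-checking is that the reduction from $\mathcal R_0$ adversaries in the proof of \Cref{thm:prpv-negl} still applies at our chosen $k$, but \Cref{thm:prpv-negl} explicitly handles the case $k = \omega(\log \lambda)$ via the contradiction argument that amplifies an $\mathcal R_0$ attacker for $\prpv_\Zs^k$ into one for $\prpv_\Zs^m$ with $m = O(\log \lambda)$, so the corollary follows immediately.
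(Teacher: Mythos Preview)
Your proposal is correct and matches the paper's approach: the paper simply states that the corollary follows by combining \Cref{thm:prpv-negl} with \Cref{lem:strong1o2puzzle}, which is exactly what you do (with the sensible explicit choice $k(\lambda)=\lambda$). Your added remarks about classical verifiability and the $\mathcal R_0$ case for $k=\omega(\log\lambda)$ are accurate elaborations but not additional content beyond what the paper's one-line proof intends.
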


\subsection{Bounded-Entanglement Soundness from Subexponential Hardness}

The soundness could be decreased further if we assume stronger hardness on quantum LWE.
For $c > 0$ and $L: \mathbb N^+ \to \mathbb N$, let $\mathcal R_{L, c}$ be the same as $\mathcal R_L$, except that the adversaries can run in time $\poly\left(2^{\lambda^c}\right)$ instead of $\poly(\lambda)$.
We denote $c$-subexponential hardness\,\cite{jawale2021snargs, holmgren2021fiat} to mean that any $\poly\left(2^{\lambda^c}\right)$-time adversary achieving advantage at most $\negl\left(2^{\lambda^c}\right)$. 

\begin{lemma}
  Assuming $c$-subexponential quantum hardness of LWE, there is a classically-verifiable position verification scheme having position-robust completeness $1 - \negl$ with possible prover locations $[1, 2)$ and $c$-subexponential soundness against $\mathcal R_{0, c}$.
\end{lemma}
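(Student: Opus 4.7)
The proposed approach is to reuse \Cref{construction:prpv} on the parallel-repeated $1$-of-$2$ puzzle $\Zs^k$, where $\Zs$ is the $(1-\negl, 1/2)$-non-local puzzle of \Cref{lem:strong1o2puzzle} but now set the repetition parameter to $k := \lambda^c$ (or any $\lambda^{c'}$ with $0 < c' \le c$ large enough that $2^{-k}$ is $\negl(2^{\lambda^c})$). Position-robust completeness $1 - \negl$ is immediate from \Cref{thm:prpv-completeness} plus a union bound over the $k$ parallel threads. All the remaining work is in soundness, and the plan is to trace the chain \Cref{thm:parallel-rep-1of2m} $\to$ \Cref{thm:prpv-sound-forwarding} $\to$ \Cref{thm:prpv-sound-no-entanglement} through the subexponential regime rather than the polynomial one.

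First I would re-run the proof of \Cref{thm:parallel-rep-1of2m} verbatim, but invoke the single-copy $1$-of-$2$ non-local soundness of $\Zs$ against $\poly(2^{\lambda^c})$-time players with advantage $\negl(2^{\lambda^c})$, which follows from $c$-subexponential LWE together with the same reductions that gave \Cref{lem:strong1o2puzzle}. Since the arguments inside \Cref{lem:perfect-0-player,lem:abba-is-negl} are efficient and treat the non-local player as a black box, they preserve $\poly(2^{\lambda^c})$ time; hence $\Zs^k$ has $1$-of-$2^k$ non-local soundness $2^{-k} + \negl(2^{\lambda^c})$ against any $\poly(2^{\lambda^c})$-time player. \Cref{thm:prpv-sound-forwarding}, whose proof also does not depend on $b$ being a single bit, then gives $c$-subexponential soundness of $\prpv_{\Zs^k}$ against $\mathcal{R}_{F,c}$.

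Next, to handle $\mathcal{R}_{0,c}$, I would adapt the two-step reduction used in the large-$k$ branch of \Cref{thm:prpv-negl}. Suppose an $\mathcal{R}_{0,c}$ adversary wins $\prpv_{\Zs^k}$ with probability at least $1/p$ where $p \le \poly(2^{\lambda^c})$. Set $m := \lceil\log p\rceil + 1 = O(\lambda^c)$, simulate the remaining $k - m$ puzzles internally (the reduction samples those secret keys itself), and obtain an $\mathcal{R}_{0,c}$ adversary for $\prpv_{\Zs^m}$ with essentially the same advantage. Applying the $\mathcal{R}_0 \to \mathcal{R}_F$ compilation of \Cref{thm:prpv-sound-no-entanglement} on $m$-bit challenges costs a factor of $2^m$ in runtime; since $2^m \le 2^{O(\lambda^c)}$ and the starting adversary runs in $\poly(2^{\lambda^c})$ time, the resulting challenge-forwarding adversary still lies in $\mathcal{R}_{F,c}$. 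But by the previous paragraph applied with parameter $m$, $\prpv_{\Zs^m}$ has soundness $2^{-m} + \negl(2^{\lambda^c}) \le 1/(2p) + \negl(2^{\lambda^c})$ against $\mathcal{R}_{F,c}$, contradicting the assumed advantage $1/p$.

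The main obstacle is purely parameter bookkeeping: every reduction must stay inside the $\poly(2^{\lambda^c})$ envelope that $c$-subexponential LWE can absorb. The only genuinely new cost is the $2^m$ blowup in the $\mathcal{R}_0 \to \mathcal{R}_F$ compilation, which for $m = \Theta(\lambda^c)$ is itself subexponential; fortunately $\poly(2^{\lambda^c}) \cdot 2^{\lambda^c} = \poly(2^{\lambda^c})$, so the composition is still tolerable. Everything else, including the internal simulation of the dropped $k-m$ instances and the black-box reductions inside the parallel repetition theorem, is efficient and lifts cleanly from the polynomial to the subexponential setting.
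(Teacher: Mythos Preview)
Your overall plan matches the paper's: rerun the chain of reductions in the subexponential regime and check that each step stays inside the $\poly(2^{\lambda^c})$ envelope. However, you misidentify which steps actually need work, and the step you gloss over is precisely the content of the paper's proof.

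You assert that the arguments inside \Cref{lem:perfect-0-player,lem:abba-is-negl} ``treat the non-local player as a black box'' and therefore lift automatically. This is false for \Cref{lem:abba-is-negl}: the single-copy player $\Ws^*$ constructed there repeats the $k$-fold adversary up to $q = \max\{4\eta^2, \lambda^2\}$ times, where $1/\eta$ is the assumed advantage in \eqref{eq:projection-orthogonality}. In the subexponential regime $\eta$ may be $\poly(2^{\lambda^c})$, so $q$ is itself subexponential rather than polynomial; you must explicitly verify that $T \cdot q = \poly(2^{\lambda^c})$ before the single-copy non-local soundness hypothesis applies. The paper singles out exactly this as one of three non-black-box steps requiring separate treatment. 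The same issue recurs inside \Cref{thm:strong1o2puzzle-base}, which you invoke through \Cref{lem:strong1o2puzzle}: that result also rests on a parallel-repetition reduction (via \cite{canetti2005hardness}) whose runtime scales with the inverse advantage and needs the same check.

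Two smaller points. First, $k = \lambda^c$ gives soundness $2^{-\lambda^c} + \negl(2^{\lambda^c})$, and $2^{-\lambda^c}$ is \emph{not} $\negl(2^{\lambda^c})$; you need $k = \omega(\lambda^c)$ (your parenthetical hedge with $c' \le c$ goes the wrong direction). Second, the paper also notes that the NTCF's internal statistical parameters (the ratios $B_P/B_V = B_V/B_L$) must be inflated to roughly $2^{\lambda^{c'}}$ for some $c' > c$ so that the NTCF properties themselves hold at the subexponential level; this is easy but should not be omitted.
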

\begin{proof}
    Most of the reduction extends immediately as they are all black-box reductions, in particular, they do not explicitly depend on the adversary's running time nor its success probability.
    There are three exceptions: \Cref{thm:strong1o2puzzle-base}, \Cref{thm:parallel-rep-1of2m} and \Cref{thm:prpv-negl}.
    
    We show how to adapt the proof for \Cref{thm:parallel-rep-1of2m} for the subexponential case, and the same approach could be applied to \Cref{thm:strong1o2puzzle-base} as well, which corresponds to the parallel repetition of the base \oneoftwo puzzles.
    The non-black-box reduction in the proof for \Cref{thm:parallel-rep-1of2m} occured in \Cref{lem:abba-is-negl}, where the reduction needs to repeat running the adversary $q$ times, where $q = \max \{\lambda^2, (2/p)^2\}$ and $p$ is the adversary's success probability in \eqref{eq:projection-orthogonality}.
    By assumption, there exists some constant $c > 0$, such that for any $\poly\left(2^{\lambda^c}\right)$-time adversary, he can win the \oneoftwo non-local game with probability at most $1/2 + \negl\left(2^{\lambda^c}\right)$.
    Let the adversary's running time be $T = \poly\left(2^{\lambda^c}\right)$ and assume that $p = 1/\poly\left(2^{\lambda^c}\right)$, then we can see that the reduction still runs in time $O(Tq) = O(T/p^2) = \poly\left(2^{\lambda^c}\right)$.
    The rest of the proof goes through and at the end, we conclude that $p = \negl\left(2^{\lambda^c}\right)$, a contradiction.
    
    The non-black-box step in the proof for \Cref{thm:prpv-negl} occurred in the step where the reduction reduces a general adversary into a challenge forwarding adversary.
    Recall that $k$ is the number of repetitions.
    Again, if $k = O(\lambda^c)$, the reduction runs in time $2^k \cdot \poly\left(2^{\lambda^c}\right) = \poly\left(2^{\lambda^c}\right)$ and the proof goes through.
    Assume $k = \omega(\lambda^c)$ and the adversary's success probability $p$ is noticeably larger than $2^{-k}$ which is again negligible in $2^{\lambda^c}$, then the reduction runs in time $O(\log 1/p) \cdot \poly\left(2^{\lambda^c}\right) = \poly\left(2^{\lambda^c}\right)$ and the proof also goes through.
    
    Finally, we remark that the ``statistical security parameter'' in the instantiation of NTCFs, which is the ratio $\frac{B_P}{B_V} = \frac{B_V}{B_L}$, needs to be set to be at least $2^{\lambda^{c'}}$ for any $c' > c$ in order for the NTCF construction to be $c$-subexponentially secure.
    We refer the readers to \cite[Remark 4.2]{brakerski2021cryptographic} for the relevant discussions on the choice of the parameters for the NTCF.
\end{proof}

We now leverage standard techniques~\cite{aaronson2004limitations,Tomamichel_2013} to show that this can be bootstrapped to handle bounded entanglement adversaries.

\begin{theorem}
  \label{thm:bounded-entanglement}
  Assuming $c$-subexponential quantum hardness of LWE, for any polynomial $L: \mathbb N^+ \to \mathbb N$, there is a classically-verifiable position verification scheme having position-robust completeness $1 - \negl$ with possible prover locations $[1, 2)$ and $c$-subexponential soundness against $\mathcal R_{L, c}$.
\end{theorem}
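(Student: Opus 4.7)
The plan is to bootstrap the preceding lemma via the standard Aaronson--Tomamichel trick \cite{aaronson2004limitations,Tomamichel_2013}. The key observation is that any bipartite state $\rho_{RS}$ preshared between the two adversaries on at most $L(\lambda)$ qubits of total local Hilbert space satisfies the operator inequality $\rho_{RS} \preccurlyeq 2^{L(\lambda)} \cdot (I/2^{L(\lambda)})$, where $I/2^{L(\lambda)} = (I/2^{L_R})_R \otimes (I/2^{L_S})_S$ factors into local maximally mixed states across the bipartition. Since each local maximally mixed state can be simulated by the corresponding adversary using only fresh private random bits (with no entanglement or communication needed), and since the verifier's acceptance probability is a linear functional of the preshared state, any adversary in $\mathcal R_{L, c}$ wins with probability at most $2^{L(\lambda)}$ times that of some adversary in $\mathcal R_{0, c}$, which is in turn bounded by the previous lemma.

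More concretely, I would use exactly the same CVPV construction as in the previous lemma---\Cref{construction:prpv} composed with $k(\lambda)$ parallel repetitions---where $k(\lambda)$ is now chosen as a sufficiently large polynomial to absorb the $2^{L(\lambda)}$ blowup. Given any $S = (A_0, A_1) \in \mathcal R_{L, c}$ with preshared state $\rho_{RS}$ (assumed pure without loss of generality by purifying into each adversary's local workspace), I define $S' \in \mathcal R_{0, c}$ identically to $S$ except that $A_0$ and $A_1$ each sample local uniform bits in place of reading from registers $R$ and $S$. Writing the winning probability of $S$ as $\Tr(M \rho_{RS})$ for the POVM element $M$ corresponding to verifier acceptance, the operator inequality gives
\begin{equation*}
\Pr[S \text{ wins}] = \Tr(M \rho_{RS}) \le 2^{L(\lambda)} \cdot \Tr\left(M \cdot (I/2^{L(\lambda)})\right) = 2^{L(\lambda)} \cdot \Pr[S' \text{ wins}].
\end{equation*}
The previous lemma applied to $S'$ (whose runtime overhead over $S$ is additive polynomial) yields $\Pr[S' \text{ wins}] \le 2^{-k(\lambda)} + \negl(2^{\lambda^c})$, and hence $\Pr[S \text{ wins}] \le 2^{L(\lambda) - k(\lambda)} + 2^{L(\lambda)} \cdot \negl(2^{\lambda^c})$.

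The main care lies in parameter balancing. Choosing $k(\lambda)$ to be a polynomial large enough that both terms remain $c$-subexponentially small---for instance $k(\lambda) = L(\lambda) + \lambda^{c} \cdot \omega(1)$, which is permissible since $L$ is polynomial and $k$ is allowed to be any polynomial without degrading completeness or the honest prover's polynomial runtime---closes the argument. The obstacle that most needs verification is that the operator-inequality step respects the locality structure of the protocol: replacing $\rho_{RS}$ by the factored maximally mixed state across the $(R, S)$ bipartition must not introduce any hidden communication between $A_0$ and $A_1$ during the protocol. This is in fact immediate because $I/2^{L(\lambda)}$ factors into independent local marginals simulable by independent private randomness, so $S'$ is a syntactically valid no-entanglement strategy with identical timing and message structure as $S$, and therefore a legitimate member of $\mathcal R_{0, c}$ to which the previous lemma applies.
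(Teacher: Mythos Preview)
Your overall strategy matches the paper's: replace the preshared entanglement by the maximally mixed state via the operator inequality $\rho_{RS} \preccurlyeq 2^{L} \cdot I/2^{L}$, incur a multiplicative $2^{L}$ loss, and invoke the no-entanglement bound. The locality discussion is fine.

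The gap is in the parameter balancing. You derive
\[
\Pr[S \text{ wins}] \le 2^{L(\lambda) - k(\lambda)} + 2^{L(\lambda)} \cdot \negl\bigl(2^{\lambda^c}\bigr)
\]
and then assert that choosing $k$ large enough makes \emph{both} terms $c$-subexponentially small. But the second term does not depend on $k$ at all: the additive $\negl\bigl(2^{\lambda^c}\bigr)$ comes from the underlying $c$-subexponential LWE hardness through the chain of reductions (adaptive hardcore bit, the two parallel repetitions), and no choice of $k$ shrinks it. When $\deg L > c$---which the theorem must cover, since $L$ is an arbitrary polynomial---the factor $2^{L(\lambda)}$ eventually dominates every function of the form $2^{d\lambda^c}$, so $2^{L(\lambda)} \cdot \negl\bigl(2^{\lambda^c}\bigr)$ need not be negligible. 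Concretely, take $c = 1$, $L(\lambda) = \lambda^2$, and the legitimate negligible function $2^{-\lambda^{3/2}}$; the product is $2^{\lambda^2 - \lambda^{3/2}} \to \infty$.

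The paper's fix is to rescale the \emph{security parameter} of the underlying puzzle, not merely the number of repetitions: set $n = \bigl(L(\lambda) + \lambda^c\bigr)^{1/c}$ and instantiate $\prpv_\Zs^{k}$ at security parameter $n$ with $k(n) = \omega(n^c)$. The no-entanglement bound then becomes (for sufficiently large $n$) at most $2^{-n^c} = 2^{-(L(\lambda) + \lambda^c)}$, which after the $2^{L(\lambda)}$ blowup yields $2^{-\lambda^c}$. Since $L$ is polynomial, $n$ is polynomial in $\lambda$, so the scheme remains efficient and the adversaries in $\mathcal R_{L,c}$ (time $\poly(2^{\lambda^c}) \subseteq \poly(2^{n^c})$) are still covered. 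With this one change, the rest of your argument goes through verbatim.
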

\begin{proof}
  By assumption, there exists an \oneoftwo puzzle $\Zs$ such that for any polynomial $k(n) = \omega(n^c)$, for any adversarial strategy $S \in \mathcal R_{0, c}$, $S$ succeeds in breaking $\prpv_\Zs^k$ with probability at most $\negl\left(2^{n^c}\right) = 2^{-\omega(n^c)}$.
  In particular, $S$ has success probability at most $2^{-n^c}$ for all sufficiently large $n$ against $\mathcal R_{0, c}$.
  
  Pick $n = \left(L(\lambda) + \lambda^{c}\right)^{1/c}$, and we claim that for every adversarial strategy $S \in \mathcal R_{L, c}$, its success probability in breaking $\prpv_\Zs^k(n)$ is $\negl\left(2^{-\lambda^{c}}\right)$, in particular, it is at most $2^{-\lambda^{c}}$ for all sufficiently large $\lambda$.
  Assume this is not the case, then its success probability is higher than $2^{-\lambda^{c}}$ infinitely often.
  We can replace the entanglement with maximally mixed state, and therefore we have an adversarial strategy $S' \in \mathcal R_{0, c}$ with success probability higher than $2^{-\lambda^{c}} \cdot 2^{-L} = 2^{-L - \lambda^{c}} = 2^{-n^{c}}$ infinitely often.
  This is because any pre-shared entanglement of dimension $2^L$ can be extended into a basis for a $2^L$-dimension state space, which means that the entanglement can be replaced by a maximally mixed state (a non-entangled state), and reduces the probability by at most a factor $2^L$. 
  This contradicts with the conclusion in the last paragraph.
\end{proof}

\section{Attacks and Countermeasures}
\subsection{Attack with Polynomial Entanglement}
\label{sec:attack}

In this section, we present an adversarial strategy in $\mathcal R_L$ for $\prpv_\Zs^k$ achieving winning probability as good as the completeness of the protocol for any $k = \poly(\lambda)$ and any \oneoftwo puzzle $\Zs$ satisfying a specific property defined below, with $L$ being only as large as the number of qubits in $\rho$.
Note that this does not contradict the bounded-entanglement soundness from \Cref{thm:bounded-entanglement}, as there $L$ is determined by the security parameter $\lambda$, and therefore cannot depend on the specific protocol construction (and therefore the length of $\rho$).

\begin{definition}
  We call a \oneoftwo puzzle having an \emph{XZ-solver} if $\solve(\pk, y, \rho, b)$ simply measures $\rho$ (as a string of qubits) in standard basis if $b = 0$, or Hadamard basis if $b = 1$, and outputs the measurement result.
\end{definition}
Note that the \oneoftwo puzzle based on NTCFs both have an XZ-solver (see \Cref{cor:XZ_measure}).
This holds also for the strong \oneoftwo puzzle, as by construction, it is simply running an XZ-solver several times in parallel.
We now describe the attack for $\prpv_\Zs$ when $\Zs$ has an XZ-solver.%

\begin{enumerate}
  \item At $t = -\infty$, $A_0$ prepares $L$ EPR pairs, and keeps the first half in register $R$ and sends the other half to $A_1$ in register $S$.
  \item At $t = 0$, $A_0$ receives $pk$ and prepares $(y, \rho)$ by running $\obligate$ as normal.
  In addition, he also teleports $\rho$ using $R$, getting measurement results $(k_0, k_1)$.
  He sends $y$ to $V_0$, and $(y, k_0, k_1)$ to $A_1$.
  \item At $t = 1$, $A_1$ receives $b$, and measures $S$ in standard basis if $b = 0$, or else in Hadamard basis if $b = 1$.
    He obtains measurement results $r$ and sends $(b, r)$ to $A_0$.
  \item At $t = 3$, $A_1$ receives $(y, k_0, k_1)$, and sends $(y, r \oplus k_b)$ to $V_1$.
  \item At $t = 4$, $A_0$ receives $(b, r)$, and sends $r \oplus k_b$ to $V_0$.
\end{enumerate}

The correctness of the attack follows from the fact that the teleportation gadget commutes with the $X$/$Z$ measurements the prover performs at the end of the computation.

In order to attack $\prpv_\Zs^k$, we can simply repeat this attack $k$ times in parallel.

\subsection{Unbounded-Entanglement Soundness in the QROM}

In this section, we employ the idea from Unruh's work~\cite{unruh2014qrom} for constructing position verification with quantum communication against unbounded entanglement in the quantum random oracle model (QROM).
Following the idea there, we change the construction $\prpv_{\Zs}^k$ so that the random $k$-bit challenge $b$ is sampled by the random oracle, i.e. $b = H(x_0 \oplus x_1)$ for random $x_0, x_1$.
We prove that the resulting construction achieve negligible soundness against efficient adversaries with \emph{any} polynomial amount of entanglement in the QROM.

The following construction $\prpvrom$ is a variant of $\prpv$ in the QROM.
We highlight the differences in this construction with \varul{underlines}.
\begin{construction}[$\prpvrom$ Protocol]
\label{construction:prpv_qrom}
Let $\Zs$ be an \oneofpoweroftwo puzzle and let $H : \{0,1\}^\lambda \to \{0,1\}^k$ be a random oracle, where $\lambda$ is the security parameter. 
The protocol $\prpvrom_\Zs$ is defined as follows: 
\begin{enumerate}
  \item Starting at $t = 0$, $V_0$ samples a pair of keys $(\pk, \sk) \gets \keygen(1^\lambda)$ \varul{and ${x_0 \gets \{0,1\}^\lambda}$}, broadcasts $\pk$ \varul{and $x_0$}, and waits to receive $y_0$ from the prover before time $t < 4$, and $\ans_0$ at time $t = 4$.
  
  \item At $t = 1$, $V_1$ samples \varul{$x_1 \gets \{0,1\}^\lambda$} and broadcasts it, and waits to receive $y_1$ from the prover at time $t = 3$, and $\ans_1$ at time $t \le 5$.
  
  \item At $t = p_P$, the prover, located at $p_P \in [1, 2)$, receives $pk$ \varul{and $x_0$}, it prepares $(y, \rho) \gets \obligate(\pk)$ and broadcasts $y$.
  
  \item At $t = 4 - p_P$, the prover receives  \varul{$x_1$}, \varul{let $b = H(x_0 \oplus x_1)$}; it computes $\ans \gets \solve(\pk, y, \rho, b)$ and broadcasts $\ans$.
  
  \item At $t \le 5$ when the verifiers receive all the messages in time described above, they check that $y_0 = y_1$ and $\ans_0 = \ans_1$, and the answers pass the test:
  
  \centerline{$\ver(\sk, y_0, b, \ans_0) = 1 \,\wedge\, \ver(\sk, y_1, b, \ans_1) = 1$, \quad \varul{where $b = H(x_0 \oplus x_1)$}.}
\end{enumerate}
\end{construction}

\begin{theorem}
For any $k > 0$ and any \oneoftwo puzzle $\Zs$, if $\prpv_\Zs^k$ has completeness $c$ and soundness $s$ against $\mathcal R_F$, $\prpvrom_{\Zs^k}$ has completeness $c$ and soundness $s$ against $\mathcal{R}_P$, i.e. the set of polynomially bounded strategies (and therefore having at most polynomial amount of pre-shared entanglement) in the QROM.
\end{theorem}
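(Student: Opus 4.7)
The plan is to handle completeness directly and establish soundness by a reduction to challenge-forwarding adversaries for $\prpv_{\Zs^k}$. For completeness, the honest prover located at $p_P \in [1, 2)$ receives $(\pk, x_0)$ at time $p_P$ and broadcasts $y$; at time $4 - p_P$ it receives $x_1$, locally computes $b = H(x_0 \oplus x_1)$, and evaluates $\ans \gets \solve(\pk, y, \rho, b)$. Since the timing constraints and verifier predicate are identical to those of $\prpv_{\Zs^k}$ (with $b$ now derived from $x_0, x_1$ via $H$ rather than sent in the clear by $V_1$) and $b$ is uniform over the oracle's randomness, the verifier accepts with probability $c$.

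For soundness, fix any $S = (A_0, A_1) \in \mathcal R_P$ for $\prpvrom_{\Zs^k}$ (placed WLOG at positions $0$ and $3$; the adversarial behavior claim extends verbatim to the QROM since its proof only rearranges computations within light cones and does not touch oracle access). I would construct a challenge-forwarding $S' = (B_0, B_1) \in \mathcal R_F$ for $\prpv_{\Zs^k}$ as follows. At $t = -\infty$, $B_0$ and $B_1$ share the same pre-entanglement as $A_0, A_1$, together with classical shared randomness: independent uniform $x_0, x_1 \in \{0,1\}^\lambda$ and the description of a $2q$-wise independent hash $f : \{0,1\}^\lambda \to \{0,1\}^k$, where $q = \poly(\lambda)$ upper bounds the combined number of oracle queries of $A_0$ and $A_1$; such an $f$ perfectly simulates the QRO against $q$ quantum queries. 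On receiving $\pk$ at time $0$, $B_0$ simulates $A_0$ on $(\pk, x_0)$ using $f$ as the oracle; on receiving the $\prpv$ challenge $b$ at time $1$, $B_1$ immediately forwards $b$ to $B_0$ (making $B_1$ challenge-forwarding) and simulates $A_1$ on $x_1$ using the programmed oracle $f'$ defined by $f' = f$ except $f'(x_0 \oplus x_1) := b$. At time $4$, once $B_0$ has received the forwarded $b$, it switches from $f$ to $f'$ for the remainder of $A_0$'s queries, and finally outputs whatever $A_0$ and $A_1$ output to the verifiers.

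The analysis proceeds by a short hybrid argument. Hybrid $0$ is the real $\prpvrom_{\Zs^k}$ execution; Hybrid $1$ replaces the QRO by $f$, which is perfectly indistinguishable against $q$ queries; Hybrid $2$ resamples the oracle value at $x_0 \oplus x_1$ to an independent uniform $b$, yielding $f'$, which is identically distributed to Hybrid $1$. The reduction's view coincides with Hybrid $2$ except that $A_0$'s queries before time $4$ are answered by $f$ rather than $f'$, and symmetrically $A_1$'s queries before time $3$. I expect the main obstacle to be bounding the statistical distance introduced by these ``early'' queries via a one-way-to-hiding argument: the bound is $O(\sqrt{q \cdot \delta})$, where $\delta$ upper bounds the probability that an early query lands on the single differing point $x_0 \oplus x_1$. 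Crucially, $A_0$'s view before time $4$ is information-theoretically independent of $x_1$ (the only private channel from position $3$ reaches position $0$ with light-cone delay $\ge 3$, arriving at time $4$), so $\delta \le q/2^\lambda$; by symmetry the same holds for $A_1$ before time $3$. Hence the reduction's success probability differs from that of $S$ by at most $2^{-\Omega(\lambda)}$, and the assumed soundness $s$ of $\prpv_{\Zs^k}$ against $\mathcal R_F$ transfers to soundness $s$ of $\prpvrom_{\Zs^k}$ against $\mathcal R_P$.
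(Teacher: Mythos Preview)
Your overall architecture—hybrids plus a reduction to a challenge-forwarding adversary in $\mathcal R_F$—matches the paper, and your one-way-to-hiding reasoning is sound. However, there is a genuine gap in the reduction itself. You have $B_1$ simulate $A_1$'s first action $U_2$ at time~$1$. That action produces a (possibly quantum) register $M'$ which, in the real attack, $A_1$ sends to $A_0$ and which $A_0$ later needs in order to run its final POVM $U_4$ on $(R', M')$. But by the definition of $\mathcal R_F$, the only thing $B_1$ may place in its outgoing register at time~$1$ is the classical challenge $b$; it cannot also transmit $M'$. Consequently $B_0$ at time~$4$ is missing $M'$ and cannot simulate $U_4$. (There is also a small inconsistency: in your construction you say $B_1$ answers $U_2$'s oracle queries with the already-programmed $f'$, while in your analysis you say $A_1$'s queries before time~$3$ use $f$.)

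The paper closes this gap with one additional observation: once $x_1$ is sampled by the reduction rather than by $V_1$, the action $U_2$ depends only on $x_1$, the register $S$, and the still-unprogrammed oracle—none of which depends on $\pk$ or $b$. Since $U_1$ (acting on $R$) and $U_2$ (acting on $S$) touch disjoint registers, $U_2$ can be commuted all the way back to the setup phase. Concretely, in the paper's reduction $B_0$ runs $U_2 U_0$ at $t = -\infty$ with the simulated oracle $H$, keeps $R$ and $M'$ for itself, and ships $S'$ (together with the classical description of $H$ and $x := x_0 \oplus x_1$) to $B_1$. Then $B_1$ is genuinely challenge-forwarding at time~$1$, and at time~$4$ the register $M'$ is already sitting with $B_0$, so $U_4$ can be run with the reprogrammed oracle $H_{x,b}$. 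Your O2H analysis is essentially the paper's Hybrid~1/Hybrid~2 step; what is missing is precisely this reordering of $U_2$ into the setup so that $M'$ ends up on the correct side of the cut.
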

\begin{proof}
The completeness of the protocol extends since the only change is how $b$ is sampled.

Since the timing constraints remain the same, similar to \Cref{claim:adversarial-behavior}, any arbitrary strategy can be compiled into a strategy where there are only two adversaries $A_0, A_1$ being at locations $0$ and $3$ respectively. 
In particular, besides the set up $U_0$, $A_0$'s behavior can be characterized as an action $U_1$ at time $0$ when it receives $\pk, x_0$ from $V_0$ and a POVM measurement $U_4$ at $t = 4$; similarly, $A_1$'s behavior can be characterized as an action $U_2$ at time $1$ when it receives $\pk, x_1$ from $V_1$ and a POVM measurement $U_3$ at $t = 3$.

The proof is through a sequence of hybrid arguments.
We give the hybrids below, and show the success probability in each hybrid is negligibly close to the previous one.

\begin{description}
\item [Hybrid 0] Execute the original protocol with $V_0, V_1, A_0, A_1$ and a random oracle $H$.

\item [Hybrid 1] It is the same as Hybrid 0 except that at time $t = 2$, 
a random challenge $b \gets \{0,1\}$ is chosen randomly. Then the random oracle $H$ is immediately reprogrammed such that $H(x_0 \oplus x_1) = b$.

The indistinguishability comes from a variant of the one-way to hiding (O2H) lemma \cite[Lemma 3]{unruh2014qrom}.
Let $A_0, A_1$ be $T(\lambda)$-time bounded (with possible shared-entanglement), where $T = \poly(\lambda)$ since the adversaries are efficient.
With this O2H lemma, we can argue that Hybrid 1 and Hybrid 0 are $O(T(\lambda) \cdot 2^{-\lambda/2}) = \negl(\lambda)$ close. 

\item [Hybrid 2] We can without loss of generality assume $A_0, A_1$ get access to two but identical random oracles instead of one random oracle. The hybrid is the same as Hybrid 1 except that the random oracle $H$ accessed by $A_1$ is reprogrammed immediately before time $t = 3$, and the same random oracle $H$ accessed by $A_0$ is reprogrammed immediately before time $t = 4$.

After time $t > 2$, $A_0$ only does computation at time $t = 4$ and $A_1$ only does computation at time $t = 3$. Thus, the output distributions of Hybrid 1 and Hybrid 2 are identical.
\end{description}

Assume that the adversaries' success probability in breaking $\prpvrom_{\Zs^k}$ is $p = p(\lambda)$, then by the hybrid argument, they will also succeed in breaking Hybrid 2 with probability at least $p - \negl(\lambda)$.
We now consider compiling any adversary strategy in Hybrid 2 can be converted into a (challenge-forwarding) adversary strategy in $\mathcal{R}_F$ for $\prpv_{\Zs}^k$ with the same success probability.
\begin{itemize}
    \item Recall that $B_0$ is at $0$ and $B_1$ is at $3$.
    $B_0$ samples random $x_1$ and executes $U_2 U_0$ and obtain outputs in registers $R, M', S'$.
    $B_0$ also samples a random $x$ (and let $x_0 := x \oplus x_1$), and a random oracle $H$.
    Note that although the description of $H$ is inefficient, $H$ can be perfectly simulated with a $2 T$-wise independent function~\cite[Theorem 3.1]{C:Zhandry12}, whose description is efficient.
    
    $B_0$ possesses registers $R, M'$ and classical information $H, x$ and sends to $B_1$ register $S'$ and classical information $H, x$.
    \item At time $t = 0$, when $B_0$ gets $\pk$ from $V_0$, it runs $U_1$ on input $\pk, x_0$ and register $R$ with oracle access to $H$, and sends the resulting $y$ to $V_0$ and $M$ register to $B_1$.
    \item At time $t = 1$, $B_1$ simply runs $F$, i.e. forwards $b$ to $B_0$.
    \item At time $t = 3$, $B_1$ receives $M$ and perform the POVM on $M, S'$ with oracle access to $H_{x, b}$, where $H_{x, b}$ denotes the reprogrammed function $H$ that on input $x$ outputs $b$, and runs $H$ on input anything else.
    \item At time $t = 4$, $B_0$ receives $b$ and perform the POVM on $R', M'$ with oracle access to the reprogrammed $H_{x, b}$. 
\end{itemize}
We can see that this perfectly simulates the output distribution of Hybrid 2, and thus $p - \negl(\lambda) \le s + \negl(\lambda)$ as desired.
\end{proof}

Combining this with \Cref{cor:cvpv}, we get the following.
\begin{corollary}
  \label{cor:cvpv-qrom}
  Assuming polynomial quantum hardness of LWE, there is a classically-verifiable position verification scheme having position-robust completeness $1 - \negl$ with possible prover locations $[1, 2)$ and negligible soundness against $\mathcal{R}_P$ in the QROM.
\end{corollary}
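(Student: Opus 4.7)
The plan is to obtain this corollary as an immediate black-box combination of \Cref{cor:cvpv} with the QROM lifting theorem that immediately precedes the statement. Neither ingredient needs to be reopened, so the proof is essentially a one-line invocation; the only work is to verify that the quantifiers and parameters line up.

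First, I would invoke \Cref{cor:cvpv}, which is itself obtained by instantiating \Cref{thm:prpv-negl} with the strong \oneoftwo puzzle $\Zs$ of \Cref{lem:strong1o2puzzle}. This supplies, under polynomial quantum hardness of LWE, a polynomial $k = k(\lambda)$ (chosen so that $2^{-k}$ is negligible) such that the parallel-repeated protocol $\prpv_\Zs^k$ enjoys position-robust completeness $1 - \negl$ with possible prover locations $[1,2)$ and soundness $\negl$ against the class $\mathcal R_F$ of challenge-forwarding strategies (which, crucially for the next step, allows pre-shared entanglement).

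Next, I would feed this exact instantiation into the QROM lifting theorem stated just above the corollary, with $c = 1 - \negl$ and $s = \negl$. That theorem is already proven in the excerpt via an O2H-based hybrid argument that strips the random oracle and compiles an arbitrary $\mathcal R_P$ adversary against $\prpvrom_{\Zs^k}$ into an $\mathcal R_F$ adversary against $\prpv_\Zs^k$ with the same winning probability up to a $\negl(\lambda)$ loss. Plugging in our $c$ and $s$ therefore yields negligible soundness for $\prpvrom_{\Zs^k}$ against all polynomially bounded adversaries $\mathcal R_P$ in the QROM, which in particular allows any a priori polynomial amount of pre-shared entanglement.

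Finally, I would observe that $\prpvrom$ is defined with exactly the same timing constraints as $\prpv$: the only change is that the single challenge $b$ is replaced by $H(x_0 \oplus x_1)$ for random $x_0, x_1$ broadcast by $V_0$ and $V_1$ respectively. Hence the proof of \Cref{thm:prpv-completeness} goes through verbatim, and an honest prover located anywhere in $[1,2)$ succeeds with probability $1 - \negl$. There is no real obstacle here; the only thing one might double-check is that the parameter $k$ selected in \Cref{cor:cvpv} is a fixed polynomial in $\lambda$, so that the QROM theorem (which is stated for arbitrary fixed $k > 0$) applies uniformly and the resulting scheme remains efficient.
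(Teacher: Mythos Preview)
Your proposal is correct and matches the paper's approach exactly: the paper simply states ``Combining this with \Cref{cor:cvpv}, we get the following'' and gives no further argument. Your elaboration of how the parameters line up (completeness $1-\negl$ and $\mathcal R_F$-soundness from \Cref{cor:cvpv} fed into the preceding QROM lifting theorem) is precisely the intended one-line deduction.
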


\section{Necessity of Proofs of Quantumness}

In this section, we argue that the proof of quantumness is necessary to construct classically-verifiable position verification protocol even in one dimension.
We first recall the definition of proofs of quantumness.
The motivation is to test whether an untrusted efficient\footnote{Indeed, an unbounded classical device can always simulate the quantum strategy, and no tests can tell the difference.} device truly has quantum capabilities.

\begin{definition}
  A proof of quantumness is an interactive protocol with an efficient classical verifier satisfying: \begin{itemize}
      \item \textbf{Completeness $c$}: There exists a polynomial-time quantum prover that can convince the verifier with probability at least $c$;
      \item \textbf{Soundness $s$}: Any polynomial-time classical prover convinces the verifier with probability at most $s + \negl$ for some negligible function $\negl$.
  \end{itemize}
\end{definition}

\begin{theorem}
  \label{thm:necessity-poq}
  Assuming the existence of any 1D position verification protocol satifying: \begin{enumerate}
    \item Without loss of generality, there are two verifiers $V_0, V_1$ at location 0 and 1 respectively;
    \item It has completeness $c$ for an efficient prover at location $p_P \in (0, 1)$;
    \item It has soundness $s$ against two efficient classical adversaries, one located in $[0, p_P)$ and the other located in $(p_P, 1]$;
    \item Verifiers are classical and efficient.
  \end{enumerate}
  There exists a proof of quantumness protocol with completeness $c$ and soundness $s$.
\end{theorem}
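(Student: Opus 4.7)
My plan is to construct the proof of quantumness by having a single efficient classical verifier internally simulate both $V_0$ and $V_1$ and carry out the interaction with a single prover exactly as it would unfold with the PV prover located at $p_P$. Completeness is then immediate: the honest PoQ prover plays the role of the efficient quantum PV prover at $p_P$, receives all simulated $V_0$- and $V_1$-messages in the order they would be sent, produces the honest PV responses, and is accepted with probability at least $c$ by the position-robust completeness hypothesis.

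For soundness I would argue by contradiction. Suppose an efficient classical prover $P^*$ convinces the PoQ verifier with probability noticeably greater than $s$. I will then construct two efficient classical PV adversaries $A_0, A_1$ placed at locations $0$ and $1$ respectively (which lie in $[0, p_P)$ and $(p_P, 1]$) that jointly convince $V_0, V_1$ with the same probability. At setup, $A_0$ and $A_1$ share a common random tape $r$ that plays the role of $P^*$'s randomness. During the protocol, each adversary immediately forwards every verifier-message it receives to the other. Each adversary locally simulates $P^*$ on the partial transcript assembled from the messages currently available to it, and whenever the simulation would produce an output destined for $V_0$ (resp.\ $V_1$), $A_0$ (resp.\ $A_1$) delivers it to its verifier at exactly the time the honest prover's corresponding broadcast would have arrived. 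Because the simulation and verification collapse to the PoQ verifier's computation, the joint strategy convinces $V_0, V_1$ with the same probability as $P^*$ convinces the PoQ verifier, violating the PV soundness hypothesis.

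The main obstacle will be verifying that the timing constraints can always be met, that is, that each adversary knows enough of the transcript in time to compute the next simulated message. I plan to establish this via the standard 1D information-propagation observation: a message the honest prover at $p_P$ produces at time $t$ can depend only on $V_0$-messages sent by time $t - p_P$ and on $V_1$-messages sent by time $t - (1 - p_P)$. For $A_0$ to broadcast this message from location $0$ so that it reaches $V_0$ at the same time as the honest broadcast would, the deadline at $A_0$ is $t + p_P$. By that time, $A_0$ has directly observed all $V_0$-messages sent by time $t + p_P \ge t - p_P$, and, via $A_1$'s forwarding at speed $1$, has received all $V_1$-messages sent by time $(t + p_P) - 1 = t - (1 - p_P)$; this is exactly the information the honest prover would have used. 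The symmetric calculation handles $A_1$, whose deadline is $t + (1 - p_P)$. Since $A_0, A_1$ are efficient classical adversaries that break PV with probability greater than $s + \negl$, the contradiction with the assumed PV soundness yields the claimed soundness $s$ for the constructed PoQ protocol.
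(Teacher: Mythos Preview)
Your proposal is correct and follows essentially the same approach as the paper: the PoQ verifier simulates both PV verifiers with the messages ordered by the honest prover's timing, completeness is inherited, and soundness is argued by turning a classical PoQ cheater into two classical PV adversaries sharing a random tape and each locally simulating the prover, with the same information-propagation timing check. The only cosmetic differences are that you fix the adversaries at locations $0$ and $1$ (the paper allows general $p_0\in[0,p_P)$, $p_1\in(p_P,1]$) and you phrase cross-party information flow as explicit forwarding rather than reception of broadcasts; with your choice of positions these are equivalent and the timing calculation matches the paper's.
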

\begin{proof}
  The construction for the proof of quantumness is very simple: it simply runs both $V_0, V_1$, sending and receiving the messages in the order enforced by the timing constraints for the prover in the position verification protocol.
  We emphasize that the resulting proof of quantumness protocol is a standard interactive protocol without timing constraints, since we are only using the timing constraints from position verification protocol to define the order of the messages.
  Therefore, completeness follows immediately; the verifier is efficient and classical as the position verification protocol is efficient and classically verifiable.
  
  For soundness, the idea is essentially extending the impossibility of classical position verification \cite{CGMO09PBC}.
  Assume there is a classical prover that wins this protocol with probability $w$.
  We construct an adversarial strategy with two classical adversaries, $A_0$ at location $p_0 \in L \cap [0, p_P)$, and $A_1$ at location $p_1 \in L \cap (p_P, 1]$. They sample and pre-share their random tape used to run the classical prover. In the security game, they receive the broadcast message from the verifiers, and $A_b$ will send the message computed to $V_b$ for $b = 0, 1$.
  
  To see that this adversary simulates the classical prover perfectly, it suffices to show that the adversaries can obtain all the information they need \emph{in time} to simulate the prover's responses.
  Consider any message computed by the prover at location $p_P$ at time $t = \tau$.
  It can only depend on messages sent from $V_0$ before $t \le \tau - p_P$, and from $V_1$ before $t \le \tau - (1 - p_P)$; this response will be received by $V_0$ at $t = \tau + p_P$ and by $V_1$ at $t = \tau + (1 - p_P)$.
  Therefore $A_0$ needs to simulate the response before $t = \tau + p_P - p_0$.
  Since $\tau + p_P - p_0 > \tau$, he has all the information from $V_0$.
  Since the last message from $V_1$ (for honest computation happens at time $t = \tau$) will reach $A_0$ at $t = \tau + p_P - p_0$, he also has all the information from $V_1$.
  By symmetry, $A_1$ also has enough time to gather all the information.
  Finally, because they pre-share the identical randomness tape, their response will be consistent as a single classical prover.
  Therefore, this adversarial strategy has success probability $w$, and the only setup they need is some pre-shared classical randomness.
  By soundness of the protocol, $w \le s + \negl(\lambda)$.
\end{proof}

We remark that these requirements except classical verifiability are rather minimal, and all known position verification protocols, including ours, satisfy these properties.
Furthermore, the position verification protocol could be arbitrarily many rounds.

\fi

\ifitcs\bibliography{bib}\else\printbibliography\fi

\ifexabs\else
\appendix

\section{NTCFs and \oneoftwo Puzzles}
\label{sec:NTCF_prelim}

The following definition of NTCF families is taken verbatim from \cite[Definition 6]{brakerski2018cryptographic}.
For a more detailed exposition of the definition, we refer the readers to the prior work.

\begin{definition}[NTCF family]\label{def:trapdoorclawfree}
Let $\lambda$ be a security parameter. Let $\sX$ and $\sY$ be finite sets.
 Let $\mathcal{K}_{\mathcal{F}}$ be a finite set of keys. A family of functions 
$$\mathcal{F} = \big\{f_{k,b} : \sX\rightarrow \mathcal{D}_{\sY} \big\}_{k\in \mathcal{K}_{\mathcal{F}},b\in\{0,1\}}$$
is called a \emph{noisy trapdoor claw free (NTCF) family} if the following conditions hold:

\begin{enumerate}
\item{\textbf{Efficient Function Generation.}} There exists an efficient probabilistic algorithm $\textrm{GEN}_{\mathcal{F}}$ which generates a key $k\in \mathcal{K}_{\mathcal{F}}$ together with a trapdoor $t_k$: 
$$(k,t_k) \leftarrow \textrm{GEN}_{\mathcal{F}}(1^\lambda).$$
\item{\textbf{Trapdoor Injective Pair.}} For all keys $k\in \mathcal{K}_{\mathcal{F}}$ the following conditions hold. 
\begin{enumerate}
\item \textit{Trapdoor}: There exists an efficient deterministic algorithm $\textrm{INV}_{\mathcal{F}}$ such that for all $b\in \{0,1\}$,  $x\in \sX$ and $y\in \supp(f_{k,b}(x))$, $\textrm{INV}_{\mathcal{F}}(t_k,b,y) = x$. Note that this implies that for all $b\in\{0,1\}$ and $x\neq x' \in \sX$, $\supp(f_{k,b}(x))\cap \supp(f_{k,b}(x')) = \emptyset$. 
\item \textit{Injective pair}: There exists a perfect matching $\sR_k \subseteq \sX \times \sX$ such that $f_{k,0}(x_0) = f_{k,1}(x_1)$ if and only if $(x_0,x_1)\in \sR_k$. \end{enumerate}

\item{\textbf{Efficient Range Superposition.}}
For all keys $k\in \mathcal{K}_{\mathcal{F}}$ and $b\in \{0,1\}$ there exists a function $f'_{k,b}:\sX\to \mathcal{D}_{\sY}$ such that the following hold.
\begin{enumerate} 
\item For all $(x_0,x_1)\in \mathcal{R}_k$ and $y\in \supp(f'_{k,b}(x_b))$, INV$_{\mathcal{F}}(t_k,b,y) = x_b$ and INV$_{\mathcal{F}}(t_k,b\oplus 1,y) = x_{b\oplus 1}$. 
\item There exists an efficient deterministic procedure CHK$_{\mathcal{F}}$ that, on input $k$, $b\in \{0,1\}$, $x\in \sX$ and $y\in \sY$, returns $1$ if  $y\in \supp(f'_{k,b}(x))$ and $0$ otherwise. Note that CHK$_{\mathcal{F}}$ is not provided the trapdoor $t_k$.
\item For every $k$ and $b\in\{0,1\}$,
$$ \E_{x\leftarrow_U \sX} \big[H^2(f_{k,b}(x),f'_{k,b}(x))\big] \leq \mu(\lambda),$$
 for some negligible function $\mu(\cdot)$. Here $H^2$ is the Hellinger distance. Moreover, there exists an efficient procedure  SAMP$_{\mathcal{F}}$ that on input $k$ and $b\in\{0,1\}$ prepares the state
\begin{equation*}
    \frac{1}{\sqrt{|\sX|}}\sum_{x\in \sX,y\in \sY}\sqrt{(f'_{k,b}(x))(y)}\ket{x}\ket{y}.
\end{equation*}

\end{enumerate}

\item{\textbf{Adaptive Hardcore Bit.}}
For all keys $k\in \mathcal{K}_{\mathcal{F}}$ the following conditions hold, for some integer $w$ that is a polynomially bounded function of $\lambda$. 
\begin{enumerate}
\item For all $b\in \{0,1\}$ and $x\in \sX$, there exists a set $G_{k,b,x}\subseteq \{0,1\}^{w}$ such that $\Pr_{d\leftarrow_U \{0,1\}^w}[d \notin G_{k,b,x}]$ is negligible, and moreover there exists an efficient algorithm that checks for membership in $G_{k,b,x}$ given $k,b,x$ and the trapdoor $t_k$. 
\item There is an efficiently computable injection $\mathcal{J}:\sX\to \{0,1\}^w$, such that $\mathcal{J}$ can be inverted efficiently on its range, and such that the following holds. If
\begin{align*}\label{eq:defsetsH}
H_k &= \big\{(b,x_b,d,d\cdot(\mathcal{J}(x_0)\oplus \mathcal{J}(x_1)))\,|\; b\in \{0,1\}, (x_0,x_1)\in \mathcal{R}_k, d\in G_{k,0,x_0}\cap G_{k,1,x_1}\big\},%
\\
\overline{H}_k &= \{(b,x_b,d,c)\,|\; (b,x,d,c\oplus 1) \in H_k\big\},
\end{align*}
then for any quantum polynomial-time procedure $\mathcal{A}$ there exists a negligible function $\mu(\cdot)$ such that 
\begin{equation*}\label{eq:adaptive-hardcore}
\left|\Pr_{(k,t_k)\leftarrow \textrm{GEN}_{\mathcal{F}}(1^{\lambda})}[\mathcal{A}(k) \in H_k] - \Pr_{(k,t_k)\leftarrow \textrm{GEN}_{\mathcal{F}}(1^{\lambda})}[\mathcal{A}(k) \in\overline{H}_k]\right| \leq \mu(\lambda).
\end{equation*}
\end{enumerate}

\end{enumerate}
\end{definition}

\begin{theorem}[{\cite[Theorem 4.1]{brakerski2021cryptographic}}]
  \label{thm:lwe-ntcf}
  Assuming the post-quantum hardness of Learning with Errors (LWE) problem, NTCF families exist.
\end{theorem}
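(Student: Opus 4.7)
The plan is to instantiate the NTCF family using the standard LWE-based construction from Brakerski et al. Concretely, I would set the key to be a pair $(A, As+e)$, where $A \in \mathbb Z_q^{m \times n}$ is sampled together with a lattice trapdoor (e.g., via the Micciancio--Peikert gadget), $s \in \mathbb Z_q^n$ is a uniformly random LWE secret, and $e$ is drawn from an appropriately narrow discrete Gaussian. The trapdoor $t_k$ consists of both the lattice trapdoor for $A$ and the secret $s$. For each $b \in \{0,1\}$ and $x \in \sX = \mathbb Z_q^n$, the noisy function $f_{k,b}(x)$ would be the distribution of $Ax + b\cdot(As+e) + e'$ where $e'$ is a discrete Gaussian over $\mathbb Z_q^m$ with a larger parameter than $e$. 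This immediately gives the claw structure: $f_{k,0}(x_0)$ and $f_{k,1}(x_1)$ have overlapping supports precisely when $x_1 = x_0 - s \pmod q$, so $\sR_k = \{(x, x-s) : x \in \sX\}$.

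Next I would verify the structural requirements in order. For trapdoor injectivity, given any $y$ in the support, the lattice trapdoor recovers a short $e^*$ such that $y - b\cdot(As+e)$ is close to $Ax$ for a unique $x$, and $s$ lets one translate between the two branches. For efficient range superposition, I would use the standard Gaussian superposition sampling algorithm (preparing $\sum_x \ket x \ket{g(x)}$ where $g$ is a Gaussian amplitude) to build the state $\sum_x \sqrt{f'_{k,b}(x)(y)}\ket x\ket y$, taking $f'_{k,b}$ to be the truncated/discretized ideal Gaussian centered at $Ax + b\cdot(As+e)$; a careful choice of noise widths keeps the Hellinger distance $H^2(f_{k,b}, f'_{k,b})$ negligible. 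The $\mathrm{CHK}_{\mathcal F}$ procedure checks that $y - b\cdot(As+e) - Ax$ lies in the allowed noise ball, which is public since $A$ and $As+e$ are in the key.

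The main obstacle, and by far the most delicate step, is the adaptive hardcore bit property. The plan is to follow the Brakerski et al.~argument: I would define $\mathcal J$ to be the binary encoding of $x \in \mathbb Z_q^n$ into $w$ bits, and let $G_{k,b,x}$ exclude those $d$ that are degenerate with respect to the encoding of $s$. To show indistinguishability of $H_k$ and $\overline H_k$, I would reduce to LWE: in the LWE hybrid, $As+e$ is replaced by a uniformly random vector $u$, at which point the pair $(x_0, x_1)$ is no longer determined by the key and one can invoke a leftover-hash-lemma style argument to show that $d\cdot(\mathcal J(x_0)\oplus\mathcal J(x_1))$ is statistically close to uniform conditioned on $(b, x_b, d)$, for $d \in G_{k,b,x_b}$. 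The bulk of the work is in making this hybrid go through while the adversary still holds a valid preimage $x_b$ — this is handled by noting that in the uniform hybrid the other preimage $x_{1-b}$ is information-theoretically hidden, and then carefully controlling the statistical error from the Gaussian approximation.

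Finally, I would combine these pieces and invoke the post-quantum hardness of LWE for the parameter regime used (which forces a super-polynomial modulus-to-noise ratio, hence the ``statistical security parameter'' remark at the end of the excerpt) to conclude existence of the NTCF family. Since this theorem is quoted verbatim from \cite{brakerski2021cryptographic}, the proposal is really to cite and reproduce that construction rather than to reprove it from scratch.
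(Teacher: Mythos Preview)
The paper does not prove this theorem at all; it is stated purely as a citation to \cite{brakerski2021cryptographic} and used as a black box. Your sketch is a reasonable high-level outline of the Brakerski et al.\ construction and is broadly accurate, but it goes well beyond what the paper itself does, which is simply to quote the result. So there is nothing to compare against here: the paper's ``proof'' is the citation, and you correctly identify this in your final paragraph.
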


\begin{construction}
  \label{construct:ntcf21o2}
  Let $\mathcal F$ be an NTCF.
  An 1-of-2 puzzle can be constructed as follows.
\begin{itemize}
\item The $\keygen$ algorithm in 1-of-2 puzzle generates a public key $k$ and a secret key (trapdoor) $t_k$ for NTCF $f_k = \{f_{k,b}: \mathcal{X} \to \mathcal{D}_\mathcal{Y} \}_{k \in \mathcal{K}_\mathcal{F}, b \in \{0,1\}}$. Let $\pk = k$ and $\sk = (k, t_k)$.

\item The $\obligate$ algorithm prepares the evaluation of $f'_k$ (instead of $f_k$) on the uniform superposition over all inputs, measures the image register to obtain $y$ and apply $\mathcal{J}$ on the input.
Let $\rho$ be the post-measurement state of the input register, where $\rho \approx \ket\psi\bra\psi$, and
$$\ket\psi = \frac{1}{\sqrt{2}}(\ket{0}\ket{\mathcal{J}(x_0)} +\ket{1}\ket{\mathcal{J}(x_1)})\ket{y}.$$%
This is because $f_{k}$ has the trapdoor injective pair property and efficient range superposition property of $f'$; $\mathcal{J}$ is an injective procedure. 

\item $\solve(\pk, y, \rho, 0)$ corresponds to measuring the two registers in state $\rho$ in computational basis and outputting $\ans = (b, \mathcal{J}(x_b)), b \in \{0,1\}, x_b \in \{x_0, x_1\}$.

$\ver(\sk,y,0,\ans)$ takes in secret key $\sk = (k, t_k)$, an obligation value $y$, the challenge bit $b = 0$ and $\ans$. In this case $\ans = (b, v)$. The verification algorithm first obtains $x' = \mathcal{J}^{-1}(v)$ and outputs $1$ if and only if $y \in \supp(f_{k,b}'(x))$. 

\item $\solve(\pk, y, \rho, 1)$ corresponds to   measuring state in Hadamard basis and outputting the measurement result $\ans = (b,d)$. %

$\ver(\sk,y,1,\ans)$ takes in secret key $\sk = (k, t_k)$, an obligation value $y$, the challenge bit $b = 1$ and $\ans$.
In this case, $\ans = (b, d)$. It outputs $1$ if and only if $d \ne 0$ and $d \cdot (\mathcal{J}(x_0) + \mathcal{J}(x_1)) = b$. 
\end{itemize}
\end{construction}

We remark that \Cref{construct:ntcf21o2} differs from \cite[Algorithm 1]{radian2020semi} where we move the evaluation of $\mathcal J$ from $\solve$ to $\obligate$.
Since evaluating $\mathcal J$ only acts on $\rho$ and is independent of $b$, this is only a conceptual change and all the properties of the \oneoftwo puzzle preserve.

We can deduce the following facts by staring at the construction.

\begin{fact}\label{cor:XZ_measure}
\Cref{construct:ntcf21o2} has an XZ-solver. 
\end{fact}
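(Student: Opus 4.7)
The plan is to establish this by direct inspection of Construction~\ref{construct:ntcf21o2}, since the Fact is essentially a definitional unpacking. First I would recall the definition of an XZ-solver: it requires that $\solve(\pk, y, \rho, b)$ acts on $\rho$ (viewed as a register of qubits) by measuring in the standard basis when $b=0$ and in the Hadamard basis when $b=1$, returning the measurement outcome.

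Next I would walk through the two cases in the construction. For $b=0$, the specification of $\solve(\pk, y, \rho, 0)$ is to measure the two registers comprising $\rho$ in the computational basis and output $(0, \mathcal{J}(x_b))$, where $\mathcal{J}(x_b)$ is precisely the outcome of the standard basis measurement on the second register (the first being the one-qubit flag). For $b=1$, the specification is to measure $\rho$ in the Hadamard basis and output $(1, d)$, where $d$ is exactly the Hadamard measurement outcome. In both cases the core quantum action is literally an $X$- or $Z$-basis measurement of $\rho$, as the XZ-solver definition requires.

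The only point worth commenting on is that $\solve$ prepends the challenge bit $b$ to its output, so formally the returned string is $(b,\cdot)$ rather than the raw measurement outcome. This is purely a classical post-processing step that depends only on the (already known) input $b$; it can be folded into the classical description of the answer format without altering the quantum behavior, so the solver still qualifies as an XZ-solver in the sense of the definition. The only mild obstacle, and really the only thing to check, is this cosmetic discrepancy between the verbatim output of $\solve$ and the ``pure measurement outcome'' phrasing in the XZ-solver definition; once that is addressed the Fact follows by inspection, with no further calculation needed.
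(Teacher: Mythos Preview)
Your approach—direct inspection of the construction—matches the paper, which gives no written proof beyond asserting that the fact follows ``by staring at the construction.'' However, you have slightly misread the output of $\solve$: in Construction~\ref{construct:ntcf21o2}, the symbol $b$ appearing in the output $(b,\mathcal{J}(x_b))$ (for challenge $0$) and $(b,d)$ (for challenge $1$) is \emph{not} the challenge bit but rather the measurement outcome of the first (single-qubit) register of $\rho$; there is a notation collision between the challenge bit and this flag bit. Consequently the output is literally the full measurement result of $\rho$ in the specified basis, and your ``cosmetic discrepancy'' paragraph about prepending the challenge bit is addressing a non-issue—the XZ-solver property holds on the nose with no post-processing to explain away.
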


\begin{fact}\label{cor:zero_pub_ver}
\Cref{construct:ntcf21o2} has $0$-challenge-public-verifiability.
\end{fact}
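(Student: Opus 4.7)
The plan is to unfold the definition of $\ver(\sk, y, 0, \ans)$ from \Cref{construct:ntcf21o2} and observe that every subroutine it invokes is, by the NTCF definition, publicly computable from $\pk = k$ alone, with no use of the trapdoor $t_k$.

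More concretely, I would parse the verification procedure into two steps: (i) recover $x' = \mathcal{J}^{-1}(v)$ from the answer $\ans = (b, v)$, and (ii) output $1$ if and only if $y \in \supp(f'_{k,0}(x'))$. For step (i), item 4(b) of \Cref{def:trapdoorclawfree} guarantees that $\mathcal{J}$ is an efficiently computable injection that can be \emph{efficiently inverted on its range}, and this inversion algorithm takes no secret information as input. For step (ii), item 3(b) of \Cref{def:trapdoorclawfree} provides exactly the efficient deterministic procedure $\mathrm{CHK}_{\mathcal{F}}(k, b, x, y)$ that decides membership in $\supp(f'_{k,b}(x))$, and the definition explicitly states that $\mathrm{CHK}_{\mathcal{F}}$ is \emph{not} given the trapdoor $t_k$. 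Hence the composite algorithm $\mathsf{PubVer}(\pk, y, \ans) := \mathrm{CHK}_{\mathcal{F}}(\pk, 0, \mathcal{J}^{-1}(v), y)$ is efficient and takes only $(\pk, y, \ans)$ as input.

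To finish, I would note that by construction $\mathsf{PubVer}(\pk, y, \ans)$ returns the exact same bit as $\ver(\sk, y, 0, \ans)$ on every input, since $\ver$ in \Cref{construct:ntcf21o2} does nothing beyond computing $\mathcal{J}^{-1}(v)$ and then checking $y \in \supp(f'_{k,0}(x'))$; in particular the trapdoor $t_k$ is never queried in the $b=0$ branch. This establishes correctness with probability $1$ (deterministic agreement), which matches the requirement of \Cref{ver_pk_property}.

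There is no genuine obstacle here; the only thing to be careful about is to confirm that in the $b = 0$ branch of $\ver$ in \Cref{construct:ntcf21o2}, no use is made of $t_k$ even implicitly (for instance, to distinguish which of $x_0, x_1$ was measured). This is immediate because for $b = 0$ the verifier accepts \emph{any} $x'$ whose image under $f'_{k,0}$ contains $y$, without caring whether it corresponds to the $0$- or $1$-preimage branch, so no trapdoor-based disambiguation is needed.
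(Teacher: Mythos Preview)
Your proposal is correct and follows the same approach as the paper, which simply notes that $\mathrm{CHK}_{\mathcal{F}}$ decides membership in $\supp(f'_{k,b}(x))$ using only $\pk = k$; your write-up is more thorough in also explicitly addressing the efficient invertibility of $\mathcal{J}$. One minor slip: in \Cref{construct:ntcf21o2} the $b$ appearing in the support check is the first component of $\ans = (b,v)$ (the measured bit), not the challenge bit $0$, so your $\mathsf{PubVer}$ should read $\mathrm{CHK}_{\mathcal{F}}(\pk, b, \mathcal{J}^{-1}(v), y)$; this does not affect the argument since $\mathrm{CHK}_{\mathcal{F}}$ still requires no trapdoor.
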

\begin{proof}
  By NTCF definition, we can use procedure ${\sf CHK}_{\mathcal{F}}$ to check if $y \in \supp(f_{k,b}'(x))$ using only $\pk = k$.
\end{proof}

\fi

\end{document}